\documentclass[a4paper,UKenglish,cleveref, autoref, thm-restate]{lipics-v2021}

\usepackage{enumerate}
\usepackage{amsmath}
\usepackage{todonotes}
\usepackage{xspace}
\usepackage{pdfpages}
\usepackage{algpseudocode}
\usepackage{algorithm}

\newboolean{shortver}
\setboolean{shortver}{false}% for long version

\newcommand{\EDS}{{\sc Edge Dominating Set}\xspace}
\newcommand{\PfiveFEVS}{\texorpdfstring{\ensuremath{P_5}{\sc-free-EVS}}{P5-free EVS}\xspace}
\newcommand{\PfiveFVS}{\texorpdfstring{\ensuremath{P_5}{\sc-free-VS}}{P5-free-VS}\xspace}
\newcommand{\DSEPVC}{{\sc  Double-Star Edge Partition with Vertex Cost}\xspace}

\newcommand{\Fd}{\mathcal{F}_d}

\newcommand{\FdEVS}{$\mathcal{F}_d$-{\sc EVS}\xspace}

\newcommand{\black}{{\sf black}\xspace}
\newcommand{\white}{{\sf white}\xspace}
\newcommand{\cutir}{\textsc{Colored Unit 2-interval Recognition}\xspace}
\newcommand{\uivs}{\textsc{Unit-Interval-VS}\xspace}
\newcommand{\uisvs}{\textsc{Unit-Interval-SVS}\xspace}

\usepackage{nameref}
\definecolor{ForestGreen}{rgb}{0.1333,0.5451,0.1333}
\definecolor{DarkRed}{rgb}{0.8,0,0}
\definecolor{Red}{rgb}{1,0,0}
\colorlet{mix}{red!50!black}

\usepackage{hyperref}
\hypersetup{
	pdfnewwindow=true,      % links in new window
	colorlinks=true,       % false: boxed links; true: colored links
	linkcolor=mix,          % color of internal links
	citecolor=ForestGreen,        % color of links to bibliography
	urlcolor=blue          % color of external links
}

\usepackage{tabularx,lipsum,environ}

\makeatletter
\newcommand{\problemtitle}[1]{\gdef\@problemtitle{#1}}% Store problem title
\newcommand{\probleminput}[1]{\gdef\@probleminput{#1}}% Store problem input
\newcommand{\problemquestion}[1]{\gdef\@problemquestion{#1}}% Store problem question
\NewEnviron{problem}{
	\problemtitle{}\probleminput{}\problemquestion{}% Default input is empty
	\BODY% Parse input
	\par\addvspace{.5\baselineskip}
	\noindent
	\begin{tabularx}{\textwidth}{@{\hspace{\parindent}} l X c}
		\multicolumn{2}{@{\hspace{\parindent}}l}{\@problemtitle} \\% Title
		\textbf{Input:} & \@probleminput \\% Input
		\textbf{Question:} & \@problemquestion% Question
	\end{tabularx}
	\par\addvspace{.5\baselineskip}
}

\title{Hardness of Vertex Splitting: Cographs, Chordal Graphs, and Beyond}

\titlerunning{Hardness of Vertex Splitting: Cographs, Chordal Graphs, and Beyond}

 \author{Satyabrata Jana}{Indian Institute of Science Education and Research Berhampur, Odisha, India,  \and \url{https://sites.google.com/view/sbjana}}{satyamtma@gmail.com}{https://orcid.org/0000-0002-7046-0091}{}

 \author{Shivesh K. Roy}{The Institute of Mathematical Sciences, HBNI, Chennai, India \and \url{https://sites.google.com/view/shiveshroy}}{shiveshkr@imsc .res.in}{https://orcid.org/0000-0003-0896-3437}{}

 \author{R. B. Sandeep}{Indian Institute of Technology Dharwad, India \and \url{https://sites.google.com/site/homepagesandeeprb}}{sandeeprb@iitdh.ac.in}{https://orcid.org/0000-0003-4383-1819}{}

\authorrunning{S. Jana, S.K. Roy, and R. B. Sandeep}

\ccsdesc [500]{Mathematics of computing~Graph algorithms}
\ccsdesc [500]{Theory of computation~Parameterized complexity and exact algorithms}

\keywords{{\sf NP}-hard, vertex splitting, cographs, chordal graphs, unit-interval graphs, $P_t$-free graphs}

\category{}%optional, e.g. invited paper

\relatedversion{}%optional, e.g. full version hosted on arXiv, HAL, or other respository/website
\supplement{}%optional, e.g. related research data, source code, ... hosted on a repository like zenodo, figshare, GitHub, ...

\funding{} 

\nolinenumbers %uncomment to disable line numbering

\newcommand{\defproblem}[3]{
  \vspace{1mm}
\noindent\fbox{
  \begin{minipage}{0.96\textwidth}
  \begin{tabular*}{\textwidth}{@{\extracolsep{\fill}}lr} #1  &  \\ \end{tabular*}
  {\bf{Input:}} #2  \\
  {\bf{Question:}} #3
  \end{minipage}
  }
  \vspace{1mm}
}

\newcommand{\VC}{{\sc Vertex Cover}\xspace}
\newcommand{\PFVS}{{\sc Cograph-VS}\xspace}
\newcommand{\PFEVS}{{\sc Cograph-EVS}\xspace}
\newcommand{\cost}{\operatorname{cost}}
\newcommand{\SEPVC}{{\sc SEPVC}\xspace}
\newcommand{\girth}{\operatorname{girth}}

\usepackage{tcolorbox}
\tcbuselibrary{skins}
\usepackage{xcolor}
\usepackage{mdframed}

\usepackage{booktabs}

\usepackage{nameref}
\definecolor{ForestGreen}{rgb}{0.1333,0.5451,0.1333}
\definecolor{DarkRed}{rgb}{0.8,0,0}
\definecolor{Red}{rgb}{1,0,0}

\usepackage{thm-restate}

\usepackage{thmtools}
\usepackage{cleveref}

\def\DEBUG{true}
\ifdefined\DEBUG{}
\def\rem#1{{\marginpar{\raggedright\scriptsize #1}}}

\newcommand{\skr}[1]{\rem{\textcolor{teal}{skr: #1}}}

\newcommand{\rbr}[1]{\rem{\textcolor{violet}{$\bullet $ #1}}}

\newcommand{\sjr}[1]{\rem{\textcolor{blue}{$\bullet $ #1}}}

\else

\newcommand{\skr}[1]{}

\newcommand{\rbr}[1]{}

\newcommand{\sjr}[1]{}

\fi

\newcommand{\ETH} {{\sf ETH}\xspace}

\newcommand{\fpt} {{\sf FPT}\xspace}

\newcommand{\npc} {{\sf NP}-complete\xspace}

\usepackage{mathtools}

\newcommand{\chovs}{{\sc Chordal-VS}\xspace}
\newcommand{\chovsall}{{\sc Chordal-VS, Chordal-EVS, Chordal-SVS,} and {\sc Chordal-SEVS}\xspace}

\newcommand{\yes}{{\sf YES}\xspace}

\begin{document}

\maketitle

\begin{abstract}
Vertex splitting is a graph modification operation that replaces a vertex~\(v\)
by two nonadjacent vertices whose neighborhoods together equal the
neighborhood of~\(v\). A split is called \emph{exclusive} if the two
neighborhoods are disjoint, and \emph{shallow} if no vertex created by a split
is split again. For a graph property~\(\Pi\), the
\textsc{\(\Pi\)-Vertex Splitting} problem asks, given a graph~\(G\) and a
nonnegative integer~\(k\), whether \(G\) can be transformed into a graph
satisfying~\(\Pi\) using at most~\(k\) vertex splits. Vertex splitting has
recently attracted renewed attention in algorithmic graph theory and
computational geometry~[Firbas and Sorge, ISAAC~2024; Nöllenburg
\emph{et al.}, JoCG~2025], where it has been studied both as a graph
modification operation and as a modeling tool for geometric graph
representations. Motivated by these developments, we continue the systematic
study of vertex splitting problems. Our main contributions are as follows.

\begin{itemize}
    \item We resolve an open problem posed by Firbas and Sorge~[ISAAC~2024]
    by proving that \textsc{Cograph Vertex Splitting} is
    \textsf{NP}-complete, even on graphs of girth at least~$5$.

    \smallskip

    \item Extending %this
    the above result, we prove that
    \textsc{\(P_t\)-free Vertex Splitting} is \textsf{NP}-complete for every
    fixed integer~$t\geq 4$, where a graph is $P_t$-free if it does not contain
    an induced path on~$t$ vertices.

    \smallskip

    \item We further resolve two open problems posed by Abu{-}Khzam,
    Chakraborty, Isenmann, and Oijid~[IWOCA~2026] by showing that both
    \textsc{Chordal Vertex Splitting} and
    \textsc{Unit-Interval Vertex Splitting} are \textsf{NP}-complete.

    \smallskip

    \item Our hardness results extend to the exclusive and shallow variants
    of vertex splitting. Assuming the Exponential Time Hypothesis, none of
    these problems admits an algorithm running in time
    \(2^{o(k)}\cdot n^{O(1)}\). Moreover, except for
    \textsc{Unit-Interval Vertex Splitting} and its variants, none of these problems admits an
    algorithm running in time \(2^{o(n)}\).

\end{itemize}

\end{abstract}

\section{Introduction}
\label{sec:intro}

A \emph{vertex-splitting} operation replaces a vertex $v$ of a graph by two
non-adjacent vertices whose combined neighborhoods equal the neighborhood of $v$.
This operation has appeared in the literature in several contexts.
The \emph{splitting number} of a graph $G$ is the minimum number of vertex-splitting
operations required to transform $G$ into a planar graph.
The study of splitting numbers dates back to the 1980s
(see, e.g.,~\cite{hartsfield1985splitting,jackson1984splitting,jackson1985splittings}),
where the primary focus was on establishing bounds on this parameter.
Faria, de Figueiredo, and de Mendon{\c{c}}a N~\cite{faria2001splitting}
proved that deciding whether the splitting number of a graph is at most $k$
is {\sf NP}-complete even for cubic graphs.
More recently, N{\"o}llenburg et al.~\cite{nollenburg2025planarizing}
obtained a non-uniform fixed-parameter tractable (\fpt) algorithm
for the problem. In fact, they showed a more general result:
the problem admits a non-uniform \fpt algorithm whenever the target graph class is minor-closed.
Vertex splitting has also found applications in improving
the readability of social networks~\cite{y2008improving},
in graph layout problems~\cite{eades1995vertex},
in the construction of chromatic index critical graphs~\cite{hilton1997vertex},
and in kernelization rules for graph modification problems~\cite{sandeep2015parameterized}.

%\medskip
%\noindent
%\textbf{$\Pi$-Vertex Splitting.}
Let $\Pi$ be a graph class.
In the $\Pi$-\textsc{Vertex Splitting} problem, given a graph $G$ and a non-negative integer $k$, the task is to determine whether $G$ can be transformed
into a graph in $\Pi$ using at most $k$ vertex-splitting operations.
Recent years have witnessed a surge of interest in this problem for various
target classes $\Pi$.
Baumann, Pfretzschner, and Rutter~\cite{baumann2023parameterized}
(see also~\cite{DBLP:journals/tcs/BaumannPR24})
obtained a linear kernel when $\Pi$ is the class of graphs of pathwidth at most~1 (for exclusive vertex-splitting). 
They further showed that the problem is \fpt
parameterized by $k$ plus the treewidth of the input graph whenever
$\Pi$ is expressible in MSO$_2$.

Firbas and Sorge~\cite{DBLP:conf/isaac/FirbasS24}
studied the problem extensively for hereditary graph classes.
They obtained polynomial-time algorithms when $\Pi$ is the class of forests,
split graphs, or threshold graphs.
They proved {\sf NP}-completeness when $\Pi$ is the class of cluster graphs,
bipartite graphs, perfect graphs, and several classes defined by forbidden
induced subgraphs with additional connectivity constraints.
Moreover, they established a complete {\sf P}/{\sf NP}-complete dichotomy for hereditary
graph classes characterized by finitely many forbidden induced subgraphs,
each of size at most three.
They also showed para-{\sf NP}-hardness when $\Pi$ is the class of triangle-free graphs
and obtained a linear kernel for the cluster target class parameterized by $k$.
For a comprehensive account of these results, see ~\cite{firbas2023establishing}.

Gaikwad et al.~\cite{gaikwad2025inclusive} showed that the problem is
\npc\ when $\Pi$ is either the class of disjoint unions of stars or the
class of bipartite graphs, but is polynomial-time solvable when $\Pi$ is
the class of disjoint unions of cycles. See also the PhD thesis of
Kumar~\cite{kumar2026parameterized}.
Abu-Khzam and Thoumi~\cite{abu2025complexity}
proved that the problem is {\sf NP}-complete for the class of claw-free graphs
for the \emph{exclusive} variant of vertex splitting.
This resolved an open question posed by Firbas and Sorge~\cite{DBLP:conf/isaac/FirbasS24}.
They also proved {\sf NP}-hardness when forbidding larger star graphs.
More recently, Abu-Khzam, Chakraborty, Isenmann, and Oijid~\cite{abu2026complexity}
proved {\sf NP}-completeness for interval graphs
and showed that the problem is polynomial-time solvable
when the target class consists of disjoint unions of paths.

Several other variants have also been investigated \cite{abu2019cluster,DBLP:conf/ai4i/Abu-KhzamBFS21,DBLP:conf/cocoon/AbuKhzamDIT25,DBLP:journals/corr/abs-1901-00156,DBLP:conf/iwoca/AbuKhzamIM25,DBLP:conf/gd/AhmedKK22}. For example,
Ahmed, Kobourov, and Kryven~\cite{DBLP:conf/gd/AhmedKK22}
obtained an \fpt algorithm for transforming a bipartite graph into one
admitting a two-layer planar drawing via vertex splitting. {\sc Bicluster Editing} with One-sided Vertex Splitting has polynomial kernel \cite{DBLP:conf/iwoca/AbuKhzamIM25}. 
Abu-Khzam et al.~\cite{abu2019cluster}
studied a variant where both vertex splitting and edge editing are allowed,
showing {\sf NP}-completeness and \fpt parameterized by $k$
when the target class is cluster graphs.

\begin{figure}[t]
\centering
\scalebox{0.80}{
\begin{tikzpicture}[
    every node/.style={circle, draw=black, line width=0.5pt,
        minimum size=5.5mm, inner sep=0pt, font=\scriptsize},
    main/.style={fill=red!65!black!25},
    neigh/.style={fill=blue!15},
    splitA/.style={fill=green!35},
    splitB/.style={fill=orange!55},
    splitC/.style={fill=purple!40},
    edge/.style={line width=0.6pt}
]

%%%%%%%%%%%%%%%%%%%%%%%%%%%%
%% (1) Original
%%%%%%%%%%%%%%%%%%%%%%%%%%%%
\begin{scope}[xshift=0cm]

\node[main] (v) {$v$};

\node[neigh, above left=1.1cm of v] (a) {$a$};
\node[neigh, above right=1.1cm of v] (b) {$b$};
\node[neigh, below left=1.1cm of v] (c) {$c$};
\node[neigh, below right=1.1cm of v] (d) {$d$};

\draw[edge] (v)--(a);
\draw[edge] (v)--(b);
\draw[edge] (v)--(c);
\draw[edge] (v)--(d);

\node[draw=none, rectangle, below=1.2cm of v] {\scriptsize Original};

\end{scope}

%%%%%%%%%%%%%%%%%%%%%%%%%%%%
%% (2) General split
%%%%%%%%%%%%%%%%%%%%%%%%%%%%
\begin{scope}[xshift=4.0cm]

\node[splitA] (v1) at (-0.45,0) {$v_1$};
\node[splitB] (v2) at (0.45,0) {$v_2$};

\node[neigh, above left=1.1cm of v1] (a2) {$a$};
\node[neigh, above right=1.1cm of v2] (b2) {$b$};
\node[neigh, below left=1.1cm of v1] (c2) {$c$};
\node[neigh, below right=1.1cm of v2] (d2) {$d$};

\draw[edge] (v1)--(a2);
\draw[edge] (v1)--(b2);
\draw[edge] (v2)--(b2);
\draw[edge] (v2)--(c2);
\draw[edge] (v2)--(d2);

\node[draw=none, rectangle, below=1.2cm of v1] {\scriptsize Split};

\end{scope}

%%%%%%%%%%%%%%%%%%%%%%%%%%%%
%% (3) Exclusive shallow
%%%%%%%%%%%%%%%%%%%%%%%%%%%%
\begin{scope}[xshift=8.0cm]

\node[splitA] (u1) at (-0.45,0) {$v_1$};
\node[splitB] (u2) at (0.45,0) {$v_2$};

\node[neigh, above left=1.1cm of u1] (a3) {$a$};
\node[neigh, below left=1.1cm of u1] (c3) {$c$};

\node[neigh, above right=1.1cm of u2] (b3) {$b$};
\node[neigh, below right=1.1cm of u2] (d3) {$d$};

\draw[edge] (u1)--(a3);
\draw[edge] (u1)--(c3);
\draw[edge] (u2)--(b3);
\draw[edge] (u2)--(d3);

\node[draw=none, rectangle, below=1.2cm of u1] {\scriptsize Exclusive};

\end{scope}

%%%%%%%%%%%%%%%%%%%%%%%%%%%%
%% (4) Non-shallow split
%%%%%%%%%%%%%%%%%%%%%%%%%%%%
\begin{scope}[xshift=12.0cm]

% v1 split again
\node[splitA] (x1) at (-0.9,0) {$v_{1a}$};
\node[splitC] (x2) at (-0.1,0) {$v_{1b}$};
\node[splitB] (x3) at (0.8,0) {$v_2$};

\node[neigh, above left=.6cm of x1] (a4) {$a$};
\node[neigh, below left=1.1cm of x2] (c4) {$c$};

\node[neigh, above right=1.1cm of x3] (b4) {$b$};
\node[neigh, below right=1.1cm of x3] (d4) {$d$};

\draw[edge] (x1)--(a4);
\draw[edge] (x2)--(c4);

\draw[edge] (x3)--(b4);
\draw[edge] (x3)--(d4);

\node[draw=none, rectangle, below=1.2cm of x2] {\scriptsize Non-shallow};

\end{scope}

\end{tikzpicture}
}
\caption{\small Vertex splitting: original, general split, exclusive split, and non-shallow split.}
\end{figure}\label{fig:split}

\medskip
\noindent

We study vertex splitting and three natural restrictions of it.
A split is \emph{exclusive} if the neighborhoods of the two new vertices are disjoint.
%\srb{we have defined exclusive many times in the introduction}
%\sbj{Fixed. removed previous occurrences. As Here all the variant is defined.}
The problem is \emph{shallow} if no vertex created by a split is split again.
See Figure \ref{fig:split} for an illustration.
Accordingly, we use the suffix
\textsc{VS} for the general problem,
\textsc{EVS} for the exclusive variant,
\textsc{SVS} for the shallow variant, and
\textsc{SEVS} for the shallow exclusive variant. To avoid the repetitive definitions, we only define VS variants of each problem; the other variants, EVS, SVS, and SEVS are defined analogously. For a graph class~$\Pi$, the {\sc $\Pi$-VS} problem is formally defined as follows.

\defproblem{{\sc $\Pi$-Vertex Splitting} ($\Pi$-VS)}{A graph $G$, and a non-negative integer $k$.}
{Can $G$ be transformed into a graph belonging to the class~$\Pi$ using at most $k$ vertex splits?}

Our first main result resolves an open problem posed by Firbas and Sorge~\cite{DBLP:conf/isaac/FirbasS24} by proving that
\textsc{Cograph-Vertex Splitting} is \textsf{NP}-complete, even on graphs of girth at least~$5$.

\begin{restatable}{theorem}{theoremp4}
\label{thm:PFEVS-PFVS-NPcomplete}
For every fixed integer $g\geq 5$, 
\PFVS and \PFEVS are
\npc on subcubic graphs of girth at least $g$.
Furthermore, assuming {\sf ETH},
neither  can be  solved in time 
$2^{o(n+m)}$ nor in time $2^{o(k)}\cdot n^{O(1)}$.
\end{restatable}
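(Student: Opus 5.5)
We reduce from \VC on cubic graphs, or a suitably sparse variant, which is {\sf NP}-complete and, under {\sf ETH}, admits no $2^{o(n+m)}$ algorithm. Given a cubic graph $H$ and an integer $k$, we build $G$ by subdividing every edge of $H$ an appropriate number of times. The number of subdivisions is odd, and large enough that $\girth(G)\ge g$. The aim is that every induced $P_4$ of $G$ lies on a long path. The subcubic and girth conditions then hold by construction, and the reduction is linear in size.

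\textbf{Structural lemma.} A graph of girth at least $5$ is a cograph exactly when every component is a star, since any path on four vertices in such a graph is induced. So after the splits, the resulting graph must be a disjoint union of stars, and the question becomes how cheaply vertex splits can partition the edge set of $G$ into stars. Each vertex $v$ that is split $s_v$ times becomes $s_v+1$ copies, and each copy receives a set of incident edges. Two facts drive the counting. First, a path of length $\ell$ between two branch vertices needs a fixed number of splits on its internal vertices to be broken into stars. Second, this number drops by exactly one when an endpoint of the path ``absorbs'' the first edge. This is an endpoint that keeps a copy acting as a star center for that edge, or that is itself split.

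We calibrate the subdivision length $\ell$ so that the following holds. Let $c(\ell)$ be the base cost of an edge-path. Then $H$ has a vertex cover of size $k$ if and only if $G$ can be made $P_4$-free using $k' = m\cdot c(\ell) + k$ splits, or a similar affine expression. For the forward direction, we split each cover vertex $v$ into three leaves, or make $v$ the center of one star and split off the rest. Each edge-path then has a covered end that provides the saving, and the paths are cut with period $3$ along their length. For the backward direction, we argue from an optimal splitting. Every edge-path whose two ends both fail to pay extra costs one more split. We then normalise the solution via local exchange arguments, so that the set of branch vertices paying the extra cost forms a vertex cover. For the exclusive variant the same construction works, because the optimal solutions we use are exclusive. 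Conversely, any solution can be converted to an exclusive one of no larger cost, since in a star forest each edge is assigned to one copy and duplicates can be dropped.

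The main obstacle is the exact accounting in the backward direction. A general split lets a vertex send an edge to several copies, and splits can be repeated. So we need a lower-bound argument of the following form: the number of splits at a vertex $v$ is at least (number of star components containing $v$) $-1$, and the total number of star components needed on each subdivided path is at least $\lceil \text{edges}/2\rceil$ minus the contributions at the ends. We will handle this with a charging scheme that charges each star component to its center or to one path. Finally, $k'=O(n)$ and $|V(G)|+|E(G)|=O(n)$. Hence a $2^{o(k)}n^{O(1)}$ or $2^{o(n+m)}$ algorithm would solve cubic \VC in $2^{o(n)}$ time, which contradicts {\sf ETH}. Membership in {\sf NP} is immediate.
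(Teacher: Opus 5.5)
There is a genuine gap: the backward direction, which is the whole difficulty, is only promised (a charging scheme plus local exchanges), and the parity you choose makes the reduction fail.

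\textbf{The missing identity and the parity error.} The missing idea is an exact identity, which the paper uses. For a star edge partition $\mathcal P$ of a graph with $n$ vertices and $m$ edges, $\cost(\mathcal P)=\sum_{P\in\mathcal P}|V(P)|-n=m+|\mathcal P|-n$, because a star with $\ell$ edges has $\ell+1$ vertices. Moreover, the minimum number of stars equals the vertex cover number $\tau$: centres form a cover, and conversely each edge can be assigned to an endpoint in the cover. Hence the minimum number of exclusive splits turning a graph without isolated vertices into a star forest is exactly $m-n+\tau$, with no exchange argument.

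Now apply this to your $G$. An odd number of subdivisions turns every edge of $H$ into a path of even length $2j$. Its interior needs $j-1$ cover vertices if \emph{both} ends are chosen, and $j$ otherwise. So $\tau(G)=j|E(H)|+\min_{S\subseteq V(H)}\bigl(|S|-e_H(S)\bigr)$, where $e_H(S)$ counts the edges inside $S$. For cubic $H$ one has $|S|-e_H(S)=\bigl(e_H(S,V(H)\setminus S)-|S|\bigr)/2$, which is minimised at $S=V(H)$. Consequently the optimum is exactly $j|E(H)|$ splits for every cubic $H$: make each branch vertex the centre of its three edges and cut the rest into $2$-edge stars. The instance therefore carries no information about $\tau(H)$.

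You need an even number of subdivisions, i.e.\ paths of odd length $2j+1$. There the interior costs $j$ if some end is chosen and $j+1$ otherwise, so $\tau(G)=\tau(H)+j|E(H)|$. With the identity, no subdivision is needed at all. The paper takes the \VC instance itself, already subcubic of girth at least $g$ by \Cref{prop:vc-high-girth}, with budget $k+m-n$. Your repaired route just re-derives that high-girth hardness via the classical double-subdivision fact.

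\textbf{General versus exclusive splits.} Your structural step (\emph{after the splits, the resulting graph must be a disjoint union of stars}) holds only for exclusive splits, which cannot shorten cycles. A general split can create a $C_4=K_{2,2}$: split the middle vertex of an induced $P_3$ and give both copies both neighbours. Since $C_4$ is a cograph, the output of general splitting need not be a star forest. Both your lower bound \emph{splits at $v\ge$ (number of star components containing $v$)$\,-1$} and your conversion \emph{in a star forest duplicates can be dropped} presuppose what must be shown.

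What is needed is the argument of \Cref{lem:Pt-free-VS-equivalent-EVS-high-girth}. Keep one representative edge for each edge of $G$. This spanning subgraph is obtainable by the same number of exclusive splits, so it has girth at least $5$. An induced $P_4$ in it cannot be induced in the original cograph, so it has a chord there. The projection of that chord, together with the projected path, closes a cycle of length at most $4$ in $G$, a contradiction.
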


Since cographs are precisely the graphs that do not contain~$P_4$ (path graph on 4 vertices) as an induced subgraph, a natural next step is to study the problem of \textsc{$P_t$-free-VS} for every fixed~$t \geq 5$. Towards this, we first show that \PfiveFVS is \npc on graphs of girth at least 6.

\begin{restatable}{theorem}{theoremp5}
\label{thm:P5VS-NPcomplete}
For every fixed integer $g\geq 6$, \PfiveFVS and \PfiveFEVS
are \npc on graphs $G$ of girth at least $g$ satisfying $|E(G)|\geq |V(G)|$. 
Furthermore, assuming {\sf ETH}, neither  can be  solved in time $2^{o(n+m)}$ nor in time $2^{o(k)}\cdot n^{O(1)}$. 
\end{restatable}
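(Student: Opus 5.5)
We reduce from \PFVS/\PFEVS on subcubic graphs of girth at least $g$, as given by Theorem~\ref{thm:PFEVS-PFVS-NPcomplete}, and we use the ETH lower bounds stated there. Given an instance $(G,k)$ with $G$ subcubic of girth at least $g$, we build $G'$ by attaching a pendant gadget to every vertex $v$ of $G$. The gadget is a long path, or better a long cycle hanging from $v$ by a single edge; a cycle of length at least $g$ guarantees $|E(G')|\ge |V(G')|$ and keeps the girth at least $g$. The gadget is designed so that any induced $P_4$ in $G$ through $v$ extends to an induced $P_5$ in $G'$. The idea is that $P_5$-freeness of $G'$ after splits should force, in effect, $P_4$-freeness of the "core".

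A cleaner alternative is to make every vertex of $G$ the endpoint of a forced pendant edge $v v'$ with $v'$ a leaf. Then $G'$ is $G$ with one leaf per vertex plus a disjoint gadget ensuring $|E|\ge|V|$, such as a disjoint $C_g$ with one split budget-neutral structure; I would check that $C_g$ itself needs splits and adjust $k$ accordingly. An induced $P_4$ $a b c d$ in $G$, together with the leaf $a'$, gives an induced $P_5$ because girth at least $5$ rules out chords. Conversely, a $P_5$ inside a leaf-augmented cograph-like structure can be controlled.

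\textbf{Forward direction.} Take a solution of $k$ splits turning $G$ into a cograph $H$. Of the two copies of each split vertex $v$, give the leaf to one of them; we would try to show this can be done so that no $P_5$ is created. This is the delicate step. A component of $H$ is a $P_4$-free graph of girth at least $5$ in the split graph, so it is a star or a single edge (a triangle-free cograph is complete bipartite, and girth forces a star). Adding a leaf to every vertex of a star $K_{1,r}$ creates the induced path $\ell_1 x_1 c\, x_2 \ell_2$, which is a $P_5$. So pure leaves are too strong. I would therefore instead attach the leaf only conceptually, and reconsider.

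\textbf{Revised plan.} Reduce directly from a known hard problem, following the structure of the cograph proof: components after splitting must be $P_5$-free of girth at least $6$. Such components are exactly trees of diameter at most $3$, i.e. double stars. So \PfiveFVS on graphs of girth at least $6$ is equivalent to partitioning edges into double stars with vertex cost. This matches the \DSEPVC macro defined in the paper. The proof then has three parts:
\begin{enumerate}[(i)]
\item Prove this structural lemma. Any connected $P_5$-free graph of girth at least $6$ is a double star, since a cycle of length at least $6$ contains an induced $P_5$.
\item Show the equivalence: an optimal splitting corresponds to an edge partition into double stars, with cost $\sum_v(\#\text{parts at } v - 1)$.
\item Reduce a cubic \VC instance (or the cograph instance) by subdividing edges appropriately, so that optimal partitions correspond to vertex covers.
\end{enumerate}
Because the construction is linear, the $2^{o(n+m)}$ and $2^{o(k)}$ bounds transfer. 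The main obstacle is step (iii), the soundness direction: showing an optimal double-star partition can be normalized so that the centers form a vertex cover of the original graph. I would first try a local exchange argument on each subdivided edge gadget.
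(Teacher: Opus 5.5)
\textbf{Step (iii) is the missing core.} Your revised skeleton is right for the exclusive variant: $P_5$-free graphs of girth at least $6$ are forests whose components are stars or double-stars, so \PfiveFEVS becomes an edge-partition problem with vertex cost. But step (iii), which is the actual hardness proof, is left open, and the invariant you aim for is the wrong one. The missing observation is that every part of a star/double-star partition $\mathcal P$ of $E(G)$ is a tree, so $\cost(\mathcal P)=\sum_{H\in\mathcal P}|V(H)|-n=m+|\mathcal P|-n$. Only the number of parts matters.

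On triangle-free graphs the minimum number of parts equals the edge domination number. In one direction, take the central edge of each double-star, or any edge at the center of each star. In the other, group every edge with a dominating edge it touches; triangle-freeness makes each group a star or double-star. So what you need is hardness of \EDS on graphs of girth at least $g$ with $m\ge n$. The condition $m\ge n$ keeps the target budget $k+m-n$ nonnegative, which is why $|E(G)|\ge|V(G)|$ appears in the statement. ``Centers form a vertex cover'' is the star (cograph) picture, not the double-star one.

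Subdivisions alone will not carry \VC over to this. If every edge is replaced by a path of length $3$, the middle edges always form a minimum edge dominating set of size exactly $m$, whatever the vertex cover number. Paths of length $4$ merely add $m$ to the edge domination number. The paper therefore first applies the Chleb{\'{\i}}k--Chleb{\'{\i}}kov{\'{a}} reduction from \VC to \EDS: an apex adjacent to all original vertices, and each edge replaced by a path of length $4$ with a pendant edge at its middle vertex. Only then does it subdivide with paths of length $4$ to raise the girth while keeping $m-n$ fixed.

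\textbf{Step (ii) only holds for exclusive splits.} Your claim that components after splitting have girth at least $6$ fails for general splits. Splitting $v$ into two copies both adjacent to two neighbors $w,x$ of $v$ creates the $4$-cycle $v_1wv_2x$, which is $P_5$-free. So for \PfiveFVS the final graph need not be a union of stars and double-stars.

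You need a normalization showing that, on girth at least $6$, any $P_5$-free splitting can be replaced by an exclusive one with no more splits. Keep exactly one representative edge of the final graph $H$ for each edge of $G$. The resulting spanning subgraph $H'$ is reachable by the same number of exclusive splits. Suppose $H'$ had an induced $P_5$. It would have a chord in the $P_5$-free graph $H$. Projecting the chord together with the subpath between its endpoints gives a cycle of length at most $5$ in $G$, since the projected path edges are pairwise distinct. This contradicts the girth.
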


Then we prove the following theorem about  {\sc $P_t$-free-VS} for all fixed~$t \ge 6$ to complete the picture about~$P_t$-free graphs.

\begin{restatable}{theorem}{theorempt}
\label{thm:pt}
For every fixed integer $t\geq 6$, {\sc $P_t$-free-VS},
{\sc $P_t$-free-EVS}, {\sc $P_t$-free-SVS}, and
{\sc $P_t$-free-SEVS} are {\sf NP}-complete. Moreover, assuming {\sf ETH},
none of them  can be solved in time  $2^{o(n)}$ nor in time $2^{o(k)}\cdot n^{O(1)}$. 
\end{restatable}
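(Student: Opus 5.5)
We reduce from a problem that is {\sf NP}-complete with a linear-size instance-to-parameter relation under \ETH. A natural choice is \VC on graphs of bounded degree (say subcubic), which has no $2^{o(n)}$ algorithm under \ETH, and where the optimum is $\Theta(n)$. So the reduction must be polynomial, of linear size, and must have $k$ linear in the source size; then both lower bounds $2^{o(n)}$ and $2^{o(k)}\cdot n^{O(1)}$ follow.

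\textbf{Construction.} Given $(H,\ell)$, we build $G$ by replacing each edge $uv$ of $H$ with a long induced path. Concretely, we attach to each vertex $x$ of $H$ a gadget, and we subdivide each edge $uv$ into a path $P_{uv}$ of length about $t$. The path length is tuned so that the following hold:
\begin{itemize}
\item[(i)] every induced $P_t$ of $G$ passes through, or is close to, some original vertex of $H$;
\item[(ii)] for each edge $uv$, the path $P_{uv}$ together with short pendant paths at $u$ and $v$ contains an induced $P_t$ that can only be killed cheaply by splitting $u$ or $v$.
\end{itemize}
Splitting a vertex $x$ exclusively into singletons is allowed, since it is $\deg(x)-1$ splits and every vertex has bounded degree. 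A cleaner way is to make a split at $x$ fully separate $x$'s incident path ends, so that $G$ falls apart into components that are short paths (of length less than $t$) hanging off pieces. We set $k = c\cdot \ell + c'$ for constants $c,c'$ depending on the degree bound.

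\textbf{Correctness.} For the forward direction, splitting each vertex of the cover into its (at most three) incident edge-ends cuts every $P_{uv}$ into pieces. The result is a disjoint union of short paths and stars, hence $P_t$-free, and these are exclusive, shallow splits. For the reverse direction we normalize an arbitrary solution in three steps:
\begin{enumerate}
\item Any split performed at an internal vertex of a path $P_{uv}$ can be replaced by a split at an endpoint, without increasing the count.
\item Splits at copies can be pushed back to the original vertex.
\item Since each $P_{uv}$ is still contained in an induced $P_t$ unless an endpoint is split, the set of split original vertices is a vertex cover of size at most $\ell$.
\end{enumerate}
Because the normalized solution uses only exclusive, shallow splits, all four variants (VS, EVS, SVS, SEVS) are handled simultaneously. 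As a fallback, one could instead reuse the $t=5$ construction behind Theorem~\ref{thm:P5VS-NPcomplete}, lengthening its paths by $t-5$.

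\textbf{Main obstacle.} The hard part is the reverse direction for the general (non-exclusive, non-shallow) variant. A non-exclusive split of an internal path vertex could potentially break several induced $P_t$'s at once, more cheaply than a cover vertex does. Two points need care:
\begin{itemize}
\item The exchange argument must show that splitting an internal degree-2 vertex is never better than splitting an endpoint. Splitting a degree-2 vertex non-exclusively is useless, and splitting it exclusively just cuts the path.
\item We need a counting bound charging each split to at most one cover vertex, which is where the bounded degree is used.
\end{itemize}

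Finally, membership in {\sf NP} is immediate: guess the sequence of at most $k$ splits, which is polynomial since each split increases the vertex count by one, and check $P_t$-freeness in $n^{O(t)}$ time.
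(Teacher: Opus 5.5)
There is a genuine gap. The construction is never pinned down, and the version you sketch fails in both directions. Splitting a cover vertex only detaches the edge paths at that end; it does not ``cut every $P_{uv}$ into pieces.'' A vertex $v\notin C$ of degree at least $2$ keeps all its incident paths, and each of them now ends at a copy of a cover vertex. Its component is therefore a spider with legs of length $L\approx t$, which contains an induced path on $2L+1>t$ vertices through $v$.

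Shortening $L$ does not rescue a sparse gadget with the same attachments at every vertex, if you allow one split per cover vertex. Suppose $v\in C$ has two neighbours $u,w\notin C$, which is unavoidable, e.g.\ for stars. A single split of $v$ either keeps the ends of $P_{vu}$ and $P_{vw}$ on one copy, or separates them. In the first case it glues $u$'s pendant, $P_{uv}$, $P_{vw}$ and $w$'s pendant into an induced path longer than your edge-$P_t$. In the second case each pendant of $v$ stays with one of the two paths, and that edge's $P_t$ survives.

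Your way out, splitting cover vertices into all edge-ends with $k=c\ell+c'$, breaks the reverse direction. Charging splits to original vertices only yields a cover of size at most $k$, not $\ell$. Moreover, the number of splits a cover vertex needs is not a constant; it depends on its degree and on how many of its neighbours lie in the cover. So no affine budget is correct. Your exchange steps are also unproved: one split at an endpoint of degree at least $3$ need not kill every induced $P_t$ that a cut inside $P_{uv}$ kills. The fallback of lengthening the $t=5$ construction is not substantiated either. On high girth it would require hardness of partitioning the edges into the fewest subtrees of diameter at most $t-2$, which you do not provide.

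The missing idea is to make the non-cover vertices pairwise adjacent, so that one split per cover vertex suffices. The paper reduces from Vertex Cover on triangle-free $G$ and places $\overline{G}$ on $O=V(G)$; since triangle-free implies house-free, $\overline{G}$ is $P_5$-free. It attaches to each $v\in O$ pendant paths of lengths $a=\lfloor (t-4)/2\rfloor$ and $b=\lceil (t-4)/2\rceil$. Each edge $uv$ is encoded by two vertices $s_{uv},s_{vu}$ of a clique $S$, with $s_{uv}$ adjacent to all of $O$ except $v$. Then $x^u_a\cdots x^u_1\,u\,s_{uv}\,s_{vu}\,v\,y^v_1\cdots y^v_b$ is an induced $P_t$, and $k=\ell$.

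The reverse direction needs no normalization at all. An induced $P_t$ none of whose vertices is split survives intact, and every vertex on it can be charged to $u$ or $v$. For the forward direction, each $v\in C$ is split once, exclusively, to detach its two pendant paths. The remaining core ($\overline{G}$ together with $S$) is $P_5$-free, because each $s\in S$ has exactly one non-neighbour there. Since $V(G)\setminus C$ is a clique in the core, an induced path can use pendant paths of at most two attachment vertices, and these must be consecutive on it. Hence such a path has at most $\max\{b+4,\,2b+2\}<t$ vertices.
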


Thus, \cref{thm:PFEVS-PFVS-NPcomplete}, \cref{thm:P5VS-NPcomplete}, and \cref{thm:pt} together imply the following general result: \textsc{$P_t$-free-Vertex Splitting} is \textsf{NP}-complete for every fixed~$t \geq 4$.

Our next main results resolve two open questions posed by Abu-Khzam et al.~\cite{abu2026complexity} by showing {\sf NP}-hardness for \textsc{Chordal Vertex Splitting} and its variants by a reduction from \textsc{Chain Vertex Deletion},
and  \textsc{Unit-Interval Vertex Splitting} and its shallow variant by a reduction from \cutir. 

\begin{restatable}{theorem}{theoremchordal}\label{thm:chordal}
\chovsall are \npc. Moreover, assuming {\sf ETH}, none of them can be solved in time $2^{o(n)}$ nor in time $2^{o(k)}\cdot n^{O(1)}$.
\end{restatable}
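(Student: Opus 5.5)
The plan is a polynomial-time reduction from \textsc{Chain Vertex Deletion}. In that problem we are given a bipartite graph $G=(A\cup B,E)$ and an integer $k$, and we ask whether deleting at most $k$ vertices yields a chain graph, i.e.\ a $2K_2$-free bipartite graph. I will use one construction for all four variants. The feature that makes this possible is that the backward direction works for arbitrary splits, while the forward direction only needs shallow exclusive splits.

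\textbf{Construction.} Given $(G,k)$, let $H$ be the graph obtained from $G$ by turning $A$ into a clique and $B$ into a clique, keeping all edges of $G$ between $A$ and $B$. The output instance is $(H,k)$.

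\textbf{Key structural fact.} An induced subgraph of $H$ on $A'\cup B'$, with $A'\subseteq A$ and $B'\subseteq B$, is chordal if and only if $G[A'\cup B']$ is a chain graph.
\begin{itemize}
\item Such a graph is covered by two cliques, so it has no induced cycle of length at least~$5$. The only possible holes are therefore induced $C_4$'s.
\item An induced $C_4$ cannot have three vertices in one clique, since that clique would contain a triangle of the cycle.
\item So an induced $C_4$ has exactly two vertices $a_1,a_2$ on one side and two vertices $b_1,b_2$ on the other. It must be $a_1a_2b_2b_1$ with $a_1b_1,a_2b_2\in E$ and $a_1b_2,a_2b_1\notin E$.
\item This configuration is exactly an induced $2K_2$ in the cross bipartite graph.
\end{itemize}

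\textbf{Forward direction.} Let $S$ with $|S|\le k$ be a set whose deletion makes $G$ a chain graph. For each $v\in S$, perform one exclusive split of $v$ into two copies.
\begin{itemize}
\item The copy $v_1$ receives the neighbours of $v$ on its own side.
\item The copy $v_2$ receives the cross neighbours of $v$.
\end{itemize}
These splits are shallow and exclusive. In the resulting graph, $v_2$ is simplicial, because its neighbourhood lies inside one clique. Repeatedly removing simplicial vertices preserves chordality in both directions. After removing all the $v_2$'s, each $v_1$ is a vertex of a clique with no cross edges. The graph is then the two-clique graph built on the chain graph $(G-S)$ together with these isolated-across vertices. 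Adding such vertices cannot create a $2K_2$ in the cross graph. By the key fact, the result is chordal.

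\textbf{Backward direction.} Suppose at most $k$ splits turn $H$ into a chordal graph $H^*$, under any of the variants. Let $S$ be the set of original vertices that are ever split, directly or through their descendants, so $|S|\le k$. Splitting never changes adjacency between two vertices that are never split. Hence $H-S$ is an induced subgraph of $H^*$, and so it is chordal. By the key fact, $G-S$ is a chain graph. This argument uses nothing about exclusivity or shallowness, so it covers VS, EVS, SVS and SEVS simultaneously.

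\textbf{Complexity bounds.} Membership in {\sf NP} is clear. The reduction preserves the number of vertices and the parameter exactly. Therefore the {\sf ETH} lower bounds $2^{o(n)}$ and $2^{o(k)}\cdot n^{O(1)}$ transfer directly from the corresponding lower bounds for \textsc{Chain Vertex Deletion}. Note that the edge count grows quadratically, which is why only $2^{o(n)}$ is claimed rather than $2^{o(n+m)}$.

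The main obstacle I expect is establishing these {\sf ETH} lower bounds for the source problem. My first attempt would be Yannakakis' reduction for chain (and bipartite-to-hereditary) vertex deletion, applied starting from \VC on graphs of bounded degree. The point to verify is that this reduction produces instances whose number of vertices and parameter are both linear in the size of the \VC instance. If it is not linear as given, I would adapt it, for example by replacing each edge with a constant-size gadget, so that both $n$ and $k$ remain linear.
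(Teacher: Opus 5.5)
\textbf{The forward direction fails as written.} Your structural fact and your backward direction are correct, and they match the paper's argument. The problem is your splitting rule: own-side neighbours go to $v_1$, cross neighbours go to $v_2$. Take a cross edge $uv$ with $u\in S\cap A$ and $v\in S\cap B$. Whatever the order of the splits, it ends up as the edge $u_2v_2$. Then $u_2$ is adjacent to $v_2$ and to its cross neighbours in $B\setminus S$, but $v_2$ has no neighbour in $B$. So $N(u_2)$ is not a clique, and your claim that $v_2$ is simplicial is false.

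This is a real failure, not just a missing justification. Let $G=K_{2,2}$ on $\{u,a\}\cup\{v,b\}$, and take the valid deletion set $S=\{u,v\}$; here $G-S$ is a single edge. Your splits produce the induced $C_4$ $u_2\,b\,a\,v_2$, so the resulting graph is not chordal. The same configuration also arises when $S$ is an inclusion-minimal deletion set, so choosing $S$ minimal does not rescue the argument.

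\textbf{The missing idea is to break the symmetry on cross edges inside $S$.} This is what the paper does.
\begin{itemize}
\item First split every $x\in S\cap A$ exactly as you do.
\item Then, when splitting $y\in S\cap B$, give the already-created cross copies $x_2$ with $xy\in E$ to the own-side copy $y_1$, not to $y_2$.
\item Now $N(x_2)\subseteq (B\setminus S)\cup\{y_1 : y\in S\cap B\}$, which is a clique, and $N(y_2)\subseteq A\setminus S$, also a clique. So all these copies are simplicial.
\item After removing them, each $x_1$ and each $y_1$ is simplicial over the clique of its own side.
\item What remains is $H-S$, which is chordal by your key fact.
\end{itemize}
These splits are still shallow and exclusive, so all four variants are handled together.

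\textbf{On the {\sf ETH} bounds, which you leave open.} The paper uses Yannakakis' reduction from \textsc{3-SAT}, which already produces \textsc{Chain Vertex Deletion} instances on bipartite graphs with linearly many vertices. Since $k\le n$ may be assumed for \textsc{Chain Vertex Deletion}, the $2^{o(k)}\cdot n^{O(1)}$ bound follows directly from the $2^{o(n)}$ bound. You therefore do not need a separate argument that $k$ stays linear.
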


\begin{restatable}{theorem}{theoremui}\label{thm:ui}
{\sc \uivs} and {\sc \uisvs} are \npc. Moreover, assuming {\sf ETH}, neither problem can be solved in time
\(2^{o(\sqrt n)}\), nor in time \(2^{o(k)}\cdot n^{O(1)}\).
\end{restatable}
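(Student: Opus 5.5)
Membership in {\sf NP} is immediate: a certificate is the sequence of at most $k$ splits. Each split can be described in polynomial size, because splitting never increases the total number of edges, so the resulting graph has at most $n+k$ vertices. Unit-interval recognition is linear time. For hardness I reduce from \cutir, which I take to be the problem where each vertex carries a color and we ask for a representation in which every vertex of a designated color class receives two unit intervals and every other vertex receives one. I assume this problem is {\sf NP}-hard and, under {\sf ETH}, admits no $2^{o(\sqrt n)}$ algorithm (or no $2^{o(n)}$ algorithm on the relevant parameter). The guiding observation is the following. Splitting $v$ once, with $v$ becoming $v_1,v_2$, and realizing the result as a unit interval graph amounts to giving $v$ two unit intervals whose combined neighborhoods cover $N(v)$. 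However, splitting forbids $v_1$ and $v_2$ from being adjacent, so their intervals must be disjoint. A unit 2-interval representation must also forbid spurious adjacencies. Hence $k$ splits correspond to representing $G$ as an intersection graph in which $k$ vertices get two disjoint unit intervals (or more, for repeated splits).

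The construction works as follows. Given a colored instance $(H,c)$ whose \black vertices must get two intervals, I keep $H$ and set $k$ to the number of \black vertices. I then need to force that only \black vertices are split, and that each is split exactly once. To forbid splitting a \white vertex $w$, I attach to it gadgets whose destruction is too costly. Concretely, I replace $w$ by a large clique of $k+1$ true twins. Splitting all copies would exceed the budget, and any unsplit twin already needs a single interval realizing all of $N(w)$. An exchange argument, which replaces the split copies by the unsplit twin's interval, then shows that splits of twins are never needed. For \black vertices, I need the budget to force exactly one split each. To achieve this, I attach to every \black vertex a local structure that is not unit interval at all, for instance two pendant claw-like configurations placed so that one interval cannot be adjacent to both, while two intervals can. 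This makes at least one split at each \black vertex mandatory, and the budget equality $k=\#\black$ then forces exactly one. The shallow variant follows from the same reduction because no vertex is split twice in the intended solution. Conversely, any solution uses only one split per \black vertex, so the solution is automatically shallow.

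Correctness runs in both directions. ($\Rightarrow$) From a colored unit 2-interval representation, I split each \black vertex according to its two intervals and extend the representation to the gadgets. ($\Leftarrow$) From a solution of at most $k$ splits, the counting argument shows that exactly the \black vertices are split, once each. The unit interval model of the split graph then gives each \black vertex two intervals. There is one subtlety: the two intervals of $v$ are disjoint in the model, but a \white neighbor's interval might intersect both intervals, which is fine. The intervals also must not create edges absent from $H$, and this holds because the split graph's edges are a subset of those of $H$, each lifted to exactly one copy. The size blowup is linear, or at most by the factor $k+1$. For the {\sf ETH} bounds I need to check that $n$ grows at most quadratically, which gives the stated $2^{o(\sqrt n)}$ bound, and that $k$ is linear in the source size, which gives the $2^{o(k)}\cdot n^{O(1)}$ bound.

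The main obstacle is the gadget that forces each \black vertex to be split while ensuring the gadget itself admits a unit interval realization compatible with any valid 2-interval representation of $H$. In particular, the pendant structures must fit beside the two intervals without forcing extra adjacencies. My first attempt would be to hang two pendant cliques at distinct ends of each \black vertex's intervals. If that fails, I would exploit the precise definition of \cutir; for instance, the definition might already contain pendant vertices that encode this requirement, so that no gadget is needed.
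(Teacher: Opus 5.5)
\textbf{Where the plan breaks: the step forcing two-interval vertices to split.} Most of your architecture matches the paper's proof, with the colours swapped: your \black vertices are the paper's two-interval \white vertices. The shared parts are: set $k$ to the number of two-interval vertices, blow each one-interval vertex up into a clique of $k+1$ true twins, use the tight budget, read off a split $(S,f)$ and apply \Cref{prop:ui}, and get the $2^{o(\sqrt n)}$ bound from the quadratic blow-up. The gap is the step you flag yourself: forcing every two-interval vertex to be split. Your pendant gadget fails in the forward direction. In a unit interval graph, suppose a vertex $v_1$ has a neighbour $q$ whose other neighbours all lie in a private pendant clique. Then the rest of $N(v_1)$ must be a clique; otherwise $q$ and two non-adjacent further neighbours form a claw centred at $v_1$. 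Geometrically, there is no room for a private interval next to an interval that already meets two disjoint intervals, one on each side. An arbitrary colored unit $2$-interval representation gives you no control over this. So ``extend the representation to the gadgets'' is false in general, and adding such gadgets would amount to reducing from a more constrained recognition problem whose hardness you do not have.

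\textbf{How the paper forces the splits.} It uses no gadget. Instead it relies on a structural property of the hard instances produced by Mart\'{\i}nez et al.\ (\Cref{prop:cutir}), obtained by inspecting their construction. There $A_i,B_i$ are non-adjacent and both complete to $x_i^1,x_i^2,x_i^N$, and every literal vertex has a \white neighbour outside its variable gadget. Consequently, every vertex $w$ in the set $W$ of two-interval vertices is the centre of an induced claw of $G$ with two one-interval leaves and one two-interval leaf. Suppose $w$ were unsplit. Then $w$, the unsplit representatives of the two protected leaves, and a copy of the third leaf adjacent to $w$ induce a claw in the final graph, contradicting that unit interval graphs are claw-free. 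Hence all $|W|=k$ two-interval vertices are split, and the budget forces exactly one split each and none elsewhere. Your exchange argument is therefore unnecessary. In fact the claw argument does not even need unsplit leaf representatives: an unsplit $w$ keeps, for each neighbour, some copy adjacent to it, and copies of non-adjacent vertices stay non-adjacent. So the blow-up is not essential for this step. Your fallback remark points in this direction, but the property belongs to the instances of one specific reduction, not to the definition of \cutir, and it must be verified from that construction.

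\textbf{Minor point on NP membership.} The claim that splitting never increases the number of edges is false for non-exclusive splits, since common neighbours acquire duplicated edges. Membership in {\sf NP} instead follows by capping the budget at $k<2m$, as in \Cref{obs:kbound}.
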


\subparagraph{Organization.}
After introducing the necessary preliminaries in \Cref{sec:prelims}, we establish several auxiliary
results in \Cref{sec:equi}. We then prove the hardness of {\sc Cograph-VS}
and its variants in \Cref{sec:cograph}, followed by the hardness of
{\sc \(P_5\)-free-VS} and its variants in \Cref{sec:p5}. Building on these
results, in \Cref{sec:pt} we extend the hardness to {\sc \(P_t\)-free-VS}
and its variants for every fixed \(t \ge 6\).
Next, in \Cref{sec:chordal}, we obtain hardness results for \textsc{Chordal-VS} and its variants. Finally, in \Cref{sec:unit-interval} we obtain hardness for \uivs and its shallow variant.
%Finally, in
%\Cref{sec:bipnew}, we show that {\sc Bipartite-VS} and its variants are
%equivalent to {\sc Odd Cycle Transversal}.
\ifthenelse{\boolean{shortver}}{{A full version of the paper containing all missing proofs  is appended at the end.}
}{}

\section{Preliminaries}
\label{sec:prelims}

In this section, we review some basic preliminaries. For notions not defined here, we refer the reader to the textbooks~\cite{CyganFKLMPPS15, Diestel17}. Throughout the paper, we consider simple unweighted undirected graphs. For a graph~$G$, we use~$V(G)$ and~$E(G)$ to denote the vertex and edge sets of~$G$, respectively. We use the shorthand~$n \coloneqq |V(G)|$ and~$m \coloneqq |E(G)|$. 
%\srb{we never used antiedges, right?}
%The set of anti-edges is denoted by $E(\bar{G})$, i.e., \sk{$E(\bar{G}) = \{uv \mid u \neq v \text{ and } uv \notin E(G)\}$}. 
For a positive integer~$\ell$ we use~$[\ell]$ to denote the set~$\{1, 2, \dots, \ell\}$.
The open neighborhood of a vertex $v \in V(G)$ is defined as $N_G(v) := \{ u \mid \{u,v\} \in E(G) \}$, where we omit the subscript $G$ whenever the underlying graph is clear from the context. For a vertex set $S \subseteq V(G)$, the open neighborhood of $S$, denoted $N_G(S)$, is defined by $N_G(S) := \left( \bigcup_{v \in S} N_G(v) \right) \setminus S$, that is, the set of all vertices adjacent to at least one vertex in $S$, excluding the vertices of $S$ itself. 
%\srb{did we ever use $N(S)$?\sk{I am not quite sure, I think keeping it will not harm at least ;)}}
%\srb{todo: remove unwanted definitions.}
For $S \subseteq V(G)$, the subgraph induced by $S$ is denoted by $G[S]$.
For $F\subseteq E(G)$, the graph $(V(F), F)$ is denoted by $G[F]$, where $V(F)$ is the set of vertices incident with edges in $F$.
For two vertices $x,y \in V(G)$, an $x$-$y$ path in $G$ is a sequence of vertices $(x = v_1, \ldots, v_k = y)$ such that $\{v_i, v_{i+1}\} \in E(G)$ for all $i \in [k-1]$.  The {\em girth} of a graph, denoted by $\girth(G)$, is the length of its shortest cycle.
The $\girth$ of a forest is infinity.
%
%\sbj{keep all the graph class definitions in one para.}\skr{Done}
%
A graph is {\em $P_t$-free} if it has no induced path on $t$ vertices. A graph is a {\em cograph} if it is $P_4$-free. A {\em house} graph is the complement of a \(P_{5}\). For every fixed integer \(d\geq 0\), we use $\mathcal{F}_d$ to denote the class of forests in which each component has diameter at most $d$. %A  {\em star}   is a connected graph in which at most one vertex has degree greater than one. 
A {\em star} graph is a tree with at least two vertices and consists of one central vertex adjacent to all other vertices, which are leaves. A  {\em double-star}   is a graph that is formed by two stars via joining their centers by an edge.

\subparagraph{Vertex Splitting.}
For an integer $i \ge 0$, let $G_i$ be a graph and let $v_i \in V(G_i)$.
Let $V_i^a$ and $V_i^b$ be (possibly empty) subsets of $N_{G_i}(v_i)$ such that~$V_i^a \cup V_i^b = N_{G_i}(v_i)$. The graph $G_{i+1}$ is obtained from $G_i$ by \emph{splitting} $v_i$ according to the pair $(V_i^a, V_i^b)$, that is, by deleting $v_i$, introducing two new non-adjacent vertices $v_i^a$ and $v_i^b$, making $v_i^a$ adjacent to every vertex in $V_i^a$, and making $v_i^b$ adjacent to every vertex in $V_i^b$. All other adjacencies of $G_i$ remain unchanged.
A sequence of $k$ vertex splits on a graph $G$ is defined by setting
$G_0 = G$ and, for each $0 \le i \le k-1$, selecting a vertex
$v_i \in V(G_i)$ together with subsets
$V_i^a, V_i^b \subseteq N_{G_i}(v_i)$ satisfying
$V_i^a \cup V_i^b = N_{G_i}(v_i)$.
The graph $G_k$ denotes the graph obtained after performing
these $k$ vertex splits.

The \emph{descendants} of a vertex $v$ are the vertices obtained from $v$ by splitting $v$ and subsequently splitting any vertices derived from it.
By \emph{splitting a vertex $v$ multiple times}, we mean splitting $v$ once and then recursively splitting its descendants.
When the order of the splits is clear from the context,
we may omit explicit reference to the intermediate graphs $G_i$. 

\begin{observation}
\label{obs:kbound}
Let \(G\) be a graph, let \(k\) be a
nonnegative integer, and let \(\Pi\) be a graph property. Then the following
hold.
\begin{enumerate}
    \item Suppose that every graph whose components are isomorphic to
    \(K_1\) or \(K_2\) satisfies \(\Pi\). If \(k\geq 2m\), then
    \((G,k)\) is a \yes-instance of both
    \textsc{\(\Pi\)-VS} and \textsc{\(\Pi\)-EVS}.

    \item If \(k\geq n\), then \((G,k)\) is a \yes-instance of
    \textsc{\(\Pi\)-SVS} if and only if \((G,n)\) is. The analogous
    statement holds for \textsc{\(\Pi\)-SEVS}.
\end{enumerate}
\end{observation}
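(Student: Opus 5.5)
For the first item, I will exhibit an explicit sequence of at most $2m$ exclusive splits that turns $G$ into a graph whose components are all $K_1$ or $K_2$. The idea is to process the vertices one at a time and repeatedly peel off single incident edges. If a vertex $v$ currently has degree $d\geq 2$, split it exclusively according to $(\{u\}, N(v)\setminus\{u\})$ for some neighbor $u$. The new vertex $v^a$ then has degree $1$, and $v^b$ has degree $d-1$. Recurse on $v^b$ until its degree is at most $1$. This uses $d-1$ exclusive splits at $v$, and afterwards every descendant of $v$ has degree at most $1$. Splitting never changes the total number of edges, and degrees of other vertices are unaffected. So after doing this for every vertex of the original graph, every vertex has degree at most $1$, and the total number of splits is $\sum_v \max(0,\deg(v)-1)\leq \sum_v \deg(v)=2m\leq k$. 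A graph of maximum degree at most $1$ has components $K_1$ or $K_2$, hence satisfies $\Pi$ by assumption. Every split used is exclusive, so this sequence witnesses a \yes-instance of both \textsc{$\Pi$-VS} and \textsc{$\Pi$-EVS}. (If the problem is read as requiring at most $k$ splits, using fewer than $k$ is allowed.)

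For the second item, one direction is trivial: a solution with at most $n$ splits is also a solution with at most $k$ splits, since $k\geq n$. For the converse, observe that in the shallow variant no descendant is split again. Hence each original vertex is split at most once, and any valid shallow sequence contains at most $n$ splits. Therefore a solution for $(G,k)$ uses at most $n$ splits and is also a solution for $(G,n)$. I should double-check that ``split multiple times'' is excluded by shallowness: splitting $v$ and then a descendant of $v$ would split a vertex created by a split, which is forbidden. Hence at most one split per original vertex. The same argument applies verbatim to \textsc{$\Pi$-SEVS}, since exclusiveness only restricts which splits are allowed and not how many.

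I expect no real obstacle. The only point needing care is the bookkeeping in the first item: the count $\deg(v)-1$ must refer to the degree in the original $G$, which holds because splits at other vertices preserve $v$'s degree. It also matters that the bound $2m$ is comfortably loose.
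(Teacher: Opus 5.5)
Your proposal is correct and follows essentially the same argument as the paper. For the first item, your edge-peeling is just one way to split each non-isolated vertex exclusively into $d_G(v)$ copies, each incident with one edge, at total cost $\sum_v \max(0,d_G(v)-1)\le 2m$; for the second item, a shallow sequence splits each original vertex at most once and so uses at most $n$ splits.
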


\begin{proof}
For the first statement, split every nonisolated vertex \(v\) exclusively
into \(d_G(v)\) copies, each incident with exactly one original edge. This
requires \(d_G(v)-1\) splits. Hence the total number of splits is
\[
    \sum_{v:d_G(v)>0}\bigl(d_G(v)-1\bigr)\leq
    \sum_{v\in V(G)}d_G(v)=2m.
\]
The resulting graph is a disjoint union of \(K_1\)'s and \(K_2\)'s and
therefore satisfies \(\Pi\). Since every exclusive split is also a general
split, the conclusion holds for both variants.

For the second statement, every shallow split sequence contains at most one
split for each original vertex and hence at most \(n\) splits. Thus, for
\(k\geq n\), the budgets \(k\) and \(n\) yield the same yes-instances. The
same argument applies to shallow exclusive splitting.
\end{proof}

All target graph classes \(\Pi\) considered in this paper---namely,
\(P_t\)-free graphs for fixed \(t\geq 4\), chordal graphs, and unit interval
graphs---can be recognized in polynomial time, and each contains every graph
whose components are isomorphic to \(K_1\) or \(K_2\). By
\Cref{obs:kbound}, for \textsc{\(\Pi\)-VS} and
\textsc{\(\Pi\)-EVS}, the budget may be restricted to \(k<2|E(G)|\), while
for \textsc{\(\Pi\)-SVS} and \textsc{\(\Pi\)-SEVS}, it may be capped at
\(|V(G)|\). Thus every \yes-instance admits a polynomial-size split sequence
that can be verified in polynomial time. Consequently, all vertex-splitting
problems considered in this paper belong to \(\mathsf{NP}\), and we omit
separate proofs of membership in \(\mathsf{NP}\) when establishing
\(\mathsf{NP}\)-hardness.

In computational complexity theory, the Exponential Time Hypothesis ({\sf ETH})
is a computational hardness assumption stating that 3-SAT, the satisfiability
problem for 3-CNF Boolean formulas with $n$ variables, cannot be solved in time
$2^{o(n)}$~\cite{DBLP:journals/jcss/ImpagliazzoP01}.
For a detailed exposition of lower bounds based \ETH, we refer to the textbook~\cite{CyganFKLMPPS15}.
By $\leq_{\mathtt{poly}}$ we denote polynomial-time many to one reductions.

One of the source problems that we use for our reductions is \VC: given a graph $G$ and a nonnegative integer $k$, the objective
is to decide whether $G$ has a vertex cover of size at most $k$.
We next record a complexity result for \VC on sparse graphs of large
girth. Although Komusiewicz \cite{komusiewicz2018tight} does not explicitly impose the additional
restriction that every connected component of the input graph contains
a cycle, this restriction is without loss of generality: minimum vertex
covers of tree components can be computed in polynomial time, after
which those components can be deleted and the budget adjusted
accordingly.

\begin{proposition}{\rm \cite[Theorem~2.1]{komusiewicz2018tight}}
\label{prop:vc-high-girth}
For every fixed integer \(g \geq 3\), \VC is {\sf NP}-complete on
subcubic graphs of girth at least \(g\) in which every connected
component contains a cycle. Moreover, assuming {\sf ETH}, \VC cannot
be solved on this graph class in time \(2^{o(n+m)}\).
\end{proposition}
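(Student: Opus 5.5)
The plan is to take \cite[Theorem~2.1]{komusiewicz2018tight} as a black box and compose it with a simple polynomial-time cleaning step. That theorem gives, for every fixed \(g\geq 3\), \textsf{NP}-completeness of \VC on subcubic graphs of girth at least \(g\), together with the \ETH lower bound \(2^{o(n+m)}\). Membership in \textsf{NP} is immediate, so only the hardness transfer needs an argument.

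First recall why the source class is hard, in case a self-contained justification is wanted. Start from \VC on subcubic graphs, which is \textsf{NP}-hard and has no \(2^{o(n+m)}\) algorithm under \ETH by the standard linear reduction from 3-SAT. Replace every edge by a path obtained by inserting an even number \(2s\) of subdivision vertices. Each such edge replacement increases the minimum vertex cover size by exactly \(s\). Choosing \(s\) as a constant depending only on \(g\) makes the girth at least \(g\), keeps maximum degree \(3\), and increases \(n+m\) only by a constant factor. Hence the \ETH bound survives.

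Now the cleaning step. Given a subcubic instance \((H,k)\) of girth at least \(g\), proceed as follows.
\begin{enumerate}
\item Compute the connected components of \(H\).
\item For each tree component \(T\), compute a minimum vertex cover size \(\tau(T)\) in linear time by the usual leaf-greedy or dynamic-programming argument.
\item Delete \(T\) and set \(k\coloneqq k-\sum_T \tau(T)\).
\end{enumerate}
Correctness holds because a minimum vertex cover of a disjoint union is the union of minimum covers of the components. Two degenerate outcomes need separate handling. If \(k\) becomes negative, output a fixed no-instance; for example, \(C_{g'}\) with budget \(0\), where \(g'\geq g\). If no component with a cycle remains, output a fixed yes-instance; for example, \(C_{g'}\) with budget \(\lceil g'/2\rceil\). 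Both fixed instances are subcubic, have girth at least \(g\), and every component contains a cycle.

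The reduction runs in polynomial time, and the output has \(n+m\) at most that of the input plus a constant. Therefore a \(2^{o(n+m)}\) algorithm on the restricted class would yield one for Komusiewicz's class, contradicting \ETH. The only point needing care, and the one I expect to be the main obstacle, is the bookkeeping of these degenerate cases. Subcubicity and girth are clearly preserved, since we only delete components.
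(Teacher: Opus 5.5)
Your proposal is correct and follows the same route as the paper: you cite Komusiewicz's theorem as a black box, then delete the tree components after computing their minimum vertex covers in polynomial time, decreasing the budget accordingly. The paper leaves the degenerate outcomes (negative budget, or no component with a cycle remaining) implicit; your fixed replacement instances \(C_{g'}\) with budgets \(0\) and \(\lceil g'/2\rceil\) handle them correctly.
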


\section{Polynomial equivalence for Vertex Splitting on high-girth graphs}
\label{sec:equi}

In this section, we establish polynomial equivalences between several
vertex-splitting variants on graphs of sufficiently large girth. In particular,
we relate \(P_t\)-free targets to forests with bounded-diameter components and show that, in
this high-girth setting, arbitrary splits and exclusive splits are equivalent
for \(P_t\)-free target graphs. We begin with an observation on the effect of exclusive vertex splitting on girth of a graph.

\begin{observation}%\ifthenelse{\boolean{shortver}}{$(\bigstar)$}{}
\label{obs:no-new-cycle}
Exclusive vertex splitting does not decrease the girth.
\end{observation}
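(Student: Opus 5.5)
The plan is to show that one exclusive split sends every cycle of the new graph to a cycle of the old graph of the same length. The observation then follows by induction on the number of splits. Let $G'$ be obtained from $G$ by exclusively splitting $v$ according to $(V^a,V^b)$ with $V^a\cap V^b=\emptyset$, producing $v^a$ and $v^b$. Define the projection $\phi\colon V(G')\to V(G)$ that sends $v^a$ and $v^b$ to $v$ and fixes every other vertex.

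First I check that $\phi$ maps edges to edges. Every edge of $G'$ is either an edge of $G-v$ or of the form $v^a u$ with $u\in V^a$ (or $v^b u$ with $u\in V^b$), and $vu\in E(G)$ in each case. Since $v^a v^b\notin E(G')$, no edge collapses to a loop. Consequently any cycle $C=(x_1,\dots,x_\ell)$ in $G'$ maps to a closed walk $\phi(C)$ of the same length $\ell$ in $G$.

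Next I show that $\phi(C)$ is itself a cycle, so that $\girth(G)\le \ell$. The only vertices that can coincide under $\phi$ are $v^a$ and $v^b$. If $C$ contains at most one of them, then $\phi$ is injective on $C$ and $\phi(C)$ is a cycle of length $\ell$ in $G$.

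The main (though mild) step is the case where $C$ passes through both $v^a$ and $v^b$; this is the only place exclusivity is used. Here $C$ splits into two internally disjoint $v^a$–$v^b$ paths, each using two edges at the split vertices. Let $p,q$ be the $C$-neighbours of $v^a$, which lie in $V^a$, and $r,s$ those of $v^b$, which lie in $V^b$. All four are original vertices, pairwise distinct, and $\{p,q\}\cap\{r,s\}=\emptyset$ by exclusivity. Take the segment of $C$ from $v^a$ through $p$ to $v^b$, entering $v^b$ via $r$ say. Its interior is a $p$–$r$ path $P$ avoiding $v^a$ and $v^b$. This path has at least two vertices, because $p\neq r$ and $v^a v^b$ is not an edge. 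Then $v\,P\,v$ is a cycle in $G$ of length $|P|+1$, where $|P|$ counts vertices, and $|P|+1\le \ell-1 <\ell$. In fact, in either case $G$ contains a cycle of length at most $\ell$.

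Taking $C$ to be a shortest cycle of $G'$ gives $\girth(G)\le\girth(G')$. The forest case, where the girth is infinite, is covered because $G$ acyclic forces $G'$ acyclic by the same argument. Induction over a sequence of exclusive splits then completes the proof.
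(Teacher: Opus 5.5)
Your proof is correct and follows essentially the same route as the paper: project a cycle of the split graph back to $G$, note that it is already a cycle of the same length unless it passes through both copies, and use exclusivity to handle that remaining case. The only difference is in that case: the paper identifies the two copies and appeals to ``every closed walk contains a simple cycle,'' which tacitly needs exclusivity to make the projected walk a closed trail, whereas you extract the shorter cycle $v\,P\,v$ explicitly from one arc, which justifies that step more cleanly.
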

%\srb{This observation is used by both $P_4$-free and $P_5$-free. Hence brought here}

\ifthenelse{\boolean{shortver}}{}{
\begin{proof}
Let $G'$ be obtained from $G$ by splitting a vertex $v$ into two nonadjacent
vertices $v_1$ and $v_2$. Suppose, for contradiction, that $G'$ contains a cycle
$C$ of length $t$ such that $t < \girth(G)$. Since $G$ itself has no cycle of length smaller than $\girth(G)$, the cycle $C$
must use both $v_1$ and $v_2$; otherwise $C$ would already be a cycle in $G$. %Let $P$ be a subpath of $C$ between $v_1$ and $v_2$. 
Because the split is exclusive, $v_1$ and $v_2$ do not have a common neighbor.
By identifying $v_1$ and $v_2$ back into the original vertex $v$,
the cycle $C$ becomes a closed walk in $G$ of length at most $t$ and at least 3.
Every closed walk contains a simple cycle, hence $G$ contains
a cycle of length at most $t$, contradicting $\girth(G)>t$. 
\end{proof}
}

The next lemma relates forests of bounded component diameter to \(P_{d+2}\)-free graphs.

\begin{lemma}
\label{lem:forest-diameter-equivalent-to-Pdplus2-free-high-girth}
Let \(d\geq 0\) be an integer, and let \(H\) be a graph of girth at least
\(d+3\). Then \(H\) is a forest in which every connected component has
diameter at most \(d\) if and only if \(H\) is \(P_{d+2}\)-free.
\end{lemma}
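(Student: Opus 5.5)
We prove the two directions separately. The idea is that in a forest, induced paths are exactly shortest paths, while high girth forbids short cycles, and any cycle that is not short yields a long induced path.

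\emph{Forward direction.} Suppose \(H\) is a forest whose components all have diameter at most \(d\). In a forest, any path between two vertices is the unique path joining them, so it is a shortest path. An induced \(P_{d+2}\) would therefore be a path with \(d+1\) edges between its endpoints. Those endpoints would then lie at distance \(d+1\) in one component, contradicting the diameter bound. Hence \(H\) is \(P_{d+2}\)-free. This direction does not use the girth assumption.

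\emph{Backward direction.} Suppose \(H\) is \(P_{d+2}\)-free and has girth at least \(d+3\).

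\textbf{Step 1: \(H\) is acyclic.} Assume \(H\) has a cycle, and let \(C\) be a shortest one. Its length \(\ell\) satisfies \(\ell \ge d+3\). A shortest cycle is chordless, since a chord would split \(C\) into two strictly shorter cycles. So \(C\) is an induced cycle. Deleting one vertex of an induced cycle \(C_\ell\) leaves an induced path on \(\ell-1 \ge d+2\) vertices. Any subpath of an induced path is induced, so \(H\) contains an induced \(P_{d+2}\), a contradiction. Hence \(H\) is a forest.

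\textbf{Step 2: every component has diameter at most \(d\).} Suppose some component has diameter at least \(d+1\). Pick two vertices at distance at least \(d+1\) and take the path between them in the tree. It has at least \(d+2\) vertices. Any subpath of it is again the unique path between its endpoints in a forest, hence induced. Taking a subpath on exactly \(d+2\) vertices gives an induced \(P_{d+2}\), a contradiction.

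\emph{Main subtlety.} The only delicate point is Step 1: the girth bound must be exactly tight enough that an induced cycle of length \(d+3\) still contains an induced \(P_{d+2}\). Deleting one vertex from such a cycle leaves exactly \(d+2\) vertices, so the bound works.

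The degenerate cases need a quick check. If \(d=0\), the statement says \(P_2\)-free with girth at least \(3\), i.e.\ edgeless, matches forests of isolated vertices, which holds. Girth \(\infty\) for forests is consistent with the convention fixed in the preliminaries.
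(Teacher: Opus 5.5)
Your proof is correct and follows essentially the same route as the paper. For the forward direction you use the uniqueness of paths in a forest, and you make explicit why a path of length $d+1$ forces distance $d+1$, a point the paper leaves implicit. For the backward direction you first take a shortest, hence chordless, cycle of length at least $d+3$ to extract an induced $P_{d+2}$, and then use that every path in a forest is induced.
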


\ifthenelse{\boolean{shortver}}{}{
\begin{proof}
First suppose that \(H\) is a forest in which every connected component has
diameter at most \(d\). We show that \(H\) is \(P_{d+2}\)-free. Suppose, for a
contradiction, that \(H\) contains an induced copy of \(P_{d+2}\). Then some
connected component of \(H\) contains a path on \(d+2\) vertices, that is, a
path of length \(d+1\). Hence that connected component has diameter at least
\(d+1\), contradicting the assumption that every connected component has
diameter at most \(d\). Therefore \(H\) is \(P_{d+2}\)-free.

Conversely, suppose that \(H\) is \(P_{d+2}\)-free. We first prove that \(H\)
is acyclic. Suppose, for a contradiction, that \(H\) contains a cycle. Let
\(C\) be a shortest cycle in \(H\). Since \(H\) has girth at least \(d+3\),
the length of \(C\) is at least \(d+3\). Moreover, \(C\) is chordless;
otherwise a chord of \(C\) would give a shorter cycle. Hence any \(d+2\)
consecutive vertices of \(C\) induce a copy of \(P_{d+2}\), contradicting the
assumption that \(H\) is \(P_{d+2}\)-free. Thus \(H\) is a forest.

It remains to show that every connected component of \(H\) has diameter at
most \(d\). Suppose, for a contradiction, that some connected component of
\(H\) has diameter at least \(d+1\). Then this component contains a path on
\(d+2\) vertices. Since \(H\) is a forest, every path in \(H\) is induced.
Therefore \(H\) contains an induced copy of \(P_{d+2}\), contradicting the
assumption that \(H\) is \(P_{d+2}\)-free. Hence every connected component of
\(H\) has diameter at most \(d\).

Thus \(H\) is a forest in which every connected component has diameter at
most \(d\), completing the proof.
\end{proof}
}

We next show that, on high-girth graphs, arbitrary splits and exclusive splits are equivalent for \(P_t\)-free targets.

\begin{lemma}\label{lem:Pt-free-VS-equivalent-EVS-high-girth}
Let \(t\geq 4\) be a fixed integer, and let \(g\geq t+1\) be an integer.
Let \(G\) be a graph of girth at least \(g\). Then, for every integer
\(k\geq 0\), \((G,k)\) is a \yes-instance of
{\sc \(P_t\)-free-VS} if and only if \((G,k)\) is a \yes-instance of
{\sc \(P_t\)-free-EVS}.
\end{lemma}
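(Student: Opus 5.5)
The backward direction is immediate: every exclusive split is a general split, so a yes-instance of {\sc \(P_t\)-free-EVS} is a yes-instance of {\sc \(P_t\)-free-VS} with the same budget. For the forward direction, I would take an arbitrary sequence of at most \(k\) splits turning \(G\) into a \(P_t\)-free graph \(H\) and convert it into an exclusive sequence of the same length whose result is also \(P_t\)-free.

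The conversion would work at the level of edges. Each vertex \(v\) of \(G\) is replaced in \(H\) by its descendants \(v_1,\dots,v_r\), where \(r-1\) is the number of splits spent on \(v\). Each edge \(uv\) of \(G\) is represented in \(H\) by a nonempty set of edges between descendants of \(u\) and descendants of \(v\). For every edge \(e=uv\) of \(G\), I fix one representative edge \(u_iv_j\) of \(H\), and assign \(e\) to descendant \(v_j\) of \(v\) and to descendant \(u_i\) of \(u\). This partitions \(N_G(v)\) into at most \(r\) classes; merging or adding empty classes gives exactly \(r\). Next I realise this partition by \(r-1\) exclusive splits of \(v\), repeatedly splitting off one class exclusively. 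Doing this at every vertex uses the same number of splits, at most \(k\), and produces a graph \(H'\). By construction \(H'\) is exactly the subgraph of \(H\) consisting of the chosen representative edges, possibly together with isolated vertices.

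It remains to show \(H'\) is \(P_t\)-free. By \cref{obs:no-new-cycle}, \(H'\) has girth at least \(g\geq t+1\). The key step is to show that every path in \(H'\) on \(t\) vertices is also an induced path in \(H\). A path \(P=(x_1,\dots,x_t)\) in \(H'\) is a path in \(H\), since \(H'\subseteq H\). Suppose \(H\) has a chord \(x_ax_b\) of \(P\). Identify descendants back to their original vertices. The chord maps to an edge of \(G\), and the subpath \(x_a\dots x_b\) maps to a walk in \(G\) that uses distinct edges of \(G\), because each edge of \(H'\) represents a distinct edge of \(G\). The chord's edge of \(G\) differs from all of these: its representative in \(H'\) is unique, and \(x_ax_b\notin E(H')\) as \(P\) is an induced path of the forest-like \(H'\) (if it were in \(H'\) it would form a cycle of length at most \(t\) in \(H'\), contradicting the girth). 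So these edges form a closed walk in \(G\) of length at most \(t\) with no edge repeated consecutively, hence containing a cycle of length at most \(t<g\). This contradicts \(\girth(G)\geq g\).

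Once that is done, an induced \(P_t\) in \(H'\), which is just a path on \(t\) vertices given the high girth of \(H'\), would be an induced \(P_t\) in \(H\), a contradiction. I expect the main obstacle to be the closed-walk argument: I must make sure that the walk is nondegenerate, i.e.\ that it contains a genuine cycle and does not merely backtrack along the same edge of \(G\). This is exactly why the edges of \(H'\) must be chosen as unique representatives of distinct \(G\)-edges, and why the chord must be shown to represent an edge of \(G\) not already used on the subpath.
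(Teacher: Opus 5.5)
Your proof follows the same route as the paper's. You fix one representative edge of \(H\) for each edge of \(G\), realise the resulting spanning subgraph \(H'\) by the same number of exclusive splits (directly, where the paper uses an induction via identifications), use \cref{obs:no-new-cycle} for the girth of \(H'\), and turn a chord in \(H\) of a \(P_t\) of \(H'\) into a short cycle of \(G\). However, the step you yourself flag as the main obstacle is not correctly justified.

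Write \(\alpha=\pi(x_a)\), \(\beta=\pi(x_b)\) and \(f=\alpha\beta\). Uniqueness of representatives together with \(x_ax_b\notin E(H')\) only shows that the representative of \(f\) in \(H'\) is some edge other than \(x_ax_b\). It does not show that this representative avoids the subpath \(x_a\cdots x_b\). A priori it could be \(x_ax_{a+1}\), with \(x_{a+1}\) another descendant of \(\beta\). Then the closed walk \(\pi(x_a)\pi(x_{a+1})\cdots\pi(x_b)\pi(x_a)\) has \(f\) as both its first and last edge, so it backtracks cyclically at \(\alpha\). Hence neither ``the chord's edge differs from all of these'' nor ``no edge repeated consecutively'' follows from what you wrote.

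The conclusion is still true, and the repair is short: you do not need distinctness at all. Let \(T\) be the edge set of the trail onto which the subpath projects. It runs from \(\alpha\) to \(\beta\neq\alpha\) and has \(|T|=b-a\geq 2\) edges, so in \(T\) exactly \(\alpha\) and \(\beta\) have odd degree. Every vertex therefore has even degree in \(T\triangle\{f\}\). This set is nonempty because \(|T|\geq 2\), so it contains a cycle of length at most \(|T|+1\leq t<g\). Equivalently, if \(f\) is the image of a subpath edge, cutting the subpath at that edge leaves a closed trail of positive length, hence a cycle of length less than \(t\). This matches the paper's phrasing---the trail together with the edge \(a_ia_j\) contains a cycle---which does not claim the edge is new. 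With this fix your argument is complete.
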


%\ifthenelse{\boolean{shortver}}{}{
\begin{proof}
Since every exclusive split is, in particular, a split, one implication is
immediate. Thus, if \((G,k)\) is a \yes-instance of
\textsc{\(P_t\)-free-EVS}, then it is also a \yes-instance of
\textsc{\(P_t\)-free-VS}.

Conversely, suppose that \((G,k)\) is a \yes-instance of
\textsc{\(P_t\)-free-VS}. Let \(H\) be a \(P_t\)-free graph obtained from
\(G\) by at most \(k\) splits. Consider the natural map from \(V(H)\) to
\(V(G)\), which sends every vertex of \(H\) to the original vertex of \(G\)
from which it descends. For every \(v\in V(G)\), let \(X_v\) be the set of
descendants of \(v\) in \(H\). Then \(\{X_v : v\in V(G)\}\) is a partition of
\(V(H)\), and each \(X_v\) is an independent set. Moreover,
\[
    \sum_{v\in V(G)} (|X_v|-1) \leq k .
\]

For every edge \(uv\in E(G)\), there is at least one edge of \(H\) with one
endpoint in \(X_u\) and the other endpoint in \(X_v\). Choose exactly one
such edge and denote it by \(e_{uv}\). Define a spanning subgraph \(H'\) of
\(H\) by
\[
    V(H')=V(H)
    \quad\text{and}\quad
    E(H')=\{e_{uv}: uv\in E(G)\}.
\]
We first show that \(H'\) can be obtained from \(G\) by at most $k$ exclusive splits.
The partition \(\{X_v : v\in V(G)\}\) has the following properties:
\begin{enumerate}
    \item each \(X_v\) is an independent set;
    \item for every edge \(uv\in E(G)\), there is exactly one edge of \(H'\)
    between \(X_u\) and \(X_v\);
    \item for every non-edge \(uv\notin E(G)\), there is no edge of \(H'\)
    between \(X_u\) and \(X_v\).
\end{enumerate}

We claim that any graph satisfying these properties can be obtained from \(G\)
by exclusive splits.

We prove the claim by induction on
$
    r=\sum_{v\in V(G)} (|X_v|-1).
$
If \(r=0\), then every set \(X_v\) is a singleton, and the graph is precisely
\(G\). Assume that \(r>0\). Choose a vertex \(v\in V(G)\) with
\(|X_v|\geq 2\), and choose two distinct vertices \(x,y\in X_v\). 
Identify \(x\) and \(y\) into a single vertex \(z\), whose neighbourhood is  $N(z)=N_{H'}(x)\cup N_{H'}(y)$.
Let the resulting graph be \(\widehat{H}\). In the corresponding partition,
the set \(X_v\) is replaced by  $(X_v\setminus \{x,y\})\cup \{z\}$,
and all other sets \(X_u\) remain unchanged. The three properties above are
preserved, and the value of \(r\) decreases by one. Hence, by the induction
hypothesis, \(\widehat{H}\) can be obtained from \(G\) by exclusive splits.

Now reverse the identification of \(x\) and \(y\). That is, split \(z\) into
\(x\) and \(y\), assigning to \(x\) precisely the neighbours it has in \(H'\)
and assigning to \(y\) precisely the neighbours it has in \(H'\). This split
is exclusive: no vertex is adjacent to both \(x\) and \(y\), because for every
edge \(uv\in E(G)\) there is exactly one edge between \(X_u\) and \(X_v\) in
\(H'\). Therefore the reverse operation is an exclusive split, and it
reconstructs \(H'\).

Thus \(H'\) can be obtained from \(G\) by exactly
$$
    \sum_{v\in V(G)} (|X_v|-1) \leq k
$$
exclusive splits.

It remains to prove that \(H'\) is \(P_t\)-free. By
\Cref{obs:no-new-cycle}, \(H'\) has girth at least \(g\), since \(H'\) is
obtained from \(G\) by exclusive splits.
Suppose, for a contradiction, that \(H'\) contains an induced copy of
\(P_t\). Let  $P=p_1p_2\cdots p_t$
be such an induced path in \(H'\). Since \(H\) is \(P_t\)-free and \(H'\) is
a spanning subgraph of \(H\), the vertices \(p_1,\ldots,p_t\) cannot induce a
copy of \(P_t\) in \(H\). Hence there exist two nonconsecutive vertices
\(p_i,p_j\) of \(P\), with \(1\leq i<j\leq t\) and \(j-i\geq 2\), such that
\(p_ip_j\in E(H)\).

Let \(a_i\) and \(a_j\) be the original vertices of \(G\) from which
\(p_i\) and \(p_j\) descend, respectively. Since \(p_ip_j\in E(H)\), we have
\(a_ia_j\in E(G)\). Now consider the subpath $p_ip_{i+1}\cdots p_j$
of \(P\). Its image under the natural map is a trail in \(G\) from \(a_i\) to
\(a_j\). Indeed, if two distinct edges of this subpath were mapped to the
same edge of \(G\), then \(H'\) would contain two distinct edges between the
same two bags, contradicting the construction of \(H'\).

Together with the edge \(a_ia_j\), this trail contains a cycle in \(G\) of
length at most  $(j-i)+1 \leq t$.
Since \(g\geq t+1\), this contradicts the assumption that \(G\) has girth at
least \(g\). Therefore \(H'\) is \(P_t\)-free.
Thus \(H'\) is a \(P_t\)-free graph obtained from \(G\) by at most \(k\)
exclusive splits. Hence \((G,k)\) is a \yes-instance of
\textsc{\(P_t\)-free-EVS}. This completes the proof.
\end{proof}
%}

\section{Cograph Vertex Splitting}
\label{sec:cograph}

In this section, we prove \Cref{thm:PFEVS-PFVS-NPcomplete}. This answers a question raised by Firbas and  Sorge \cite{DBLP:conf/isaac/FirbasS24}. 
We first formulate an edge-partition problem that captures the effect
of exclusive vertex splitting when transforming a graph into a cograph.
Throughout this section, we assume that all input graphs are assumed to have no isolated vertices.
%\subsection{Star Edge Partition with Vertex Cost (\SEPVC)}

\begin{definition}[Star edge partition]
  {\em   Let $G$ be a graph without any isolated vertices. A collection $\mathcal P=\{H_1,\ldots,H_r\}$ of subgraphs of $G$ is a
\emph{star edge partition} if
\begin{itemize}
\item each $H_i$ is a star,
\item the $H_i$ are edge-disjoint,
\item $\bigcup_i E(H_i)=E(G)$.
\end{itemize}}
\end{definition}

The \emph{cost} of a star edge partition  $\mathcal{P}$ is denoted by $ \cost(\mathcal{P})$ where 
    \[
        \cost(\mathcal{P}) \coloneqq \sum_{v \in V(G)} \cost(v); ~~~~~\text{and}   ~~~~~\cost(v) \coloneqq \bigl|\{\, i \in [r] \mid v \in V(H_i) \,\}\bigr| - 1.
    \]

To capture this decomposition algorithmically, we formulate a partitioning problem
that measures how many times a vertex must be ``separated'' among different stars.
This will exactly correspond to the number of vertex splits required. Now we define a problem called {\sc Star Edge Partition with Vertex Cost} (in short, \SEPVC).  This is essentially the same as the \textsc{Weighted Star Decomposition} problem defined in \cite{gaikwad2025inclusive}, with a minor difference in the cost function.

\begin{problem}
	\problemtitle{{\sc Star Edge Partition with Vertex Cost} (\SEPVC)}
	\probleminput{A graph $G$ without any isolated vertices, and a nonnegative integer $k$.}
	\problemquestion{Does $G$ admit a star edge partition of cost at most $k$?}
\end{problem}
% \srb{some problems are in box, some are like this. Need to make it uniform?}

% \sbj{My plan is to keep the problem that we studied is in box, all other problems we use another environment. }
% \srb{I agree.}

\begin{comment}
\subsection{{\sc Union of Stars-EVS} equals {\sc SEPVC}}

We next show that exclusive vertex splitting is precisely the operation needed to
separate the incident edges of a vertex among different stars.
Consequently, minimizing the number of exclusive vertex splits required to transform
a graph into a disjoint union of stars is equivalent to minimizing the vertex cost of
a star edge partition.

\end{comment}
\begin{lemma}\label{lem:SEPVC-equivalent-to-PFEVS}
Let $G$ be a graph without any isolated vertices and with $\girth(G)\ge 5$, and let $k\in\mathbb N$.
Then $(G,k)$ is a \yes-instance of {\sc SEPVC} if and only if $(G,k)$ is a \yes-instance of \PFEVS.
\end{lemma}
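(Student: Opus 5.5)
We prove the two implications separately. Throughout we use that $\girth(G)\ge 5$. By \Cref{lem:forest-diameter-equivalent-to-Pdplus2-free-high-girth} with $d=2$, a graph of girth at least $5$ is a cograph if and only if it is a forest whose components have diameter at most $2$. Such components are exactly isolated vertices and stars.

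\textbf{From \SEPVC to \PFEVS.} Let $\mathcal P=\{H_1,\dots,H_r\}$ be a star edge partition of cost at most $k$. For each vertex $v$, let $i_1,\dots,i_{c}$ be the indices of the stars containing $v$; here $c\ge 1$ because $G$ has no isolated vertices. Since the stars are edge-disjoint and cover $E(G)$, the sets $N_{H_{i_j}}(v)$ partition $N_G(v)$.

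We split $v$ exclusively $c-1$ times. First split off the part $N_{H_{i_1}}(v)$, then split the remaining copy repeatedly. Every intermediate split is exclusive, and the total number of splits is $\sum_v(c_v-1)=\cost(\mathcal P)\le k$.

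In the resulting graph, each copy $v^{(i)}$ is adjacent exactly to the copies $u^{(i)}$ with $uv\in E(H_i)$. Hence the graph is the disjoint union of copies of the stars $H_i$. This is a cograph, since stars are $P_4$-free.

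\textbf{From \PFEVS to \SEPVC.} Let $H$ be a cograph obtained from $G$ by at most $k$ exclusive splits. By \Cref{obs:no-new-cycle}, $\girth(H)\ge 5$. By the lemma above, $H$ is a forest whose components are stars or isolated vertices.

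As in the proof of \Cref{lem:Pt-free-VS-equivalent-EVS-high-girth}, let $X_v$ be the set of descendants of $v$ in $H$; these sets satisfy $\sum_v(|X_v|-1)\le k$. Because every split is exclusive, each edge $uv$ of $G$ corresponds to exactly one edge of $H$ between $X_u$ and $X_v$. Exclusivity keeps the neighbourhoods disjoint at every step, so each edge is inherited by exactly one descendant.

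Each star component $S$ of $H$ maps to a subgraph $\phi(S)$ of $G$. We must check that $\phi(S)$ is a star. Its leaves map to distinct vertices: two leaves in the same $X_u$ would give two $H$-edges between $X_u$ and $X_{\text{centre}}$, which is impossible. Leaves also cannot map to the centre's original vertex, since $X_v$ is independent and $G$ has no loops. Thus $\phi(S)$ is a star in $G$.

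These images are edge-disjoint and cover $E(G)$, by the one-to-one correspondence of edges. A vertex $v$ lies in at most $|X_v|$ of the images, since each copy of $v$ lies in at most one component. Isolated copies only lower the count. Therefore the cost is at most $\sum_v(|X_v|-1)\le k$.

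The main obstacle is this correspondence step. We must argue carefully that the inverse images form genuine stars of $G$ and are edge-disjoint. This is exactly where exclusivity is needed: it makes each edge of $G$ have a unique representative in $H$.
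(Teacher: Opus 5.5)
Your proof is correct and takes the same route as the paper. In the forward direction you split each vertex once per star containing it, so the result is the disjoint union of the stars. In the converse you use exclusivity, \Cref{obs:no-new-cycle}, and \Cref{lem:forest-diameter-equivalent-to-Pdplus2-free-high-girth} to see that the cograph is a disjoint union of stars and isolated vertices, pull each star back along the edge bijection, and bound the cost by the number of copies. Your explicit check that the projection is injective on each star component (two leaves in the same bag would give two edges between the same pair of bags) spells out a step the paper only asserts.
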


\begin{proof}
\noindent\textbf{(\(\Rightarrow\)).}
Assume that $(G,k)$ is a \yes-instance of \SEPVC.
Let $\mathcal{P}=\{H_1,\dots,H_r\}$ be a star edge partition of $G$ with
$\cost(\mathcal{P})\le k$.
For each vertex $v\in V(G)$, let
\[
p(v)\coloneqq\bigl|\{\,i\in[r]\mid v\in V(H_i)\,\}\bigr|
\]
denote the number of stars containing $v$.
Since the contribution of $v$ to the cost is $p(v)-1$, we have

\[
\sum_{v\in V(G)}(p(v)-1)\le k.
\]

We construct a graph $G'$ from $G$ by performing exclusive vertex splits.
For every vertex $v\in V(G)$, if $p(v)=1$, then $v$ is left unchanged (since there are no isolated vertices $p(v)\neq 0$).
Otherwise, we perform $p(v)-1$ successive exclusive vertex splits so that the edges incident with $v$ belonging to different stars of $\mathcal P$ become incident with different copies of $v$.
Equivalently, in the resulting graph each copy of $v$ is incident precisely with the edges of one star of $\mathcal P$ containing $v$. Hence the total number of exclusive vertex splits is
$
\sum_{v\in V(G)}(p(v)-1)\le k.
$

Now consider a star $H_i\in\mathcal P$.
By construction, all edges of $H_i$ are incident only with the copies of vertices created for $H_i$, and no such copy is incident with an edge belonging to another star.
Therefore the vertices corresponding to $H_i$ induce a connected component of $G'$ isomorphic to $H_i$.
Since this holds for every $H_i\in\mathcal P$, every connected component of $G'$ is a tree with diameter at most 2.
%Then by \Cref{lem:forest-diameter-equivalent-to-Pdplus2-free-high-girth}, 
Therefore, $G'$ is $P_4$-free. 
%Thus $G'$ is a disjoint union of stars obtained using at most $k$ exclusive vertex splits.
Hence $(G,k)$ is a \yes-instance of \PFEVS.

\medskip
\noindent\textbf{(\(\Leftarrow\)).}
Conversely, suppose that $G$ can be transformed into a cograph $G'$ using at most $k$ exclusive vertex splits. Each edge of $G$ gives rise to a unique edge of $G'$; exclusive vertex splits merely redistribute the incidences of an edge among the copies of its endpoints.
Hence there is a natural bijection
$
\phi \colon E(G)\to E(G').
$

By \Cref{obs:no-new-cycle}, $\girth(G')\geq \girth(G)\geq 5$.
Then by \Cref{lem:forest-diameter-equivalent-to-Pdplus2-free-high-girth}, each connected component of $G'$ is a tree with diameter at most $2$.
Let $\mathcal C$ be the set of connected components of $G'$ containing at least one edge.
For each component $C\in\mathcal C$, let
$
F_C=\{\,e\in E(G)\mid \phi(e)\in E(C)\,\}.
$
Then $\{F_C:C\in\mathcal C\}$ is a partition of $E(G)$.
Since every connected component, containing at least one edge, of $G'$ is a star, each subgraph $G[F_C]$ is also a star.
Consequently,
$
\mathcal P=\{\,G[F_C]:C\in\mathcal C\,\}
$
is a star edge partition of $G$.

Now fix a vertex $v\in V(G)$ and let $t(v)\ge1$ denote the number of copies of $v$ in $G'$, and $q(v)$ ($1\leq q(v)\leq t(v)$)
be the number of 
nonisolated copies of $v$ in $G'$.
The edges incident with $v$ are distributed among these $q(v)$ copies, and therefore $v$ belongs to exactly $q(v)$ stars of $\mathcal P$.
Hence the contribution of $v$ to the cost of $\mathcal P$ is
$
q(v)-1.
$
Summing over all vertices,
\[
\cost(\mathcal P)
=\sum_{v\in V(G)}(q(v)-1)\leq \sum_{v\in V(G)}(t(v)-1)
=\text{(\# exclusive splits used to obtain $G'$)}
\le k.
\]
Therefore $(G,k)$ is a \yes-instance of \SEPVC.
\end{proof}

%\subsection{{\sf NP}-harness of {\sc \SEPVC}}

%\ifthenelse{\boolean{shortver}}{}{
\subsection{Hardness of {\sc Star Edge Partition with Vertex Cost}}
%}
We prove that {\sc SEPVC} is {\sf NP}-hard by giving a polynomial-time reduction from {\sc Vertex Cover}. 
The reduction is essentially the same reduction used for
\textsc{Weighted Star Decomposition} by Gaikwad et al.\cite{gaikwad2025inclusive}; using the
high-girth hardness of \textsc{Vertex Cover} allows us to retain an
arbitrarily large girth.
The reduction relies on two simple observations. First, for a fixed input graph, the cost of a star edge partition depends only on the number of stars in the partition. Second, the minimum number of stars in a star edge partition equals the size of a minimum vertex cover.

\begin{lemma}\label{lem:VC-to-SEPVC}
Let $G$ be a graph such that every component contains a cycle, and $k$ be a nonnegative integer.
%Let $(G,k)$ be an instance of {\sc \VC} with $n$ vertices and $m$ edges.
Then $(G,k)$ is a \yes-instance of {\sc \VC}\ if and only if $(G, k+m-n)$ is a \yes-instance of {\sc \SEPVC}.
\end{lemma}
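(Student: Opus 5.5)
The plan is to show that, for a fixed graph, the cost of a star edge partition depends only on the number of stars, and then to identify the minimum number of stars with the minimum vertex cover size. First I will check that the instance \((G,k+m-n)\) is well formed. Since every component of \(G\) contains a cycle, \(G\) has no isolated vertices. Moreover, each component has at least as many edges as vertices, so \(m\ge n\) and the budget \(k+m-n\) is a nonnegative integer.

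\textbf{Step 1 (cost identity).} Let \(\mathcal P=\{H_1,\dots,H_r\}\) be any star edge partition of \(G\). Writing \(p(v)\) for the number of stars containing \(v\), double counting gives \(\sum_v p(v)=\sum_{i}|V(H_i)|\). Every vertex lies in at least one star, since there are no isolated vertices. A star with \(e_i\) edges has \(e_i+1\) vertices, and the \(H_i\) partition \(E(G)\), so
\[
\cost(\mathcal P)=\sum_{i=1}^r (e_i+1)-n=m+r-n .
\]
Hence a star edge partition of cost at most \(k+m-n\) exists if and only if a star edge partition with at most \(k\) stars exists.

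\textbf{Step 2 (\(\Rightarrow\)).} Let \(C\) be a vertex cover with \(|C|\le k\). Assign each edge to one endpoint that lies in \(C\). For each \(c\in C\) whose set of assigned edges is nonempty, those edges all share the endpoint \(c\) and so form a star centered at \(c\). These stars are edge-disjoint and cover \(E(G)\), and there are at most \(|C|\le k\) of them. By Step 1, their cost is at most \(k+m-n\).

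\textbf{Step 3 (\(\Leftarrow\)).} Let \(\mathcal P\) be a star edge partition with cost at most \(k+m-n\). By Step 1 it has \(r\le k\) stars. From each star pick a center; for a star that is a single edge, pick either endpoint. Every edge of a star is incident with that star's center, and every edge lies in some star, so the chosen centers form a vertex cover of size at most \(r\le k\).

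No step is genuinely hard. The only points needing care are the ambiguity of the center of a one-edge star, handled by choosing an arbitrary endpoint, and the use of the cycle hypothesis to guarantee no isolated vertices (so that \(p(v)\ge 1\) in the cost formula) and a nonnegative budget.
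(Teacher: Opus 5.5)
Your proposal is correct and follows essentially the same route as the paper: the identity $\cost(\mathcal P)=m+r-n$ reduces the question to partitioning $E(G)$ into at most $k$ stars, which is then matched with vertex covers by taking star centers in one direction and assigning each edge to a covering endpoint in the other. Your justification that $k+m-n\ge 0$ (each component with a cycle has at least as many edges as vertices) and that there are no isolated vertices also matches the paper's use of the hypothesis.
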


\ifthenelse{\boolean{shortver}}{}{
\begin{proof}
Since $G$ has no component isomorphic to a tree, we have $k+m-n\geq 0$.
Fix any star-partition $\mathcal{P}$ of $E(G)$ and let $r\coloneqq |\mathcal{P}|$ be the number of stars.
For a vertex $v\in V(G)$, let $p(v)$ denote the number of stars in $\mathcal{P}$ that contain $v$.
By definition,
\[
\cost(\mathcal{P})=\sum_{v\in V(G)} (p(v)-1)=\Bigl(\sum_{v\in V(G)} p(v)\Bigr)-n.
\]
On the other hand, $\sum_{v\in V(G)} p(v)=\sum_{H\in\mathcal{P}} |V(H)|$ counts the total number of
vertex-incidences over all parts.
If a part $H$ is a star with $|E(H)|=\ell$, then $|V(H)|=\ell+1$.
Since the parts form an edge partition of $E(G)$, we have $\sum_{H\in\mathcal{P}} |E(H)|=m$, and hence
\[
\sum_{H\in\mathcal{P}} |V(H)|
=\sum_{H\in\mathcal{P}} (|E(H)|+1)
=\Bigl(\sum_{H\in\mathcal{P}} |E(H)|\Bigr)+|\mathcal{P}|
=m+r.
\]
Therefore,
\begin{equation}\label{eq:cost-vs-number-of-stars}
\cost(\mathcal{P})=(m+r)-n.
\end{equation}
In particular, $\cost(\mathcal{P})\le k+m-n$ holds if and only if $r\le k$.

\medskip
We now prove that $G$ has a star-partition into at most $k$ stars if and only if $G$ has a vertex cover of size at most $k$.

\smallskip
\noindent\textbf{(\(\Rightarrow\)).}
Assume that $E(G)$ admits a star-partition $\mathcal{P}$ with $|\mathcal{P}|\le k$.
For each star $H\in\mathcal{P}$, choose its center vertex and let $C$ be the set of all chosen centers (for a one-edge star, choose any of its endpoint as the center).
Since every edge of $G$ belongs to some star in $\mathcal{P}$ and every edge in a star is incident with the center of that star,
it follows that every edge of $G$ has at least one endpoint in $C$.
Thus, $C$ is a vertex cover of $G$.
Moreover, by construction, $|C|\le |\mathcal{P}|\le k$.

\smallskip
\noindent\textbf{(\(\Leftarrow\)).}
Conversely, assume that $G$ has a vertex cover $C$ with $|C|\le k$.
For each edge $e=uv\in E(G)$, at least one of $u$ and $v$ lies in $C$.
Fix an arbitrary assignment that maps each edge $e$ to one of its endpoints in $C$.
For every vertex $x\in C$, let $E_x$ be the set of edges assigned to $x$, and let $H_x$ be the star centered at $x$ with edge set $E_x$.
Then $\{H_x \colon  x\in C, E_x \neq \emptyset\}$ is a partition of $E(G)$ into stars, and its size is at most $|C|\le k$.

\smallskip
Combining the two directions, $G$ has a vertex cover of size at most $k$ if and only if $E(G)$ can be partitioned into at most $k$ stars.
Using~\eqref{eq:cost-vs-number-of-stars}, this holds if and only if $G$ admits a star-partition of total cost at most $k+m-n$.
Equivalently, $(G,k)$ is a \yes-instance of \VC\ if and only if $(G,k+m-n)$ is a \yes-instance of \SEPVC.
\end{proof}
}
%Since {\sc Vertex Cover} remains {\sf NP}-complete on graphs of girth $g$, for every fixed $g \geq 3$ \cite{yannakakis1981edge},  we have the following theorem.

Now, \Cref{thm:SEPVC-NPcomplete} follows from the hardness of \VC (\Cref{prop:vc-high-girth}), \Cref{lem:VC-to-SEPVC}, and the fact 
that $k+m-n\leq n+m-n = m$ (for the \VC instance, we can safely assume that $k\leq n$).

\begin{theorem}\label{thm:SEPVC-NPcomplete}
For every fixed integer $g\geq 3$, {\sc SEPVC} is
{\sf NP}-complete on subcubic graphs of girth at least $g$. Furthermore, assuming {\sf ETH},
it can be solved neither in time
$2^{o(n+m)}$ nor in time $2^{o(k)}\cdot n^{O(1)}$.
\end{theorem}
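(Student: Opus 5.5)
The proof combines \Cref{prop:vc-high-girth} with the reduction of \Cref{lem:VC-to-SEPVC}.

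\emph{Membership in {\sf NP}.} A certificate is a star edge partition together with the centers of its stars. It has polynomial size. Checking that the parts are stars, that they partition $E(G)$, and that the cost is at most $k$ takes polynomial time.

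\emph{Hardness.} Fix $g\geq 3$ and start from an instance $(G,k)$ of \VC in which $G$ is subcubic, has girth at least $g$, and every component contains a cycle. By \Cref{prop:vc-high-girth}, \VC is {\sf NP}-complete on such instances and has no $2^{o(n+m)}$ algorithm under {\sf ETH}. We may assume $k\leq n$, since otherwise the instance is trivially a \yes-instance.

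Output $(G,k+m-n)$. The graph is unchanged, so it is still subcubic with girth at least $g$. Since every component contains a cycle, there are no isolated vertices, which is required for \SEPVC. The new budget is nonnegative because $m\geq n$. By \Cref{lem:VC-to-SEPVC}, $(G,k)$ is a \yes-instance of \VC if and only if $(G,k+m-n)$ is a \yes-instance of \SEPVC. This gives {\sf NP}-hardness on the stated class.

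\emph{{\sf ETH} bounds.} The reduction does not change the graph, so $n$ and $m$ are preserved exactly. Hence a $2^{o(n+m)}$ algorithm for \SEPVC would give one for \VC on this class, contradicting \Cref{prop:vc-high-girth}.

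For the parameterized bound, note that the new budget satisfies $k'=k+m-n\leq m$. Suppose \SEPVC had an algorithm running in time $2^{o(k')}\cdot n^{O(1)}$. Since $k'\leq m$, its running time is at most $2^{o(m)}\cdot n^{O(1)}$, which is $2^{o(n+m)}$. This again contradicts \Cref{prop:vc-high-girth}.

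The only delicate point is this last step. We need $k'=O(n+m)$, which is exactly what the normalization $k\leq n$ guarantees. With that in place, a subexponential dependence on $k'$ becomes subexponential in the instance size. Subcubicity also gives $m\leq 3n/2$, so the bound could equivalently be stated in terms of $n$ alone.
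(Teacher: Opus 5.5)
Your proposal is correct and follows the paper's route: the paper also derives the theorem directly from the high-girth subcubic hardness of \VC (Proposition~\ref{prop:vc-high-girth}) via the graph-preserving reduction of Lemma~\ref{lem:VC-to-SEPVC}, using the normalization $k\leq n$ to get $k+m-n\leq m$ for the parameterized lower bound. Your extra checks make explicit what the paper leaves implicit: no isolated vertices, a nonnegative budget because $m\geq n$, and absorbing $n^{O(1)}$ into $2^{o(n+m)}$.
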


%\subsection{Putting all together}

% We have established polynomial-time equivalences ($\mathtt{poly}$ refers to polynomial-time many-one reducibility) between the three problems on connected graphs of girth at least five. Together with the {\sf NP}-hardness of {\sc SEPVC}, this immediately yields the complexity of Union of \textsc{Union of Stars-EVS} and \textsc{Cograph-EVS}.

Thus we obtain the following relations, under the restrictions mentioned in \Cref{lem:VC-to-SEPVC} and in \Cref{lem:SEPVC-equivalent-to-PFEVS}. The first follows from
\Cref{lem:VC-to-SEPVC}, while the second and the third follow from
\Cref{lem:SEPVC-equivalent-to-PFEVS} and
\Cref{lem:Pt-free-VS-equivalent-EVS-high-girth}, respectively, when restricted
to graphs of girth at least \(5\). Then \Cref{thm:SEPVC-NPcomplete} implies \Cref{thm:PFEVS-PFVS-NPcomplete}.

 \begin{center}
\tcbox[colback=gray!5,colframe=gray!50,boxrule=0.8pt,arc=2pt]
{\textsc{Vertex Cover} $\;\leq_{\mathtt{poly}}\;$ \textsc{SEPVC}$ \;\equiv\; $\PFEVS $\;\equiv\;$ \PFVS}
\end{center}

\section{\texorpdfstring{$\boldsymbol{P_5}$-free Vertex Splitting}{P5-free Vertex Splitting}}
\label{sec:p5}

In this section we prove \Cref{thm:P5VS-NPcomplete}. % A {\em double-star graph}  is a graph that is formed by two stars  via joining their centers by an edge. %We begin  with a simple observation.
First we formulate a graph partition problem that captures the effect of exclusive vertex splitting to $P_5$-free graphs.
Throughout this section, all input graphs are assumed to have no isolated vertices. 

\begin{definition}[Double-star edge partition]
  {\em   Let $G$ be a graph without any isolated vertices. A collection $\mathcal P=\{H_1,\ldots,H_r\}$ of subgraphs of $G$ is a
\emph{double-star edge partition} if
\begin{itemize}
\item each $H_i$ is a star or a double-star,
\item the $H_i$ are edge-disjoint,
\item $\bigcup_i E(H_i)=E(G)$.
\end{itemize}}
\end{definition}

The \emph{cost} of a double-star edge partition  $\mathcal{P}$ is denoted by $ \cost(\mathcal{P})$ where 
    \[
        \cost(\mathcal{P}) \coloneqq \sum_{v \in V(G)} \cost(v); ~~~~~\text{and}   ~~~~~\cost(v) \coloneqq \bigl|\{\, i \in [r] \mid v \in V(H_i) \,\}\bigr| - 1;
    \]

\begin{problem}
	\problemtitle{{\DSEPVC} ({\sc DSEPVC})}
	\probleminput{A graph $G$ without any isolated vertices, and a nonnegative integer $k$.}
	\problemquestion{Does $G$ admit a double-star edge partition of cost at most $k$?}
\end{problem}

We next show that  minimizing the number of exclusive vertex splits required to transform
a graph into $P_5$-free is equivalent to minimizing the vertex cost of
a double-star edge partition.

\begin{lemma}\label{lem:DSEPVC-equivalent-to-P5FEVS}
Let $G$ be a graph with girth at least $6$, %having at least $|V(G)|$ edges, 
and having no isolated vertices, and let $k$ be a
nonnegative integer. Then $(G,k)$ is a \yes-instance of \DSEPVC\ if and only if
$(G,k)$ is a \yes-instance of \PfiveFEVS.
\end{lemma}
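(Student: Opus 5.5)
The plan is to mirror the proof of \Cref{lem:SEPVC-equivalent-to-PFEVS}, replacing stars by stars and double-stars and using \Cref{lem:forest-diameter-equivalent-to-Pdplus2-free-high-girth} with \(d=3\). For \(d=3\) that lemma needs girth at least \(d+3=6\), which is exactly the hypothesis here. The key fact is that a tree of diameter at most \(3\) with at least one edge is either a star (diameter \(1\) or \(2\)) or a double-star (diameter \(3\)): in a tree of diameter \(3\), every vertex is within distance one of one of the two central vertices of a longest path.

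\textbf{(\(\Rightarrow\)).} Let \(\mathcal P=\{H_1,\dots,H_r\}\) be a double-star edge partition of cost at most \(k\), and let \(p(v)\) be the number of parts containing \(v\).
\begin{enumerate}
\item For each \(v\), perform \(p(v)-1\) exclusive splits so that each copy of \(v\) carries exactly the edges of one part; since \(G\) has no isolated vertices, \(p(v)\geq 1\). This uses \(\sum_v(p(v)-1)=\cost(\mathcal P)\leq k\) splits.
\item In the resulting graph \(G'\), the copies assigned to \(H_i\) form a connected component isomorphic to \(H_i\). I need to check that no extra edges appear among these copies. This holds because each edge of \(G\) is realized exactly once, by the copies assigned to its own part.
\item Hence every component of \(G'\) is a star or a double-star, i.e.\ a tree of diameter at most \(3\), so it contains no path on five vertices. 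Thus \(G'\) is \(P_5\)-free. This direction does not even use the girth assumption.
\end{enumerate}

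\textbf{(\(\Leftarrow\)).} Suppose \(G'\) is \(P_5\)-free and is obtained from \(G\) by at most \(k\) exclusive splits.
\begin{enumerate}
\item By \Cref{obs:no-new-cycle}, \(\girth(G')\geq 6\). Applying \Cref{lem:forest-diameter-equivalent-to-Pdplus2-free-high-girth} with \(d=3\), \(G'\) is a forest whose components have diameter at most \(3\). So every nontrivial component is a star or a double-star.
\item Exclusive splits give a bijection \(\phi\colon E(G)\to E(G')\). For each nontrivial component \(C\), let \(F_C=\{e\in E(G)\mid \phi(e)\in E(C)\}\). These sets partition \(E(G)\).
\item I claim \(G[F_C]\) is isomorphic to \(C\). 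The map sending each vertex of \(C\) to its original vertex is injective on \(C\): if two copies of the same original vertex lay in the same component, then, since every split is exclusive, the path between them would map to a closed walk in \(G\) containing a short cycle.
\end{enumerate}

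Injectivity is the step I expect to be the main obstacle, and I would handle it as follows. Two distinct copies \(x,y\) of \(v\) in one tree component are joined by a path of length at most \(3\). Their distance is at least \(2\), since they are nonadjacent, and exactly \(2\) is impossible, since exclusive copies share no neighbour. So the distance is \(3\), and the path is \(x\,a\,b\,y\). This maps to the closed walk \(v\,a'\,b'\,v\) in \(G\), where \(a'\neq b'\) because \(ab\) is an edge, and \(a',b'\neq v\) because no edge of \(G\) joins \(v\) to itself. So \(v,a',b'\) form a triangle, contradicting girth at least \(6\). A general split sequence can create copies of copies, but every split is exclusive and descendants of \(v\) remain pairwise nonadjacent with no common neighbour, so the same argument applies to any two descendants of \(v\) in \(G'\).

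Given injectivity, each \(G[F_C]\) is a star or a double-star, and \(\mathcal P=\{G[F_C]\}\) is a double-star edge partition. A vertex \(v\) lies in exactly \(q(v)\) parts, where \(q(v)\) is the number of its non-isolated copies. This \(q(v)\) is at most its number of copies \(t(v)\). Hence
\[
\cost(\mathcal P)=\sum_v(q(v)-1)\leq\sum_v(t(v)-1)\leq k,
\]
exactly as in the cograph case.
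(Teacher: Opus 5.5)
Your proposal is correct and follows essentially the same route as the paper. In the forward direction you split each vertex once per part; in the backward direction you use \Cref{obs:no-new-cycle} and \Cref{lem:forest-diameter-equivalent-to-Pdplus2-free-high-girth} with $d=3$, then prove the projection injective on each component by excluding distance $2$ and extracting a triangle from distance $3$. The one cosmetic difference is that you exclude distance $2$ via ``descendants of $v$ share no common neighbour'', whereas the paper argues through the edge bijection $\phi$ (two distinct edges of $G'$ would project to the same edge of $G$), which also handles repeated splits of descendants directly.
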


\ifthenelse{\boolean{shortver}}{}{
\begin{proof}
\noindent\textbf{(\(\Rightarrow\)).}
Suppose $(G,k)$ is a \yes-instance of \DSEPVC. Let    $\mathcal{P}=\{H_1,\ldots,H_r\}$
be a partition of $E(G)$ into stars and double-stars with    $\cost(\mathcal{P})\le k$.
For each vertex $v\in V(G)$, let
\[
    p(v):=\bigl|\{\,i\in[r]\mid v\in V(H_i)\,\}\bigr|.
\]
We split $v$ into $p(v)$ copies, one copy for each part $H_i$ containing $v$.
This can be done using exactly $p(v)-1$ exclusive vertex splits at $v$.
After doing this for every vertex, each copy of $v$ is incident only with
edges belonging to one fixed part of $\mathcal{P}$.

Thus the resulting graph is the disjoint union of the graphs
$H_1,\ldots,H_r$. Each $H_i$ is a star or a double-star, and hence each
component has diameter at most $3$. Therefore the resulting graph is
$P_5$-free.

The total number of exclusive vertex splits used is
\[
    \sum_{v\in V(G)}(p(v)-1)
    =
    \cost(\mathcal{P})
    \le k.
\]
Hence $(G,k)$ is a \yes-instance of \PfiveFEVS.

\smallskip
\noindent\textbf{(\(\Leftarrow\)).}
Conversely, suppose $G$ can be transformed into a $P_5$-free graph $G'$ by at
most $k$ exclusive vertex splits.
Let $\phi:E(G)\to E(G')$ be the natural bijection between original edges and
their images after the exclusive splits. By
\Cref{obs:no-new-cycle}, the graph $G'$ has girth
at least $6$.

Every connected component of $G'$ is $P_5$-free and has girth at least $6$.
By \Cref{lem:forest-diameter-equivalent-to-Pdplus2-free-high-girth}, every nontrivial connected
component of $G'$ is a star or a double-star.

Let $\mathcal{C}$ be the set of nontrivial connected components of $G'$. For
each component $C\in\mathcal{C}$, define
\[
    F_C:=\{\,e\in E(G)\mid \phi(e)\in E(C)\,\}.
\]
Since $\phi$ is a bijection, the sets $F_C$, for $C\in\mathcal{C}$, form a
partition of $E(G)$.

We claim that each subgraph \(G[F_C]\) is a star or a double-star. Let
\(\pi\) denote the natural projection that maps every copy of a vertex in
\(G'\) to the corresponding original vertex of \(G\). Since \(C\) is a star
or a double-star, \(C\) is a tree of diameter at most \(3\).

We first show that \(\pi\) is injective on \(V(C)\). Suppose, for a
contradiction, that \(C\) contains two distinct vertices \(x\) and \(y\) such
that        $\pi(x)=\pi(y)=v$
for some \(v\in V(G)\). Since \(C\) has diameter at most \(3\), the unique
\(x\)-\(y\) path in \(C\) has length at most \(3\). Its length is not \(1\),
because distinct copies of the same original vertex are nonadjacent.

If this path has length \(2\), say \(xzy\), then the two distinct edges
\(xz\) and \(zy\) of \(C\) both project to the same edge
\(v\pi(z)\) of \(G\). This is impossible, since the edges of \(G'\) are in
natural bijection with the edges of \(G\).
Thus the \(x\)-\(y\) path has length \(3\), say        $xz_1z_2y$.
Then the projection of this path gives the cycle        $v,\ \pi(z_1),\ \pi(z_2),\ v$
in \(G\). This is a cycle of length \(3\), contradicting
\(\girth(G)\ge 6\). Therefore \(\pi\) is injective on \(V(C)\).

Since \(\pi\) is injective on \(V(C)\) and the edges of \(C\) are precisely
the images of the edges in \(F_C\), the graph \(G[F_C]\) is isomorphic to
\(C\). Hence \(G[F_C]\) is a star or a double-star.
Thus    $\mathcal{P}:=\{\,G[F_C]\mid C\in\mathcal{C}\,\}$
is a double-star edge partition of $G$.

For each vertex $v\in V(G)$, let $q(v)$ be the number of non-isolated copies
of $v$ in $G'$. Then $v$ belongs to exactly $q(v)$ members of $\mathcal{P}$.
If $v$ was split into $t(v)$ total copies in $G'$, then $q(v)\le t(v)$.
Therefore
\[
    \cost(\mathcal{P})
    =
    \sum_{v\in V(G)}(q(v)-1)
    \le
    \sum_{v\in V(G)}(t(v)-1).
\]
The right-hand side is precisely the number of exclusive vertex splits used to
obtain $G'$. Hence   $\cost(\mathcal{P})\le k$.
Thus $(G,k)$ is a \yes-instance of \DSEPVC.
\end{proof}
}

%\ifthenelse{\boolean{shortver}}{}{
\subsection{Hardness of \DSEPVC}
%}

%\paragraph{\EDS.}
Next we prove the hardness of {\sc DSEPVC} by giving a polynomial-time reduction from \EDS. 

An edge subset $F\subseteq E(G)$ is an \emph{edge dominating set} of $G$ if every
edge of $G$ either belongs to $F$ or shares an endpoint with an edge of $F$.
Given a graph $G$ and a nonnegative integer $k$, the problem \EDS\ asks whether  $G$ has an edge dominating set of
size at most $k$.
We prove the hardness of the required kind for \EDS (\Cref{lem:EDS-hard-high-girth}) from the known reductions (\Cref{prop:vc-to-eds} and \Cref{prop:eds-to-eds-plus-one}).

\begin{proposition}{\rm \cite[Lemma~2]{chlebik2006approximation}}
\label{prop:vc-to-eds}
Let $G$ be a graph without any isolated vertices, and $k$ be a nonnegative integer.
%Let $(G,k)$ be an instance of \VC. 
Let $G'$ be obtained from $G$ as follows.
Introduce a new vertex $z$ and add the edge $zu$ for every vertex
$u\in V(G)$. Moreover, for every edge $e=uv\in E(G)$, delete $e$ and add
new vertices $s_{uv}$, $s_e$, $s_{vu}$, and $s_{e'}$, together with the
edges
$
us_{uv}, s_{uv}s_e, s_es_{vu}, s_{vu}v, s_es_{e'}.
$
Then $(G,k)$ is a \yes-instance of \VC if and only if
$(G',k+|E(G)|)$ is a \yes-instance of \EDS.
\end{proposition}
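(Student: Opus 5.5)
We prove both directions by comparing the size of a minimum vertex cover of $G$ with the size of a minimum edge dominating set of $G'$. Set $m=|E(G)|$. The key structural facts about $G'$ are these. Each pendant edge $s_es_{e'}$ must be dominated, and only edges incident with $s_e$ can do so: these are $s_{uv}s_e$, $s_es_{vu}$ and $s_es_{e'}$. So every edge dominating set contains at least one edge at $s_e$ for each of the $m$ gadgets. These $m$ edges are pairwise distinct because the gadgets are disjoint.

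\textbf{Forward direction.} Let $C$ be a vertex cover of $G$ with $|C|\le k$. Take $F=\{zu: u\in C\}$, and for each edge $e=uv$ add one edge at $s_e$ as follows. Choose an endpoint of $e$ in $C$, say $u$, and add $s_es_{vu}$. Then $F$ has size $|C|+m\le k+m$, and we check that it dominates every edge.
\begin{itemize}
\item Every edge $zw$ shares the vertex $z$ with some $zu\in F$. This uses $C\neq\emptyset$, which holds since $G$ has no isolated vertices and hence has at least one edge.
\item The edges $us_{uv}$ and $s_{uv}s_e$ are dominated: the first by $zu$ since $u\in C$, the second by $s_es_{vu}$.
\item The edges $s_es_{vu}$ (which lies in $F$), $s_{vu}v$ and $s_es_{e'}$ are dominated by $s_es_{vu}$.
\end{itemize}

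\textbf{Reverse direction.} Let $F$ be an edge dominating set of $G'$ with $|F|\le k+m$. We normalize $F$ without increasing its size. In each gadget, edges at $s_e$ are interchangeable in terms of what they cover inside the gadget, except for the two outer edges $us_{uv}$ and $s_{vu}v$. Exchange arguments will be used as follows.
\begin{itemize}
\item Any chosen edge $us_{uv}$ is replaced by $zu$. The edge $zu$ dominates every edge at $u$; the edge $s_{uv}s_e$ that $us_{uv}$ used to cover is already covered by the mandatory edge at $s_e$.
\item Any edge $s_es_{e'}$ is replaced by $s_{uv}s_e$.
\item If two edges at $s_e$ are chosen, or the gadget needs an outer cover, we swap in $zu$ or $zv$.
\end{itemize}

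After normalization, $F$ consists of exactly one edge at each $s_e$, say $s_{uv}s_e$ or $s_es_{vu}$, plus a set of $z$-edges $\{zu: u\in C\}$ with $|C|\le k$. Now take $e=uv$ and suppose the edge chosen at $s_e$ is $s_es_{vu}$. The edge $us_{uv}$ must still be dominated, and the only remaining candidate is $zu$, so $u\in C$. The other choice is symmetric. Hence $C$ is a vertex cover of $G$ of size at most $k$.

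\textbf{Main obstacle.} The difficult part is performing the normalization carefully: the gadget-local replacements must neither undominate any edge nor increase $|F|$. The cleanest way is a charging argument. Assign to each gadget $e$ the edges of $F$ lying inside it, meaning at $s_e$, $s_{uv}$ or $s_{vu}$. If some gadget receives at least $2$ edges, replace those edges by one edge at $s_e$ together with $zu$ or $zv$ appropriately. This yields the vertex cover directly: $C$ consists of the endpoints $u$ with $zu\in F$, plus one endpoint from each gadget that received $\ge 2$ edges. Counting then gives $|C|\le |F|-m\le k$.
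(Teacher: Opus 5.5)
The paper does not prove this statement; it imports it from Chleb\'{\i}k and Chleb\'{\i}kov\'{a}, so there is no in-paper argument to compare against. Your forward direction is correct. The exchange normalization in your reverse direction is also correct, and on its own it already gives a complete proof.
\begin{itemize}
\item Each of the three replacements keeps $F$ edge dominating and does not increase $|F|$.
\item None of them deletes the last edge at any $s_e$.
\item None of them creates a new outer edge or a new edge $s_es_{e'}$.
\end{itemize}
So exhaustive application ends with exactly one of $s_{uv}s_e, s_es_{vu}$ per gadget, no outer edge anywhere, and only $z$-edges otherwise. In the last step, say explicitly that ``the only remaining candidate is $zu$'' holds because the first replacement removed every outer edge at $u$. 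This includes the edges $us_{uw}$ of other gadgets, which would otherwise also dominate $us_{uv}$. The clause ``or the gadget needs an outer cover'' is superfluous, since every step preserves domination, so drop it.

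Do not substitute the charging argument of your last paragraph for this, because it is wrong as stated. Taking an arbitrary endpoint from each gadget with at least two edges fails. Let $G$ be the path with edges $e=uv$ and $f=uw$, and let $F=\{us_{uw}, s_fs_{f'}, s_es_{vu}, zw\}$. This is an edge dominating set of size $k+m$ with $k=2$; here $us_{uv}$ is dominated through $u$ by the outer edge of the other gadget. But if the endpoint chosen for $f$ is $w$, you get $C=\{w\}$, which misses $uv$.

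For the same reason, a gadget containing both outer edges may need both endpoints. So ``one edge at $s_e$ together with $zu$ or $zv$'' is not a valid replacement there.

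The correct charging works as follows.
\begin{itemize}
\item Put $u$ into $C$ for every $zu\in F$.
\item Put $u$ into $C$ for every outer edge $us_{uv}\in F$. Each such edge is in addition to the forced edge at $s_e$, so it pays for itself.
\item Add an arbitrary endpoint for each gadget that has at least two edges but no outer edge.
\end{itemize}
Then $|C|\le |F|-m\le k$. Now consider an edge $uv$ whose gadget contains a single (middle) edge. Any outer edge of that gadget not dominated inside the gadget, say $us_{uv}$, must be dominated by $zu$ or by an outer edge at $u$. Either one puts $u$ into $C$, so $C$ covers $uv$.
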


\begin{proposition}{\rm \cite[Lemma~2]{chlebik2007complexity}}
\label{prop:eds-to-eds-plus-one}
%Let $(G,k)$ be an instance of \EDS. 
Let $G$ be a graph without any isolated vertices, and $k$ be a nonnegative integer.
Let $G'$ be obtained from $G$ by
replacing each edge by a path of length $4$, that is, by subdividing each
edge exactly three times. Then $(G,k)$ is a \yes-instance of \EDS if and
only if $(G',k+|E(G)|)$ is a \yes-instance of \EDS.
\end{proposition}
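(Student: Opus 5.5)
The plan is to prove both directions of the equivalence in \Cref{prop:eds-to-eds-plus-one} through the standard correspondence between edge dominating sets and maximal matchings (equivalently, between edge dominating sets and vertex covers of the set of covered vertices). Write $G'$ for the subdivided graph. For each edge $e=uv\in E(G)$ let $P_e = u\,a_e\,b_e\,c_e\,v$ be the replacing path, with internal edges $ua_e, a_eb_e, b_ec_e, c_ev$. Two standard facts will be used: an edge dominating set may be assumed to be a matching (a minimum edge dominating set can be converted into a maximal matching of the same size), and $F$ is an edge dominating set exactly when $V(F)$ is a vertex cover.

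\textbf{Forward direction.} Let $F$ be an edge dominating set of $G$ with $|F|\le k$; we may take $F$ to be a matching. Build $F'$ as follows.
\begin{itemize}
\item If $e=uv\in F$, put $ua_e$ and $c_ev$ into $F'$. This costs one extra edge relative to $e$, and the pair dominates all four edges of $P_e$.
\item If $e\notin F$, then at least one endpoint, say $u$, lies in $V(F)$. Put the single edge $b_ec_e$ into $F'$. It dominates $a_eb_e$, $b_ec_e$ and $c_ev$.
\end{itemize}
The remaining edge $ua_e$ in the second case is dominated because $u\in V(F)$: there is some $f\in F$ at $u$, and the edge of $P_f$ incident to $u$ was placed in $F'$. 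The total is $2|F|+(|E|-|F|)=|F|+|E|\le k+|E(G)|$.

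\textbf{Backward direction.} Let $F'$ be an edge dominating set of $G'$ with $|F'|\le k+|E(G)|$. First normalize $F'$ so that on each path $P_e$ it uses either exactly one edge, which is a middle edge $a_eb_e$ or $b_ec_e$ and leaves $u$ and $v$ uncovered, or exactly the two end edges $ua_e$ and $c_ev$. The key counting fact is that every path must contain at least one edge of $F'$, since $a_eb_e$ can only be dominated by an edge of $P_e$. Exchange arguments then give the normal form.
\begin{itemize}
\item A path containing at least two $F'$-edges can be replaced by $\{ua_e, c_ev\}$ without increasing the size and while only enlarging the covered set among the original vertices.
\item A path containing exactly one $F'$-edge that is an end edge, say $ua_e$, leaves $c_ev$ undominated unless $v$ is covered from elsewhere. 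In that case, replacing $ua_e$ by $b_ec_e$ keeps $c_ev$ dominated, but edge $ua_e$ now needs $u$ covered from elsewhere. This is the delicate step. I would instead charge it: if $u$ is not covered by another path, keep the end edge and record that $u$ is covered.
\end{itemize}

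To finish, let $D$ be the set of paths carrying two edges, and let $F=\{e: P_e\in D\}$. Then $|F|\le |F'|-|E|\le k$. For every other edge $e=uv$, the edge $ua_e$ or $c_ev$ that is not covered internally must be dominated at $u$ or $v$ by an end edge of some other path. After normalization this means $u$ or $v$ lies in $V(F)$, so $F$ dominates $e$ in $G$.

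The main obstacle is the single-end-edge case. I would handle it by a cleaner reformulation: a set $F$ is an edge dominating set with $|F|\le k$ iff there is a vertex cover $S$ and a matching saturating it of size at most $k$; more precisely, $\gamma_e(G)$ equals the minimum over vertex covers $S$ of $|S|$ minus the maximum matching in $G[S]$. I would first try to compute directly that $\gamma_e(G')=\gamma_e(G)+|E(G)|$ by choosing a minimum maximal matching of $G'$ and applying the local exchange on each path. This uses the fact that a maximal matching must meet each 4-path in one or two edges, and that the paths with two edges have endpoints forming a set whose matched structure projects to a maximal matching of $G$.
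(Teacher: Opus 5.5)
The paper does not prove this statement; it cites Chleb\'{\i}k and Chleb\'{\i}kov\'{a}. Your forward direction is correct, and it does not even need $F$ to be a matching. For $e\in F$ you take both end edges of $P_e$. Otherwise you take one middle edge oriented away from an endpoint covered by $F$. This gives $|F|+|E(G)|$ edges.

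In the backward direction your plan is also right. Every $P_e$ carries an $F'$-edge. Paths carrying at least two can be replaced by $\{ua_e,c_ev\}$. The set $D$ of such paths then projects to an edge dominating set of size at most $k$.

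However, the step you call the main obstacle is misanalysed and left unresolved. A path whose only $F'$-edge is the end edge $ua_e$ cannot occur. The vertices $b_e$ and $c_e$ are subdivision vertices of degree $2$, so only $a_eb_e$, $b_ec_e$, or $c_ev$ can dominate $b_ec_e$, and none of them is in $F'$. So it is $b_ec_e$, not $c_ev$, that is left undominated, and this holds unconditionally.

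Your proposed patch would actually break your last step. You suggest keeping the lone end edge and ``recording that $u$ is covered.'' Then the end edge at $v$ that dominates some $c_{e'}v$ could belong to a one-edge path not in $D$, and you could no longer conclude $v\in V(F)$. The reformulation $\gamma_e(G)=\min_S\bigl(|S|-\nu(G[S])\bigr)$ is only announced, not carried out. So, as written, the backward direction is incomplete.

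The missing idea is this single observation, combined with the one you already use for $a_eb_e$. With it, a path carrying exactly one $F'$-edge carries a middle edge, say $a_eb_e$. Then $c_ev$ is not dominated inside $P_e$, so some end edge at $v$ of a path $P_f$ with $f\ni v$ and $f\neq e$ lies in $F'$. After your normalization, every end edge in $F'$ lies on a path in $D$. Hence $f\in F$, and $f$ dominates $e$ in $G$. After normalization $|F'|=|E(G)|+|D|$, so $|F|=|D|\le k$. This completes the proof without any charging argument or matching machinery.
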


\begin{lemma}\label{lem:EDS-hard-high-girth}
For every fixed integer $g\geq 3$, \EDS
is \npc on graphs $G$ of girth at least $g$, satisfying $|E(G)|\geq |V(G)|$, and without any isolated vertices. 
Furthermore, assuming {\sf ETH}, 
it cannot be solved in time $2^{o(n+m)}$. 
\end{lemma}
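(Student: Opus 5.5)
The plan is to chain the three hardness sources: \Cref{prop:vc-high-girth}, then \Cref{prop:vc-to-eds}, then \Cref{prop:eds-to-eds-plus-one} applied a constant number of times. Start from a \VC instance $(G,k)$ with $G$ subcubic, of girth at least $g$, and such that every component contains a cycle. In particular $G$ has no isolated vertices, and $|E(G)|\ge |V(G)|$ component-wise. Apply \Cref{prop:vc-to-eds} to obtain $(G',k+|E(G)|)$.

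The first thing to check is the girth of $G'$. The apex $z$ creates short cycles: $z,u,s_{uv},s_e,s_{vu},v,z$ has length $6$. Any cycle in $G'$ either avoids $z$, in which case it is a cycle of $G$ with every edge replaced by a path of length $3$ (length at least $3g$), or it passes through $z$. So $\girth(G')\ge 6$ but is bounded, and this is the main obstacle. I would fix it by then applying \Cref{prop:eds-to-eds-plus-one} $j$ times, where $j$ is a constant with $6\cdot 4^{j}\ge g$. Each application multiplies the girth by $4$, since every cycle of length $\ell$ becomes one of length $4\ell$ and no new cycles arise. Each application adds $|E|$ of the current graph to the budget, and preserves equivalence of instances and the absence of isolated vertices.

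Next, the edge count. $G'$ has $1+n+4m$ vertices and $n+5m$ edges. Since $m\ge n$, we get $n+5m\ge 1+n+4m$ as soon as $m\ge 1$. Subdividing each edge three times adds $3|E|$ vertices and $3|E|$ edges, so the inequality $|E|\ge|V|$ is preserved through all $j$ rounds. Note that the degree of $z$ is unbounded, but the statement does not require bounded degree, so this is harmless.

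Finally, for the ETH bound: $G$ is subcubic, so $m=O(n)$. After a constant number $j$ of rounds the final graph has $O(n+m)$ vertices and edges, and the construction runs in polynomial time. A $2^{o(n+m)}$ algorithm for \EDS on this class would therefore give one for \VC on the class of \Cref{prop:vc-high-girth}, contradicting ETH. Membership in NP is clear, which gives \NP-completeness.
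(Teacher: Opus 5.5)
Your proposal is correct and follows the paper's proof essentially step for step. You apply \Cref{prop:vc-to-eds}, then a constant number of rounds of \Cref{prop:eds-to-eds-plus-one}; each round multiplies the girth by $4$, preserves $|E|-|V|=m-1\geq 0$ and the absence of isolated vertices, and keeps the instance size linear.

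Two minor slips, neither of which affects the argument. In $G'$ each edge of $G$ becomes a path of length $4$, not $3$. You also only need $m\geq 1$, not $m\geq n$. Your sharper bound $\girth(G')\geq 6$ (versus the paper's $\girth(G')\geq 3$) merely saves a few subdivision rounds.
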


\ifthenelse{\boolean{shortver}}{}{
\begin{proof}
We prove hardness by a reduction from
\VC on subcubic graphs, which is \textsf{NP}-complete and, assuming
\textsf{ETH}, cannot be solved in time $2^{o(n+m)}$ (\Cref{prop:vc-high-girth}). 
We assume that
the input graph has at least one edge and has no isolated vertices.

Let $(G,k)$ be an instance of \VC. Let $n=|V(G)|$ and $m=|E(G)|$. Construct
the graph $G'$ as in \Cref{prop:vc-to-eds}. Let $k' = k + |E(G)|$.
By \Cref{prop:vc-to-eds}, $(G,k)$ is a \yes-instance of \VC if and only if
$(G',k')$ is a \yes-instance of \EDS. Moreover, $|V(G')| = n+4m+1$ and $|E(G')| = n+5m$.
In particular, $|E(G')|-|V(G')| = m-1 \geq 0$, since we assumed that $m\geq 1$.

For a graph $H$, let $S(H)$ denote the graph obtained from $H$ by replacing
each edge by a path of length $4$. Define $S_0(G')=G'$ and, for $i\geq 1$,
define $S_i(G') = S(S_{i-1}(G'))$.
Thus every edge of $S_{i-1}(G')$ is replaced by a path of length $4$ in
$S_i(G')$. Hence $|E(S_i(G'))| = 4^i |E(G')|$ and
\[
|V(S_i(G'))|
=
|V(G')| + 3|E(G')|\sum_{j=0}^{i-1}4^j
=
|V(G')| + (4^i-1)|E(G')|.
\]
Consequently,
\[
|E(S_i(G'))|-|V(S_i(G'))|
=
4^i|E(G')|-\bigl(|V(G')|+(4^i-1)|E(G')|\bigr)
=
|E(G')|-|V(G')|
\geq 0.
\]

Let $i^\star$ be the smallest positive integer such that $g \leq 4^{i^\star}$. Let $G'' = S_{i^\star}(G')$.
Every cycle of $G''$ is obtained from a cycle of $G'$ by replacing each edge
of that cycle by a path of length $4^{i^\star}$. Since $G'$ is a simple graph,
its girth is at least $3$. Therefore $\girth(G'') = 4^{i^\star}\girth(G') \geq 4^{i^\star} \geq g$.
Also, by the calculation above, $|E(G'')|\geq |V(G'')|$.
Thus $G''$ satisfies the properties mentioned in the statement of the lemma.

By repeated applications of \Cref{prop:eds-to-eds-plus-one}, we have that
$(G',k')$ is a \yes-instance of \EDS if and only if $(G'',k'')$ is a
\yes-instance of \EDS, where
\[
k''
=
k' + |E(G')|\sum_{j=0}^{i^\star-1}4^j
=
k' + \frac{4^{i^\star}-1}{3}|E(G')|.
\]
Combining this with \Cref{prop:vc-to-eds}, we obtain that $(G,k)$ is a \yes-instance of \VC if and only if $(G'',k'')$
is a \yes-instance of \EDS.

It remains to verify that the reduction is linear for fixed $g$. Since
$i^\star$ is the smallest positive integer such that $g\leq 4^{i^\star}$,
we have $4^{i^\star}<4g$. Hence
\[
\begin{aligned}
|V(G'')|+|E(G'')|
=
|V(G')|+(4^{i^\star}-1)|E(G')|+4^{i^\star}|E(G')|
&\leq
|V(G')|+2\cdot 4^{i^\star}|E(G')|\\
&<
|V(G')|+8g|E(G')|.
\end{aligned}
\]

Thus the reduction is polynomial-time and linear-size for every fixed
integer $g\geq 3$.

Therefore \EDS is \npc on the restricted class. Moreover, if \EDS on the restricted class 
could be solved in time $2^{o(n+m)}$, then the above
linear-size reduction would yield a $2^{o(n+m)}$-time algorithm for \VC on
subcubic graphs, contradicting \textsf{ETH}. This proves the claimed lower
bound.
\end{proof}
}

\begin{lemma}\label{lem:EDS-to-DSEPVC}
Let $(G,k)$ be an instance of \EDS, where $G$ has girth at least
$4$, has at least $|V(G)|$ edges, and has no isolated vertices. %Let $n:=|V(G)|$ and $m:=|E(G)|$.
Then $(G,k)$ is a \yes-instance of \EDS\ if and only if $(G,k+m-n)$ is a
\yes-instance of \DSEPVC.
\end{lemma}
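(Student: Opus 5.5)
The plan mirrors the proof of \Cref{lem:VC-to-SEPVC}. I will first convert the cost condition into a condition on the number of parts, and then show that partitions into few stars and double-stars correspond exactly to small edge dominating sets. Throughout, write \(n=|V(G)|\) and \(m=|E(G)|\). Note that \(k+m-n\ge 0\) because \(m\ge n\) by hypothesis, so the target instance is well defined.

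\textbf{Step 1 (cost equals \(m+r-n\)).} Let \(\mathcal P\) be any double-star edge partition with \(r\) parts, and let \(p(v)\) be the number of parts containing \(v\). Every part is a star or a double-star, hence a tree, so \(|V(H)|=|E(H)|+1\) for each part \(H\). Since \(G\) has no isolated vertices, every \(v\) lies in at least one part. Exactly as in \eqref{eq:cost-vs-number-of-stars}, this gives
\[
\cost(\mathcal P)=\sum_{v}(p(v)-1)=\sum_{H\in\mathcal P}|V(H)|-n=m+r-n .
\]
Thus \(\cost(\mathcal P)\le k+m-n\) holds if and only if \(r\le k\). It therefore remains to show the following claim: \(E(G)\) can be partitioned into at most \(k\) stars and double-stars if and only if \(G\) has an edge dominating set of size at most \(k\).

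\textbf{Step 2 (from a partition to an edge dominating set).} Given a partition into \(r\le k\) parts, I choose one edge from each part. For a double-star I take the edge joining its two centers; for a star I take any edge. In either case every edge of the part shares an endpoint with the chosen edge: in a star, all edges contain the center; in a double-star, every edge is incident with one of the two centers. Hence the chosen edges form an edge dominating set of size at most \(r\le k\).

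\textbf{Step 3 (from an edge dominating set to a partition).} Let \(F\) be an edge dominating set with \(|F|\le k\). I assign each edge \(e\notin F\) to one edge of \(F\) sharing an endpoint with it. For each \(f=uv\in F\), let \(H_f\) consist of \(f\) together with all edges assigned to \(f\). Each such assigned edge has the form \(uw\) or \(vw\) with \(w\notin\{u,v\}\).

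I expect the main point of care to be checking that \(H_f\) really is a star or a double-star, and this is where the girth hypothesis enters. The leaves attached at \(u\) and the leaves attached at \(v\) must be disjoint. Indeed, if both \(uw\) and \(vw\) were in \(H_f\), then \(u,v,w\) would form a triangle, contradicting \(\girth(G)\ge 4\). Also, no two assigned edges can coincide or join two leaves, because each assigned edge contains \(u\) or \(v\). Hence \(H_f\) is a double-star with centers \(u\) and \(v\), or a star if one side receives no edges.

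The parts \(H_f\) are edge-disjoint and together cover \(E(G)\), and there are \(|F|\le k\) of them. Combining the claim with Step 1 yields the stated equivalence.
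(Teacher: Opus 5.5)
Your proposal is correct and follows essentially the same route as the paper. You use the same identity $\cost(\mathcal P)=m+r-n$ and pick the central edge of each double-star (or any edge of a star) to obtain an edge dominating set. Conversely, you assign each edge to a dominating edge of $F$ and use triangle-freeness to see that each part is a star or a double-star.
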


%\ifthenelse{\boolean{shortver}}{}{
\begin{proof}
Since $G$ has at least $|V(G)|$ edges, we have $m\ge n$, and hence $k+m-n$ is a nonnegative
integer.

We first relate the cost of a double-star edge partition to the number of its
parts. Let $\mathcal{P}$ be any partition of $E(G)$ into stars and
double-stars, and let $r:=|\mathcal{P}|$. For a vertex $v\in V(G)$, let $p(v):=\bigl|\{\,H\in\mathcal{P}\mid v\in V(H)\,\}\bigr|$.
Then
\[
    \cost(\mathcal{P})
    =
    \sum_{v\in V(G)}(p(v)-1)
    =
    \sum_{H\in\mathcal{P}} |V(H)| - n.
\]
Every part $H\in\mathcal{P}$ is a tree with at least one edge, and hence   $|V(H)|=|E(H)|+1$.
Therefore
\[
    \sum_{H\in\mathcal{P}} |V(H)|
    =
    \sum_{H\in\mathcal{P}} (|E(H)|+1)
    =
    m+r.
\]
Thus
\begin{equation}\label{eq:DSEPVC-cost}
    \cost(\mathcal{P}) = m+r-n.
\end{equation}
Consequently,   $\cost(\mathcal{P})\le k+m-n    \quad\Longleftrightarrow\quad    r\le k.$

It remains to prove that $G$ has an edge dominating set of size at most $k$ if
and only if $E(G)$ can be partitioned into at most $k$ stars and double-stars.

\smallskip
\noindent\textbf{(\(\Rightarrow\)).}
Suppose $F\subseteq E(G)$ is an edge dominating set with $|F|\le k$.

For every edge $e\in F$, we create one part $H_e$. Since $F$ is edge
dominating, every edge $f\in E(G)$ is dominated by at least one edge of $F$.
That is, either $f\in F$, or $f$ shares an endpoint with some edge of $F$.

Assign each edge $f\in E(G)$ to one edge $e\in F$ that dominates it. If
$f\in F$, assign $f$ to itself. For every $e=xy\in F$, let $H_e$ be the
subgraph of $G$ whose edge set consists exactly of the edges assigned to $e$.

We claim that each nonempty $H_e$ is a star or a double-star. Indeed, every
edge assigned to $e=xy$ shares an endpoint with $e$, and $e$ itself is assigned
to $H_e$. Hence every edge of $H_e$ is incident with $x$ or with $y$, and the
edge $xy$ belongs to $H_e$. Since $\girth(G)\ge 4$, no vertex can be adjacent
to both $x$ and $y$; otherwise there would be a triangle. Therefore the edges
of $H_e$ form a tree of diameter at most $3$, with possible centers $x$ and
$y$. Hence $H_e$ is a star or a double-star.

The graphs $H_e$, for $e\in F$, are pairwise edge-disjoint and together cover
$E(G)$, because every edge of $G$ was assigned to exactly one edge of $F$.
Thus they form a partition of $E(G)$ into at most $|F|\le k$ stars and
double-stars.

By \eqref{eq:DSEPVC-cost}, this partition has cost at most $k+m-n$.
Therefore $(G,k+m-n)$ is a \yes-instance of \DSEPVC.

\smallskip
\noindent\textbf{(\(\Leftarrow\)).}
Conversely, suppose $(G,k+m-n)$ is a \yes-instance of \DSEPVC. Let    $\mathcal{P}=\{H_1,\ldots,H_r\}$
be a partition of $E(G)$ into stars and double-stars with    $\cost(\mathcal{P})\le k+m-n$.
By \eqref{eq:DSEPVC-cost}, we have $r\le k$.

For each part $H_i$, choose an edge $e_i\in E(H_i)$ as follows. If $H_i$ is a
double-star, choose its central edge. If $H_i$ is a star, choose any edge
incident with its center. Define    $F:=\{e_i\mid i\in[r]\}$.
We claim that $F$ is an edge dominating set of $G$.

Let $f\in E(G)$ be arbitrary. Since $\mathcal{P}$ partitions $E(G)$, the edge
$f$ belongs to some part $H_i$. If $H_i$ is a star, then every edge of $H_i$
shares the center of the star with the chosen edge $e_i$. Hence $f$ is
dominated by $e_i$. If $H_i$ is a double-star, then every edge of $H_i$ shares
an endpoint with the central edge, and $e_i$ was chosen to be this central
edge. Hence again $f$ is dominated by $e_i$.

Therefore every edge of $G$ is dominated by some edge of $F$. Thus $F$ is an
edge dominating set. Moreover,    $|F| = r\le k$.
Hence $(G,k)$ is a \yes-instance of \EDS.
This completes the proof.
\end{proof}
%}

Now \Cref{thm:DSEPVC-NPcomplete} follows from \Cref{lem:EDS-hard-high-girth} and \Cref{lem:EDS-to-DSEPVC}.
The parameterized lower bound follows from the fact that $k$ can be assumed to be at most $m$ (as $(G,k)$ is an instance of \EDS)
and hence $k+m-n\leq 2m$.

\begin{theorem}\label{thm:DSEPVC-NPcomplete}
For every fixed integer $g\geq 4$, {\sc DSEPVC}
is \npc on the class of
graphs $G$ of girth at least $g$, satisfying $|E(G)|\geq |V(G)|$, and having no isolated vertices. 
Furthermore, assuming {\sf ETH}, 
it can be solved neither in time $2^{o(n+m)}$ nor in time $2^{o(k)}\cdot n^{O(1)}$.
\end{theorem}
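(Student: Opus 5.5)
The plan is to compose \Cref{lem:EDS-hard-high-girth} with \Cref{lem:EDS-to-DSEPVC}, and to read off both lower bounds from the sizes involved.

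\textbf{Membership in {\sf NP}.} A certificate is the partition $\mathcal P$ itself. It has at most $m$ parts, and we can check in polynomial time that each part is a star or a double-star, that the parts partition $E(G)$, and that the cost is at most $k$.

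\textbf{Hardness.} Fix $g\geq 4$ and start from an instance $(G,k)$ of \EDS on the class given by \Cref{lem:EDS-hard-high-girth} for this $g$. So $G$ has girth at least $g\geq 4$, $|E(G)|\geq |V(G)|$, and no isolated vertices. These are exactly the hypotheses of \Cref{lem:EDS-to-DSEPVC}. Hence $(G,k)$ is a \yes-instance of \EDS if and only if $(G,k+m-n)$ is a \yes-instance of \DSEPVC, and $k+m-n\geq 0$ because $m\geq n$. The graph is unchanged, so the output instance lies in the stated class and the map is computable in polynomial time. This gives {\sf NP}-hardness, and with membership, {\sf NP}-completeness.

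\textbf{ETH bounds.} Since the graph is unchanged, $n+m$ is preserved exactly. A $2^{o(n+m)}$-time algorithm for {\sc DSEPVC} on this class would therefore give one for \EDS on the class of \Cref{lem:EDS-hard-high-girth}, contradicting that lemma under {\sf ETH}.

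For the parameterized bound, first normalize the \EDS instance. If $k\geq m$ the answer is trivially yes, since $E(G)$ itself dominates, so we may output a fixed \yes-instance of the class; hence assume $k<m$. Then the new parameter is $k'=k+m-n\leq 2m$. A $2^{o(k')}\cdot n^{O(1)}$ algorithm would solve these instances in time $2^{o(m)}\cdot n^{O(1)}\subseteq 2^{o(n+m)}$, again a contradiction. The only step needing care is this normalization; it costs nothing, since the trivial \yes-instance can be, for example, a long even cycle of girth at least $g$, where $m=n$ and any budget $k\geq m$ suffices.
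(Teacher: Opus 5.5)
Your proposal is correct and follows the paper's proof: you compose \Cref{lem:EDS-hard-high-girth} with \Cref{lem:EDS-to-DSEPVC}, note that the graph, and hence $n+m$, is unchanged, and use $k\le m$ to get $k+m-n\le 2m$ for the parameterized bound. Your explicit normalization for $k\ge m$ and your direct {\sf NP}-membership check simply spell out details the paper leaves implicit.
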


So  we have established the following relations under the restrictions mentioned in \Cref{lem:EDS-to-DSEPVC}, \Cref{lem:DSEPVC-equivalent-to-P5FEVS}, and \Cref{lem:Pt-free-VS-equivalent-EVS-high-girth}. The first follows from \Cref{lem:EDS-to-DSEPVC}, the second from \Cref{lem:DSEPVC-equivalent-to-P5FEVS},
and the third from \Cref{lem:Pt-free-VS-equivalent-EVS-high-girth}. Then \Cref{thm:DSEPVC-NPcomplete} proves \Cref{thm:P5VS-NPcomplete}.

 \begin{center}
\tcbox[colback=gray!5,colframe=gray!50,boxrule=0.8pt,arc=2pt]
{\EDS$ \;\leq_{\mathtt{poly}}\; \textsc{DSEPVC}  \;\equiv\;$  \PfiveFEVS$\;\equiv\; $ \PfiveFVS}
\end{center}

\section{\texorpdfstring{$\boldsymbol{P_t}$-free Vertex Splitting for $\boldsymbol{t\geq 6}$}{Pt-free Vertex Splitting for t >= 6}}
\label{sec:pt}

In this section, we prove the hardness for {\sc $P_t$-free-VS} and variants, for every fixed integer $t \geq 6$.
We reduce from \textsc{Vertex Cover} on triangle-free graphs. For every
edge $uv$ of the input graph, the constructed graph contains an induced
$P_t$ passing through the vertices corresponding to $u$ and $v$.
This is obtained by attaching pendant paths of adequate lengths to each vertex.
Therefore, any solution using at most $k$ vertex splits yields a vertex
cover of size at most $k$. Conversely, given a vertex cover $C$, we split
each vertex of $C$ so that its two pendant paths form a separate component.
The remaining core is $P_5$-free, and the lengths of the attached paths
ensure that no induced $P_t$ is created. Thus the input graph has a vertex
cover of size at most $k$ if and only if the constructed graph can be made
$P_t$-free using at most $k$ splits. The same construction uses exclusive
and shallow splits, so it also applies to the corresponding
\textsc{EVS}, \textsc{SVS}, and \textsc{SEVS} variants.

\ifthenelse{\boolean{shortver}}{}{
Now we give the formal proof.
}
\theorempt*

\ifthenelse{\boolean{shortver}}{}{
\begin{proof}
We reduce from \textsc{Vertex Cover} on triangle-free graphs, which is
\textsf{NP}-hard by \Cref{prop:vc-high-girth}. Let
\[
a=\Big\lfloor \frac{t-4}{2}\Big\rfloor,
\qquad
b=\Big\lceil \frac{t-4}{2}\Big\rceil.
\]
Then $a+b=t-4$ and $1\le a\le b$.
Since $t\ge 6$,
$
b+4<t
~\text{and}~
2b+2<t.
$

Before constructing the gadget, we prove a claim.

\begin{claim}
\label{claim:attachment}
Let $K$ be a $P_5$-free graph and let $X$ be a clique in $K$.
Attach to every vertex of $X$ two pendant paths of lengths
$a$ and $b$, where $a+b=t-4$ and $b=\max\{a,b\}$.
Then the resulting graph is $P_t$-free.
\end{claim}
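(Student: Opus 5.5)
Let $K'$ denote the resulting graph, and for $x\in X$ let $A_x$ and $B_x$ be the two pendant paths at $x$, of lengths $a$ and $b$. We will take an arbitrary induced path $P$ in $K'$ and show $|V(P)|\le t-1$. Split $P$ into its \emph{core part} $P\cap V(K)$ and its \emph{pendant part}, the vertices lying in some $A_x\setminus\{x\}$ or $B_x\setminus\{x\}$.

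\textbf{Step 1: the core part is one subpath with at most two clique vertices.} A pendant path meets the rest of $K'$ only through its base $x$. Hence once $P$ enters a pendant path it cannot leave, so every pendant piece of $P$ is a terminal segment of $P$. There are therefore at most two pendant pieces, one at each end, and $P\cap V(K)$ is a single contiguous subpath $Q$, which is an induced path in $K$. Since $X$ is a clique and $P$ is induced, $Q$ contains at most two vertices of $X$, and if it contains two they are consecutive. Each pendant piece attached to an end of $Q$ has its base in $X$ and lies inside $A_x$ or $B_x$, so it has at most $b$ vertices.

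\textbf{Step 2: case analysis.} Because $K$ is $P_5$-free, $|V(Q)|\le 4$.
\begin{itemize}
\item If $P$ has no pendant piece, then $|V(P)|\le 4<t$.
\item If $P$ has exactly one pendant piece, then $|V(P)|\le 4+b<t$, using $b+4<t$.
\item If $P$ has two pendant pieces with distinct bases $x\neq y$, then $x$ and $y$ are the two ends of $Q$. Since $x$ and $y$ are adjacent and $P$ is induced, $Q=xy$. This gives $|V(P)|\le 2+2b<t$.
\item If the two pendant pieces share the same base $x$, they must be $A_x$ and $B_x$, and then $Q=\{x\}$. The whole path is $A_x\cup B_x$, with $a+b+1=t-3$ vertices.
\end{itemize}
In every case $|V(P)|\le t-1$, so $K'$ is $P_t$-free.

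The main care is needed in Step 1, in the claim that pendant pieces are terminal segments of $P$ attached to an endpoint of $Q$. This holds because each interior vertex of a pendant path has degree two and its only exit is toward the base $x$. Everything else is direct counting, using the stated inequalities $b+4<t$ and $2b+2<t$.
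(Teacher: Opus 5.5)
Your proof is correct and follows the same route as the paper: $P_5$-freeness bounds the part of $P$ inside $K$ by four vertices, at most two pendant pieces can occur, and when two pieces have different bases the clique $X$ forces the core to be the single edge $xx'$, giving the bounds $b+4<t$ and $2b+2<t$. You also make explicit why pendant pieces are terminal segments, and you treat the same-base case (at most $a+b+1=t-3$ vertices) and the case with no pendant piece; the paper's write-up leaves these implicit, and its step ``$xx'\in E(K)$'' tacitly assumes $x\neq x'$.
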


\begin{claimproof}
Let $P$ be an induced path in the resulting graph. Since $K$ is $P_5$-free,
the intersection $P\cap K$ contains at most four vertices. The path $P$ can use vertices from at most two pendant paths. Now

\begin{itemize}
    \item If $P$ uses vertices from only one pendant path, then
$
|V(P)|\le b+4<t.
$

\item Suppose that $P$ uses vertices from two pendant paths with
attachment vertices $x,x'\in X$.
Since $X$ is a clique, $xx'\in E(K)$. If the section of $P$ contained in $K$ between $x$ and $x'$
contains another vertex, then $x$ and $x'$ occur
non-consecutively on $P$ while remaining adjacent,
creating a chord.
Hence $x$ and $x'$ must be consecutive on $P$. Therefore the part of $P$ outside the two pendant paths
consists of exactly the vertices $x$ and $x'$.
Consequently,
$
|V(P)|\le b+b+2=2b+2<t.
$
\end{itemize}

 Thus no induced $P_t$ exists.
\end{claimproof}

Let $(G,k)$ be an instance of {\sc Vertex Cover},
where $G$ is triangle-free. Since a house (complement of $P_5$) contains a triangle, $G$ is house-free. We construct an instance $(G_t,k)$ of {\sc $P_t$-free-VS} as follows.

\begin{itemize}
    \item Let $O:=V(G)$. Add $O$ to $G_t$. Then 
for every pair of distinct vertices $x,y\in O$,
add the edge $xy$ to $G_t$ if and only if
$xy\notin E(G)$.
Thus
$
G_t[O]\cong \overline{G}.
$

 \item For every vertex $v\in O$, attach two pendant paths,
$
X_v=x^v_1x^v_2\cdots x^v_a$ and 
$
Y_v=y^v_1y^v_2\cdots y^v_b,
$
by adding the edges
$
vx^v_1,\;
x^v_1x^v_2,\ldots,
x^v_{a-1}x^v_a
$
and
$
vy^v_1,\;
y^v_1y^v_2,\ldots,
y^v_{b-1}y^v_b.
$

 \item Next for every edge $uv\in E(G)$,
introduce two vertices
$s_{uv}$ and $s_{vu}$.
Let $S$ denote the set of all such vertices. Make $S$ a clique.

\item Furthermore, for every  edge $uv\in E(G)$,
make $s_{uv}$ adjacent to every vertex of $O$
except $v$, and make $s_{vu}$ adjacent to every
vertex of $O$ except $u$. 
\end{itemize}

This completes the construction. See \Cref{fig:gadget} for an illustration. Clearly this construction takes polynomial time.
\begin{figure}[ht!]
\centering
\begin{tikzpicture}[
    scale=.7,
    every node/.style={font=\small},
    O/.style={circle,draw,thick,fill=blue!20,minimum size=5mm},
    S/.style={circle,draw,thick,fill=red!25,minimum size=5mm},
    P/.style={circle,draw,thick,fill=green!20,minimum size=4mm}
]

% Original vertices
\node[O] (u) at (0,0) {$u$};
\node[O] (v) at (7,0) {$v$};

% S-vertices
\node[S] (suv) at (2,2) {$s_{uv}$};
\node[S] (svu) at (5,2) {$s_{vu}$};

% Edges involving S-vertices
\draw[thick] (suv)--(svu);
\draw[thick] (u)--(suv);
\draw[thick] (v)--(svu);

% Non-edges
\draw[dashed] (u)--(svu);
\draw[dashed] (v)--(suv);

% Path X_u attached to u
\node[P] (xu1) at (-1,-1) {};
\node[P] (xu2) at (-2,-2) {};
\draw[thick] (u)--(xu1);
\draw[thick] (xu1)--(xu2);
\node at (-2.4,-2.6) {$X_u$};

% Path Y_u attached to u
\node[P] (yu1) at (1,-1) {};
\node[P] (yu2) at (2,-2) {};
\draw[thick] (u)--(yu1);
\draw[thick] (yu1)--(yu2);
\node at (2.4,-2.6) {$Y_u$};

% Path X_v attached to v
\node[P] (xv1) at (6,-1) {};
\node[P] (xv2) at (5,-2) {};
\draw[thick] (v)--(xv1);
\draw[thick] (xv1)--(xv2);
\node at (4.6,-2.6) {$X_v$};

% Path Y_v attached to v
\node[P] (yv1) at (8,-1) {};
\node[P] (yv2) at (9,-2) {};
\draw[thick] (v)--(yv1);
\draw[thick] (yv1)--(yv2);
\node at (9.4,-2.6) {$Y_v$};

\end{tikzpicture}
\caption{The vertices and edges in $G_t$ corresponding to an edge
$uv\in E(G)$. Here $t=8$. Dashed lines denote non-edges. The set $S$ forms a clique,
and each $s_{xy}$ is adjacent to every original vertex except $y$.}
\label{fig:gadget}
\end{figure}

\begin{claim}
 $(G,k)$ is a \yes-instance of
{\sc Vertex Cover}
if and only if $(G_t,k)$ is a \yes-instance of {\sc $P_t$$\mbox{-}$free-VS}  (and of {\sc EVS}/{\sc SVS}/{\sc SEVS}).

\end{claim}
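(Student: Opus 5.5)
Both directions are needed, and the strongest form to aim for is: a vertex cover gives a solution using only shallow exclusive splits, and any general split solution gives a vertex cover. This covers all four variants at once, since SEVS solutions are also EVS, SVS and VS solutions.

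\textbf{Forward direction.} Let $C$ be a vertex cover with $|C|\le k$. For each $v\in C$, perform one exclusive split of $v$. One copy receives the neighbourhood $\{x^v_1,y^v_1\}$, and the other receives all neighbours of $v$ in $O\cup S$. The copy carrying the two pendant paths is then a path on $a+b+1=t-3$ vertices, a separate component, so it is $P_t$-free. The remaining graph consists of:
\begin{itemize}
\item the core $K:=G_t[O\cup S]$, with the split copies standing in for $v\in C$;
\item the pendant paths still attached to the vertices of $X:=O\setminus C$.
\end{itemize}
Since $C$ is a vertex cover, $X$ is independent in $G$, hence a clique in $\overline{G}=G_t[O]$. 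By \Cref{claim:attachment} it then suffices to show that $K$ is $P_5$-free. The paths hanging at a split copy have been removed, so the claim applies with attachment set $X$.

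\textbf{$P_5$-freeness of $K$.} We use the structure of $K$:
\begin{itemize}
\item $S$ is a clique;
\item $G_t[O]\cong\overline{G}$;
\item each $s_{uv}$ misses exactly one vertex of $O$.
\end{itemize}
An induced $P_5$ contains at most two vertices of $S$, and if two, they are consecutive. Case on $|P\cap S|$:
\begin{itemize}
\item $|P\cap S|=0$. Then $P$ is an induced $P_5$ in $\overline{G}$, i.e.\ an induced house in $G$. This is impossible because $G$ is triangle-free.
\item $|P\cap S|\ge 1$. Each $s$-vertex is adjacent to all but one vertex of $O$, so it has at most one nonneighbour on $P$. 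In a $P_5$, every vertex has at least two nonneighbours on the path, except the centre of $P_5$ only when... in fact every vertex of a $P_5$ has at least two nonneighbours among the other four vertices. Hence an $s$-vertex can lie on $P$ only if at most one of its $P$-nonneighbours is in $O$, which forces at least one further $S$-vertex as a nonneighbour. That contradicts $S$ being a clique.
\end{itemize}
I expect this counting argument to dispose of all cases quickly, but it should be checked carefully, since it is the step I am least sure of.

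\textbf{Backward direction.} Suppose at most $k$ arbitrary splits make $G_t$ $P_t$-free. For each edge $uv\in E(G)$, consider the path
\[
x^u_a\cdots x^u_1\,u\,s_{uv}\,s_{vu}\,v\,y^v_1\cdots y^v_b .
\]
Its adjacencies are exactly $u s_{uv}$, $s_{uv}s_{vu}$ and $s_{vu}v$; the non-edges are $uv$ (in $\overline{G}$), $u s_{vu}$ and $v s_{uv}$. It is therefore an induced path on $a+b+4=t$ vertices.

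Call this path $Q_{uv}$. A split of a vertex $w$ can destroy $Q_{uv}$ only if $w\in V(Q_{uv})$. The splits are finite, so we may charge them: each split lies on some original vertex. Define $C'$ as follows:
\begin{itemize}
\item for every split vertex in $\{u\}\cup X_u\cup Y_u$, put $u$ into $C'$;
\item for every split $s_{uv}$ or $s_{vu}$, put $u$ into $C'$.
\end{itemize}
Then $|C'|\le k$.

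The remaining point is that if no vertex of $Q_{uv}$ is split, $Q_{uv}$ survives as an induced $P_t$. Splits of other vertices only remove edges, never edges among unsplit vertices; and since $Q_{uv}$ is induced, no chord appears. Hence every edge $uv$ has some split on $Q_{uv}$, and by the charging rule $C'$ covers $uv$. So $C'$ is a vertex cover of size at most $k$.
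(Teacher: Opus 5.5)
Your proposal is correct and follows the paper's proof. Both use the same single exclusive, shallow split of each $v\in C$ that detaches its two pendant paths into a component on $t-3$ vertices, and the same appeal to the attachment claim with attachment set $O\setminus C$; for the converse, both charge splits to endpoints via the induced paths $x^u_a\cdots x^u_1\,u\,s_{uv}\,s_{vu}\,v\,y^v_1\cdots y^v_b$.

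The only difference is cosmetic and lies in proving the core $P_5$-free. You use one count: an $s$-vertex has at most one non-neighbour among the copies of $O$ and none in the clique $S$, whereas every vertex of a $P_5$ has at least two non-neighbours on it. The paper instead splits into subcases $|Q\cap S|=1$ and $|Q\cap S|=2$. Your count is valid.
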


\begin{claimproof}
    $(\Leftarrow)$ Assume that $(G_t,k)$ is a \yes-instance of {\sc $P_t$$\mbox{-}$free-VS}, i.e., $G_t$ can be transformed into a
$P_t$-free graph using at most $k$ vertex splits. Consider an arbitrary edge $uv\in E(G)$.
By construction,
$
P_{uv}:=
x^u_a,x^u_{a-1},\ldots,x^u_1,
u,s_{uv},s_{vu},v,
y^v_1,\ldots,y^v_b
$
is an induced path on exactly
$
a+b+4=t
$
vertices. Hence every induced copy of $P_t$ of this form must be
destroyed by the splitting process.
Consequently, for every edge $uv\in E(G)$,
at least one vertex of $P_{uv}$ is split. We construct a set $C\subseteq V(G)$ as follows.

\begin{itemize}
\item If an original vertex $v\in O$ is split,
      add $v$ to $C$.
\item If a split vertex lies on one of the two pendant
      paths attached to $v$, add $v$ to $C$.
\item If the split vertex is $s_{uv}$ or $s_{vu}$,
      add one of $u$ and $v$ (chosen arbitrarily) to $C$.
\end{itemize}

Each split contributes at most one vertex to $C$,
and therefore $|C|\le k$. Now let $uv\in E(G)$.
Since the induced path $P_{uv}$ must be destroyed,
some vertex of $P_{uv}$ is split.
By the construction of $C$,
either $u$ or $v$ is added to $C$.
Thus every edge of $G$ is incident with a vertex of $C$,
so $C$ is a vertex cover of $G$.
Hence $(G,k)$ is a \yes-instance of
{\sc Vertex Cover}.
\smallskip

  \noindent $(\Rightarrow)$ Assume that $G$ admits a vertex cover $C$ with $|C|\le k$. For every $v\in C$, split $v$ into two vertices
$v_T$ and $v_R$.
The vertex $v_T$ is made adjacent only to
$x^v_1$ and $y^v_1$,
while $v_R$ inherits all remaining neighbours of $v$. Let $H$ denote the resulting graph and let
$
I:=V(G)\setminus C.
$

\smallskip 
Since $C$ is a vertex cover,
$I$ is an independent set in $G$.
Therefore $I$ is a clique in $\overline G$. Let 
$
R:=I\cup \{\,v_R : v\in C\,\}.
$
Then
$
H[R]\cong \overline G.
$ Since $G$ is house-free, the graph $\overline G$
contains no induced $P_5$.
Hence $H[R]$ is $P_5$-free. 

\smallskip
We next show that $H[R\cup S]$ is $P_5$-free. Suppose, for contradiction, that $Q$ is an induced $P_5$ in
$H[R\cup S]$. Since $H[R]$ is $P_5$-free, the path $Q$ contains at least one
vertex of $S$.
Since $S$ is a clique, $Q$ contains at most two vertices of $S$,
and if it contains two, then they are consecutive on $Q$. Observe that every vertex of $S$ has exactly one non-neighbour in
$R$. Indeed, for every edge $uv\in E(G)$, the vertex $s_{uv}$ is
non-adjacent only to the copy of $v$, namely $v$ if $v\in I$ and
$v_R$ if $v\in C$. First assume that $Q$ contains exactly one vertex $s\in S$.
Then the other four vertices of $Q$ belong to $R$.
Since $s$ has exactly one non-neighbour in $R$, it is adjacent to
at least three of these four vertices.
Hence $s$ has at least three neighbours in $Q$.
However, every vertex of an induced $P_5$ has degree at most two
within the path, a contradiction. Now assume that $Q$ contains exactly two vertices of $S$.
Let these vertices be $s$ and $s'$.
As noted above, they are consecutive on $Q$.
The remaining three vertices of $Q$ belong to $R$.
Since each of $s$ and $s'$ has exactly one non-neighbour in $R$,
each is adjacent to at least two of these three vertices.
Together with the edge $ss'$, it follows that each of $s$ and $s'$
has at least three neighbours in $Q$.
Again, this contradicts that $Q$ is an induced $P_5$. Therefore $Q$ cannot exist, and hence $H[R\cup S]$ is $P_5$-free.

\smallskip 

For every $v\in C$, the vertex $v_T$ together with its
two pendant paths induces a connected graph on
$
a+b+1=t-3
$
vertices.
Consequently, no induced $P_t$ can be contained in such
a component.
It remains to deal with the pendant paths attached to vertices of $I$. We have shown that $H[R\cup S]$ is $P_5$-free. Now applying Claim~\ref{claim:attachment}
with
$
K=H[R\cup S]
$ and $
X=I
$,
we conclude that $H$ is $P_t$-free. Hence $(G_t,k)$ is a \yes-instance of
{\sc $P_t$-free-VS}. The statement for \textsc{$P_t$-free-EVS/SVS/SEVS} follows because all splits used in the construction
are exclusive and shallow.\end{claimproof}

Now, \Cref{thm:pt} follows from \Cref{prop:vc-high-girth} and the fact that $|V(G_t)| = O(|V(G)|+|E(G)|)$. 
\end{proof}
}

\section{Chordal Vertex Splitting}
\label{sec:chordal}
In this section, we prove that \chovs and its variants are \npc, and assuming ETH, there exists no subexponential-time algorithm for the problem. 
We use the \textsc{Chain Vertex Deletion} problem as our starting problem for the reduction. Recall that a bipartite graph is a \emph{chain graph} if and only if it has
no induced $2K_2$~\cite[Lemma 1]{yannakakis1981computing}.  We define the problem as follows.
\begin{problem}
	\problemtitle{{\textsc{Chain Vertex Deletion}} ({\sc CVD})}
	\probleminput{A graph $G$ and a nonnegative integer $k$.}
	\problemquestion{Does there exist a set $S\subseteq V(G)$ with $|S|\leq k$ such that $G-S$ is a
chain graph?}
\end{problem}

Yannakakis~\cite{DBLP:journals/siamcomp/Yannakakis81a} proved the NP-completeness of {\sc CVD} by a reduction from \textsc{3-SAT}.
The reduction generates a graph with number of vertices linear in the number of variables and the number of clauses of the \textsc{3-SAT}
instance. Thus we get \Cref{prop:cvd}.
%We use the following result about \textsc{Chain Vertex Deletion}.

\begin{proposition}{\rm \cite[Lemma~4]{DBLP:journals/siamcomp/Yannakakis81a}}
    \label{prop:cvd}
    {\sc Chain Vertex Deletion} is \npc on bipartite graphs. Moreover, assuming {\sf ETH}, the problem cannot be solved in time $2^{o(n)}$.
\end{proposition}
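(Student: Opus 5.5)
Since \Cref{prop:cvd} is quoted from Yannakakis~\cite{DBLP:journals/siamcomp/Yannakakis81a}, I would not design a new reduction. Instead I would reread his \textsc{3-SAT} reduction carefully enough to extract the two facts we need: it produces a bipartite graph, and its size is linear in the size of the formula.

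Membership in {\sf NP} is immediate. A candidate set $S$ of at most $k$ vertices can be checked in polynomial time, since we only need to verify that $G-S$ is bipartite and contains no induced $2K_2$. By \cite[Lemma~1]{yannakakis1981computing}, this is exactly the condition for $G-S$ to be a chain graph.

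For hardness, I would proceed in three steps.
\begin{enumerate}
\item Revisit the reduction in \cite[Lemma~4]{DBLP:journals/siamcomp/Yannakakis81a}. Given a \textsc{3-SAT} formula $\varphi$ with $N$ variables and $M$ clauses, it builds a bipartite graph $G_\varphi$ and a budget $k_\varphi$. The construction uses one constant-size gadget per variable, one per clause, and a constant number of connector vertices or edges per literal occurrence. The correctness argument ($\varphi$ is satisfiable if and only if $G_\varphi$ has a deletion set of size at most $k_\varphi$ leaving a chain graph) I would take from that paper as is.
\item Check that both colour classes of $G_\varphi$ are given explicitly by the construction, so the output is bipartite. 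Then count vertices to confirm $|V(G_\varphi)| = O(N+M)$. The main point is that every literal occurrence contributes only $O(1)$ vertices; the number of edges may be larger, but it does not matter because the stated bound is only in terms of $n$.
\item Derive the ETH bound. By the Sparsification Lemma of Impagliazzo, Paturi, and Zane, assuming {\sf ETH}, \textsc{3-SAT} cannot be solved in time $2^{o(N+M)}$. Suppose \textsc{Chain Vertex Deletion} on bipartite graphs could be solved in time $2^{o(n)}$. Composing with the reduction, which has $n=O(N+M)$, would then decide $\varphi$ in time $2^{o(N+M)}$, contradicting {\sf ETH}.
\end{enumerate}

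The step I expect to be the main obstacle is the size accounting in step~2. Yannakakis's paper is about {\sf NP}-completeness and does not state the gadget sizes explicitly. If some gadget turned out to be quadratic (for example, complete bipartite connections between a clause gadget and all variable gadgets), I would first try to replace it with the standard local variant, attaching each clause only to its three literals, and re-check that no unintended induced $2K_2$ appears. If that failed, I would fall back to stating the weaker bound that the reduction actually yields.
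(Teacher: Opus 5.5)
Your proposal is correct and follows essentially the same route as the paper. The paper also takes Yannakakis's \textsc{3-SAT} reduction as given, notes that it outputs a bipartite graph whose number of vertices is linear in the number of variables plus clauses, and derives the $2^{o(n)}$ lower bound under \textsf{ETH} from that. Your explicit use of the Sparsification Lemma, to obtain a $2^{o(N+M)}$ lower bound for \textsc{3-SAT} from \textsf{ETH}, supplies a step the paper leaves implicit.
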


Since we can assume that $k\leq n$ for {\sc Chain Vertex Deletion}, a lower bound of $2^{o(n)}$ implies a lower bound of $2^{o(k)}\cdot n^{O(1)}$.
Now we are ready to prove the hardness of \chovs.

To prove the hardness for \chovs, we reduce from \textsc{Chain Vertex Deletion}. Given a bipartite graph
\(B=(X,Y;E_B)\), construct \(H\) by making both \(X\) and \(Y\) cliques and
keeping the same edges between \(X\) and \(Y\). Then an induced \(2K_2\) in
\(B\) corresponds exactly to an induced \(C_4\) in \(H\); hence \(H\) is
chordal if and only if \(B\) is a chain graph.
If \(S\) is a chain vertex deletion set of \(B\), then \(H-S\) is chordal.
Instead of deleting vertices of \(S\), we split each of them once, separating
its \(X\)-side and \(Y\)-side adjacencies. The new copies are simplicial over
cliques, and after eliminating them the remaining graph is \(H-S\). Thus
\(H\) can be made chordal using at most \(|S|\) splits.
Conversely, if at most \(k\) splits make \(H\) chordal, let \(S\) be the set
of split vertices. If \(B-S\) contained an induced \(2K_2\), the corresponding
four unsplit vertices would still induce a \(C_4\) in the final graph, a
contradiction. Hence \(B-S\) is a chain graph.

\ifthenelse{\boolean{shortver}}{}{Now we give the formal proof.}

\theoremchordal*

%\ifthenelse{\boolean{shortver}}{}{
\begin{proof}
We prove the statement for \chovs. The statements for the variants follow from the fact that
all our splits are shallow and exclusive.
We give a polynomial-time reduction from
\textsc{Chain Vertex Deletion}. Let $(B,k)$ be an instance of
\textsc{Chain Vertex Deletion}, where $B=(X,Y;E_B)$
is bipartite with bipartition $(X, Y)$ and 
$E_B$ is the set of edges.
We construct a graph $H$ from $B$ as follows. The graph $H$ contains all the vertices and edges in $B$. 
Additionally, the set $X$ forms a clique and the set $Y$ forms a clique in $H$.
%The vertex set of $H$ is $V(H)=X\cup Y$.
%We make $X$ a clique and $Y$ a clique. For every $x\in X$ and $y\in Y$, we
%add the edge $xy$ in $H$ if and only if $xy\in E_B$. 
%We first observe the following about the constructed graph~$H$. 
%This is the same construction used by Yannakakis~\cite{yannakakis1981computing} for a reduction from \textsc{Chain Completion}
%to \textsc{Chordal Completion (Minimum Fill-In)}.
The following observation about the constructed graph~$H$ follows from the work of Yannakakis~\cite{yannakakis1981computing}, who used
it to show a reduction from \textsc{Chain Completion}
to \textsc{Chordal Completion (Minimum Fill-In)}.

\begin{claim}{\rm \cite[Lemma~2]{yannakakis1981computing}}
\label{clm:chain-chordal}
$H$ is chordal if and only if $B$ is a chain graph.
\end{claim}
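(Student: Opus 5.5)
The claim is that $H$ (with $X$ and $Y$ made cliques) is chordal if and only if $B$ is a chain graph. I would prove both directions directly via the forbidden-subgraph characterizations: $B$ is a chain graph iff it has no induced $2K_2$, and a graph is chordal iff it has no induced cycle of length at least $4$.

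\textbf{$B$ not a chain graph $\Rightarrow$ $H$ not chordal.} Suppose $B$ contains an induced $2K_2$ with edges $x_1y_1$ and $x_2y_2$. In a bipartite graph each edge has one endpoint on each side, so we may take $x_1,x_2\in X$ and $y_1,y_2\in Y$, with $x_1y_2,x_2y_1\notin E_B$. In $H$ we have $x_1x_2\in E(H)$ and $y_1y_2\in E(H)$ because $X$ and $Y$ are cliques. The cross adjacencies in $H$ are the same as in $B$. Hence $x_1y_1y_2x_2x_1$ is an induced $C_4$ in $H$, so $H$ is not chordal.

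\textbf{$H$ not chordal $\Rightarrow$ $B$ not a chain graph.} Let $C$ be an induced cycle of length $\ell\ge 4$ in $H$.
\begin{itemize}
\item $C$ has at most two vertices in $X$: three vertices of the clique $X$ on $C$ would span a triangle, and an induced cycle of length at least $4$ contains no triangle. Likewise $C$ has at most two vertices in $Y$.
\item Therefore $\ell=4$, with exactly two vertices in each of $X$ and $Y$.
\item The two $X$-vertices are adjacent in $H$, so they are consecutive on $C$; the same holds for the two $Y$-vertices. So $C=x_1x_2y_2y_1x_1$ with $x_1,x_2\in X$ and $y_1,y_2\in Y$.
\item $C$ contains the cross edges $x_2y_2$ and $y_1x_1$. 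Since $C$ is induced, its chords $x_1y_2$ and $x_2y_1$ are absent.
\item Cross adjacencies in $H$ coincide with those in $B$, and $X$, $Y$ are independent in $B$. Hence $\{x_1,x_2,y_1,y_2\}$ induces the $2K_2$ formed by $x_1y_1$ and $x_2y_2$ in $B$.
\end{itemize}
So $B$ is not a chain graph.

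The only step needing care is the second direction: ruling out induced cycles longer than $4$ and placing the clique edges correctly on the $4$-cycle. Both follow from the fact that $V(H)$ is covered by two cliques, so any induced cycle of length at least $4$ meets each clique in at most two consecutive vertices.
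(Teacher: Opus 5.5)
Your proof is correct and follows essentially the same route as the paper, which cites Yannakakis's Lemma~2 and sketches exactly this correspondence. Because $V(H)$ is covered by the two cliques $X$ and $Y$, every hole of $H$ is an induced $C_4$ with two consecutive vertices on each side, and these $C_4$'s correspond exactly to the induced $2K_2$'s of $B$.
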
%\skr{I think we can remove this proof and cite~\cite[Lemma 2]{yannakakis1981computing}. If we agree with this then we just have to uncomment the lines that I've written above.}

% \begin{proof}[Proof of the claim]
% Since both $X$ and $Y$ are cliques in $H$, every induced cycle of length at
% least four in $H$ contains at most two vertices from $X$ and at most two
% vertices from $Y$. Hence every hole of $H$, if one exists, is an induced
% $C_4$ with two vertices in $X$ and two vertices in $Y$.

% Let $x_1,x_2\in X$ and $y_1,y_2\in Y$. The vertices
% $\{x_1,x_2,y_1,y_2\}$ induce a $C_4$ in $H$ precisely when, up to relabelling, $x_1y_1,\ x_2y_2\in E(H)$ and $x_1y_2,\ x_2y_1\notin E(H)$.
% By the definition of $H$, this is equivalent to $x_1y_1,\ x_2y_2\in E_B$ and $x_1y_2,\ x_2y_1\notin E_B$.
% Thus $\{x_1,x_2,y_1,y_2\}$ induces a $2K_2$ in $B$.

% Therefore $H$ has a hole if and only if $B$ has an induced $2K_2$. Since a
% bipartite graph is a chain graph if and only if it is $2K_2$-free, the claim
% follows.
% \end{proof}

% \begin{proof}[Proof of the claim]
% Since $X$ and $Y$ are cliques in $H$, every induced cycle of length at least
% four uses at most two vertices from each side and therefore is an induced
% $C_4$. Consequently, $H$ is chordal if and only if it contains no induced
% $C_4$. For any $x_1,x_2\in X$ and $y_1,y_2\in Y$, the graph $H[\{x_1,x_2,y_1,y_2\}]$ is an induced $C_4$ exactly when $B[\{x_1,x_2,y_1,y_2\}]$ is an induced $2K_2$. Thus $H$ contains an induced
% $C_4$ if and only if $B$ contains an induced $2K_2$.
% Since $B$ is bipartite, it is a chain graph if and only if it is
% $2K_2$-free. Therefore $H$ is chordal if and only if $B$ is a chain graph.
% \end{proof}

We now prove the correctness of the reduction, i.e., we show that~$(B, k)$ is a \yes-instance of \textsc{Chain Vertex Deletion} if and only if~$(H, k)$ is a \yes-instance of \chovs.

\medskip

 $(\Rightarrow).$ Suppose $(B,k)$ is a \yes-instance of \textsc{Chain Vertex Deletion}. Let
$S\subseteq V(B)$ be such that $|S|\leq k$ and $B-S$ is a chain graph. Let~$S_X:=S\cap X\ \text{and}\ S_Y:=S\cap Y$.
By the above claim, $H-S$ is chordal.
We show that $H$ can be made chordal using at most $|S|$ vertex splits. In fact, we
split every vertex of $S$ exactly once to make the resulting graph chordal.

First, split each $x\in S_X$ into two non-adjacent vertices
$x^X$ and $x^Y$. We then define the adjacency of $x^X$ and $x^Y$ as follows:
\[
N(x^X)=(X \setminus S_X) \cup \{v^X \colon v \in S_X \setminus \{x\}\},
\qquad
N(x^Y)=N_H(x)\cap Y.
\] 

Observe that this is a valid vertex split since every original edge incident with $x$ is represented by
exactly one edge incident with $x^X$ or $x^Y$ in the final graph.
%$N_H(x)$ is the (disjoint) union of the two sets defined above.

After all vertices of $S_X$ have been split, we split each vertex $y\in S_Y$ into two non-adjacent vertices $y^Y$ and $y^X$ as follows.
The vertex $y^Y$ receives all neighbors of $y$ on the $Y$-side, and also
receives every already-created vertex $x^Y$ such that $xy\in E(H)$, where
$x\in S_X$. The vertex $y^X$ receives the remaining neighbors of $y$, namely
the neighbors of $y$ in $X\setminus S_X$. %Again this is a valid split.
%We similarly split each vertex~$y \in S_Y$ into two non-adjacent vertices~$y^Y$ and~$y^X$ and define the adjacencies as follows:
Formally, we define:
\[
\begin{aligned}
N(y^Y) &= (Y \setminus S_Y)
\cup \{v^Y \colon v \in S_Y \setminus \{y\}\} \cup \{x^Y \colon x \in N_H(y) \cap S_X\}, \text{and}\\
N(y^X) &= N_H(y) \cap (X \setminus S_X)
%\cup \{x^X \colon \text{for each } x \in N_H(y) \cap S_X\}
.
\end{aligned}
\]

Again, it is clear that the split defined above is a valid split. Let $\widehat{H}$ be the graph obtained after all these splits. We prove that
$\widehat{H}$ is chordal by giving a perfect elimination ordering.
Recall that a perfect elimination ordering of a chordal graph $J$ is an
ordering $v_1, v_2,\ldots, v_{|V(J)|}$ of its vertices such that for every
vertex $v_i$, the neighbors of $v_i$ with higher indices form a clique, i.e., 
$v_i$ is a simplicial vertex (a vertex whose neighborhood forms a clique) 
in the graph induced by $\{v_i, v_{i+1}, \ldots, v_{|V(J)|}\}$. It is well-known 
that a graph is chordal if and only if it admits a perfect elimination ordering~\cite{Rose70, RoseTL76}.

First consider a vertex $x^Y$ with $x\in S_X$. Its neighborhood in
$\widehat{H}$ is contained in $(Y\setminus S_Y)\cup \{y^Y : y\in S_Y\}$.
This set is a clique. Indeed, by construction of~$\widehat{H}$, the set~$Y\setminus S_Y$ is a clique, the set~$\{y^Y \colon y \in S_Y\}$ forms a clique, and each $y^Y$ is adjacent to every vertex of~$Y\setminus S_Y$. Hence every vertex $x^Y$, $x\in S_X$, is simplicial.
Similarly, for each $y\in S_Y$, the vertex $y^X$ has neighborhood contained
in $X\setminus S_X$, which is a clique. Hence every such $y^X$ is simplicial.
We eliminate first all vertices of the form $x^Y$, $x\in S_X$, and all
vertices of the form $y^X$, $y\in S_Y$.

After deleting these simplicial vertices, the remaining graph consists of
$H-S$, together with vertices $x^X$ for $x\in S_X$ whose whole neighborhood is the clique $(X\setminus S_X)\cup \{v^X : v\in S_X\setminus \{x\}\}$,
and vertices $y^Y$ for $y\in S_Y$ whose whole neighborhood 
%in~$H-S$ 
is the clique $(Y\setminus S_Y)\cup \{v^Y : v\in S_Y\setminus\{y\}\}$.
Since $H-S$ is chordal, and the remaining vertices $x^X$ and $y^Y$ are
simplicial over cliques of $H-S$ together with previously introduced
simplicial vertices, the whole graph $\widehat{H}$ is chordal.

Thus $H$ can be made chordal using exactly $|S|\leq k$ vertex splits.
Therefore $(H,k)$ is a \yes-instance of \chovs.

 $(\Leftarrow).$
Suppose $(H,k)$ is a \yes-instance of \chovs. Then
there is a sequence of at most $k$ vertex splits that transforms $H$ into a
chordal graph. Let $S=\{v\in V(H): v \text{ is split at least once}\}$. Clearly, $|S|\leq k$.
We claim that $B-S$ is a chain graph. Suppose not. Then $B-S$ contains an
induced $2K_2$. Hence there exist vertices $x_1,x_2\in X\setminus S$ and $y_1,y_2\in Y\setminus S$
such that %, up to relabelling, 
$x_1y_1,\ x_2y_2\in E_B$ and $x_1y_2,\ x_2y_1\notin E_B$.
By the construction of $H$, this means $x_1y_1,\ x_2y_2\in E(H)$
and $x_1y_2,\ x_2y_1\notin E(H)$.
Moreover, since $X$ and $Y$ are cliques in $H$, we have $x_1x_2\in E(H)$ and $y_1y_2\in E(H)$.
Therefore the four vertices $\{x_1, y_1, y_2, x_2\}$ induce a $C_4$ in %$H$, which is a contradiction.
$H-S$. Since we cannot change the adjacency (or non-adjacency) between vertices of~$H-S$ by splitting vertices of~$S$, the vertices~$\{x_1, y_1, y_2, x_2\}$ induce a~$C_4$ even after splitting the vertices of~$S$, %contradicting the fact that~$S$ is a solution to~\chovs.
contradicting that the final graph obtained by the split sequence is chordal.

%None of these four vertices belongs to $S$, so none of them is split. Vertex
%splitting vertices outside this set cannot delete edges among these four
%vertices and cannot add chords between them. Therefore the same four vertices
%still induce a $C_4$ in the final graph, contradicting the assumption that
%the final graph is chordal.

Hence $B-S$ is a chain graph. Since $|S|\leq k$, the instance $(B,k)$ is a
\yes-instance of \textsc{Chain Vertex Deletion}. Now the statements follow from \Cref{prop:cvd}. 
In particular, the {\sf ETH}-based lower bound uses the fact that the number of vertices in $H$
is the same as that of $B$, and the fact that the parameter is unchanged. This completes the proof.
\end{proof}
%}
\section{Unit Interval Vertex Splitting}
\label{sec:unit-interval}
In this section we prove the hardness of \uivs and its shallow variant.
We first recall some basic definitions from \cite{ardevol2025unitmultiple}. Given a family $\mathcal{F}$ of
unit-length closed intervals on the real line, the intersection graph
$G=\Omega(\mathcal{F})$ of $\mathcal{F}$ is defined as follows: for each
interval in $\mathcal{F}$, there is a corresponding vertex in $G$, and
two vertices of $G$ are adjacent if and only if the corresponding
intervals intersect. A graph $G$ is a \emph{unit interval graph} if there
is a family $\mathcal{F}$ of unit-length intervals such that
$\Omega(\mathcal{F})=G$.

A \emph{unit multiple-interval} is a union of pairwise disjoint unit
intervals. The intersection graph of a family of unit multiple-intervals
is defined analogously. A graph $G$ is a \emph{unit multiple-interval
graph} if there is a family $\mathcal{F}$ of unit multiple-intervals such
that $\Omega(\mathcal{F})=G$. Clearly, unit interval graphs form a
subclass of unit multiple-interval graphs.

A colored graph is a pair $(G,\gamma)$, where
$\gamma:V(G)\rightarrow \{\mathsf{black},\mathsf{white}\}$. A colored
graph $(G,\gamma)$ is a \emph{colored unit $2$-interval graph} if it has
a unit multiple-interval representation satisfying the following
conditions.
\begin{itemize}
    \item Each $\black$ vertex is represented by one unit interval.
    \item Each $\white$ vertex is represented by a union of two disjoint unit
    intervals.
\end{itemize}

Such a representation is called a \emph{colored unit $2$-interval
representation}.

\begin{problem}
    \problemtitle{{\cutir}}
    \probleminput{A colored graph $(G,\gamma)$.}
    \problemquestion{Decide whether $(G,\gamma)$ is a colored unit
    $2$-interval graph.}
\end{problem}

Given a colored graph $(G,\gamma)$, we say that a pair $(S,f)$, where
$S$ is a graph and $f:V(S)\rightarrow V(G)$ is a function, is a
\emph{split} of $(G,\gamma)$ if the following conditions hold.
\begin{enumerate}
    \item $|f^{-1}(v)|=1$ for every $v\in V(G)$ with
    $\gamma(v)=\mathsf{black}$.
    \item $|f^{-1}(v)|=2$ for every $v\in V(G)$ with
    $\gamma(v)=\mathsf{white}$.
    \item For every vertex $v$ of $G$, the set $f^{-1}(v)$ is an
    independent set in $S$.
    \item For every edge $st\in E(S)$, we have $f(s)f(t)\in E(G)$.
    \item For every edge $uv\in E(G)$, there exist
    $s\in f^{-1}(u)$ and $t\in f^{-1}(v)$ such that $st\in E(S)$.
\end{enumerate}

\begin{proposition}{\rm \cite[Lemma~6]{ardevol2025unitmultiple}}
\label{prop:ui}
A colored graph $(G,\gamma)$ is a colored unit $2$-interval graph if and
only if there exists a unit interval graph $S$ and a function
$f:V(S)\rightarrow V(G)$ such that $(S,f)$ is a split of $(G,\gamma)$.
\end{proposition}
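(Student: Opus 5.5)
The plan is to prove both directions directly, by passing between a colored unit $2$-interval representation of $(G,\gamma)$ and a unit interval representation of $S$. In both directions the individual unit intervals are taken as the vertices of $S$, and $f$ records which vertex of $G$ each interval belongs to. No earlier result of the paper is needed; the argument only unwinds the definitions of a split and of the intersection graph.

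($\Rightarrow$) Fix a colored unit $2$-interval representation $\mathcal{F}$ of $(G,\gamma)$. Let $S$ have one vertex for each individual unit interval occurring in $\mathcal{F}$: one for each black vertex and two for each white vertex. Vertices are indexed by the intervals themselves, so two coinciding intervals still give distinct vertices. Two vertices of $S$ are adjacent exactly when their intervals intersect, and $f$ sends an interval to the vertex of $G$ it represents. By construction $S$ is the intersection graph of a family of unit intervals, so it is a unit interval graph. I would then check the five conditions of a split in turn.
\begin{enumerate}
\item[(1),(2)] These hold by the coloring constraints on $\mathcal{F}$: black vertices get one interval, white vertices get two.
\item[(3)] The two intervals of a white vertex are disjoint, so $f^{-1}(v)$ is independent in $S$.
\item[(4)] If intervals $s$ and $t$ intersect with $f(s)\neq f(t)$, then the multiple-intervals of $f(s)$ and $f(t)$ intersect, so $f(s)f(t)\in E(G)$. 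By (3), $f(s)=f(t)$ cannot occur for an edge $st$.
\item[(5)] If $uv\in E(G)$, the two multiple-intervals intersect, so some component interval of $u$ meets some component interval of $v$.
\end{enumerate}

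($\Leftarrow$) Given a unit interval graph $S$ with a split $f$, fix a unit interval representation $\{I_s\}_{s\in V(S)}$ of $S$. Represent each $v\in V(G)$ by the union of the intervals $I_s$ with $s\in f^{-1}(v)$.
\begin{enumerate}
\item[(i)] By conditions (1) and (2), a black vertex receives one unit interval and a white vertex receives two.
\item[(ii)] By condition (3), the two preimages of a white vertex are nonadjacent in $S$, so their intervals are disjoint. Hence each white vertex receives a genuine union of two disjoint unit intervals, as the definition requires.
\item[(iii)] If $uv\in E(G)$, condition (5) gives adjacent preimages $s,t$. Then $I_s\cap I_t\neq\emptyset$, so the representatives of $u$ and $v$ intersect.
\item[(iv)] Conversely, if the representatives of $u\neq v$ intersect, some $I_s$ meets some $I_t$ with $f(s)=u$ and $f(t)=v$. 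Then $st\in E(S)$, and condition (4) gives $uv\in E(G)$.
\end{enumerate}
Thus the intersection graph of the family is exactly $G$, and the representation respects the coloring.

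The main point requiring care is the disjointness requirement for white vertices. In both directions it is precisely condition (3), the independence of $f^{-1}(v)$, that makes the two notions match. The remaining checks are bookkeeping.
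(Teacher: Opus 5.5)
Your proof is correct and complete. The paper gives no proof of its own: it imports the statement as Lemma~6 of \cite{ardevol2025unitmultiple}. Your argument is the natural definitional proof of that lemma. You take the individual unit intervals as the vertices of $S$ (indexed by occurrence, e.g.\ by pairs $(v,i)$, rather than by the interval sets, so that coinciding intervals of different vertices stay distinct). You use condition~(3) exactly to match the disjointness of the two intervals of a white vertex, and all remaining checks go through as you wrote them, including $s\neq t$ in step~(iv) since $f(s)\neq f(t)$.
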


Mart{\'i}nez, Rizzi, Sikora, and Vialette~\cite{ardevol2025unitmultiple}
prove the \textsf{NP}-completeness of \cutir\ by a reduction from the
restricted variant of \textsc{3SAT} stated below. The hardness of the latter
follows from a standard reduction from \textsc{3SAT}.

%Mart{\'i}nez, Rizzi, Sikora, and Vialette~\cite{ardevol2025unitmultiple}
%prove the \textsf{NP}-completeness of \cutir\ by a reduction from a
%variant of {\sc 3SAT} 
%as mentioned in \Cref{prop:3sat}, the hardness of the same follows 
%from standard reduction from {\sc 3SAT}.

\begin{proposition}{\rm \cite[Lemma~2.1]{DBLP:journals/algorithmica/FellowsKMP95}}
    \label{prop:3sat}
    The problem of deciding the satisfiability of boolean formulas in conjunctive 
    normal form satisfying the following properties is \npc. Further, the problem cannot 
    be solved in time $2^{o(n+m)}$, assuming \ETH, where $n$ is the number of variables
    and $m$ is the number of clauses in the input formula.
    \begin{itemize}
        \item every clause has either two or three literals over distinct variables,
        \item every variable appears positively exactly once in a 3-clause,
positively exactly once in a 2-clause, and negatively exactly once in a
2-clause,
        \item every clause with
three literals contains only positive literals.
    \end{itemize}
\end{proposition}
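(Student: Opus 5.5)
The plan is to reduce from \textsc{3SAT} after applying the Sparsification Lemma~\cite{DBLP:journals/jcss/ImpagliazzoP01}. Under \ETH, this gives a \textsc{3SAT} variant with no $2^{o(n+m)}$ algorithm, and we may assume $m=O(n)$. The reduction will have linear size, so the lower bound transfers; membership in \textsf{NP} is clear.

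\textbf{Preprocessing.} First we normalize the formula so that every clause has exactly three literals over distinct variables. Clauses with a repeated variable are simplified, and shorter clauses are padded with fresh variables forced to false. Next we apply pure-literal elimination, so that every variable occurs both positively and negatively. Each step changes the size only linearly and preserves satisfiability.

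\textbf{Construction for a variable.} Let $x$ have $p\ge 1$ positive and $q\ge 1$ negative occurrences. Introduce fresh variables $x_1,\dots,x_p$ and $u_1,\dots,u_q$, one per occurrence. Replace the $i$-th positive occurrence by $x_i$ and the $j$-th negative occurrence by $u_j$; now every 3-clause is all-positive and each new variable occurs positively in exactly one 3-clause. Then add the 2-clauses
\[
(\neg x_i\vee x_{i+1})\ (1\le i<p),\qquad (\neg x_p\vee\neg u_1),\qquad (\neg u_{j+1}\vee u_j)\ (1\le j<q),\qquad (u_q\vee x_1).
\]
A direct count shows each $x_i$ and each $u_j$ occurs exactly once positively and once negatively among these 2-clauses, including the degenerate cases $p=1$ or $q=1$. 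For example, $x_1$ occurs positively in $(u_q\vee x_1)$ and negatively in $(\neg x_1\vee x_2)$, or in $(\neg x_1\vee\neg u_1)$ if $p=1$. All literals in each clause are over distinct variables.

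\textbf{Correctness.} Given a satisfying assignment of the original formula, set all $x_i:=x$ and all $u_j:=\neg x$. Every gadget clause is then satisfied, and every 3-clause inherits a true literal. Conversely, the gadget clauses give the implications $x_i\Rightarrow x_p$, $x_p\Rightarrow\neg u_1$ and $u_j\Rightarrow u_1$. Hence no $x_i$ and no $u_j$ are simultaneously true. Set $x$ true if and only if some $x_i$ is true. Then every true $x_i$ witnesses the literal $x$, and every true $u_j$ forces all $x_i$ false, which witnesses $\neg x$. So each original clause is satisfied.

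\textbf{Size and main obstacle.} The new formula has $O(n+m)$ variables and clauses, which yields the $2^{o(n+m)}$ lower bound. The main obstacle is the exact occurrence counting: every copy must appear exactly once in each of the three roles. That is why the closing clause $(u_q\vee x_1)$ is used in place of a second ``at least one'' link. This clause is harmless, since it only needs to be satisfied when the $u$'s are all false, in which case the $x$'s can be set true.
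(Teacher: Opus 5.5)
Your proposal is correct and is essentially the standard reduction from \textsc{3SAT} that the paper defers to via its citation of Fellows et al.: one fresh copy per occurrence, the negative copies renamed so that every 3-clause becomes positive, the copies linked by 2-clauses, and sparsification together with the linear size of the construction giving the \ETH bound. Your gadget clauses in fact form the implication cycle $x_1\Rightarrow\cdots\Rightarrow x_p\Rightarrow\neg u_1\Rightarrow\cdots\Rightarrow\neg u_q\Rightarrow x_1$, so the closing clause $(u_q\vee x_1)$ turns the chain into the usual equivalence cycle forcing $x_1=\cdots=x_p=\neg u_1=\cdots=\neg u_q$ in every satisfying assignment, although, as you note, only the forward implications are needed for correctness.
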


% We recall the part of their construction that we need. \sbj{Is there any need of  describing their construction?}
% \srb{Yes. The restrictions in the next Proposition is not explicitly mentioned in the paper. The construction justifies the same.}
% \sbj{For each
% $2$-clause $C_\alpha=(x_i^r\vee x_j^s)$. what it means?}
% \srb{corrected}
\ifthenelse{\boolean{shortver}}{Their reduction proves the hardness for instances satisfying the property stated in \Cref{prop:cutir}. }{
For each variable
$x_i$, the constructed graph contains six vertices
$A_i,B_i,C_i,x_i^1,x_i^2,x_i^N$, where $A_i,B_i,C_i$ are $\black$ and
$x_i^1,x_i^2,x_i^N$ are $\white$. The sets $\{A_i,B_i,C_i\}$ and
$\{x_i^1,x_i^2,x_i^N\}$ are complete to each other. Furthermore,
$A_iC_i$, $B_iC_i$, and $x_i^1x_i^2$ are edges, while $A_iB_i$ is not an
edge. The $\black$ vertices $A_i,B_i,C_i$ have no adjacency outside this
variable gadget. The vertices $x_i^1,x_i^2,x_i^N$ correspond to the three
occurrences of the variable $x_i$, namely the positive occurrence in a
$3$-clause, the positive occurrence in a $2$-clause, and the negative
occurrence in a $2$-clause, respectively. These are the only $\white$
vertices in the constructed graph.

For each $3$-clause $C_\alpha=(x_i\vee x_j\vee x_k)$, the edges
$x_i^1x_j^1$, $x_j^1x_k^1$, and $x_k^1x_i^1$ are added. For each
$2$-clause $C_\alpha$, %where $r,s\in\{2,N\}$, 
two new
$\black$ vertices $L_{i,j}^{\alpha}$ and $p_{i,j}^{\alpha}$, and the edge $p_{i,j}^{\alpha}L_{i,j}^{\alpha}$ are introduced.
Additionally, if $C_\alpha = (x_i\vee x_j)$, then the edges $x_i^2 x_j^2$, $x_i^2 L_{i,j}^{\alpha}$,
$x_j^2 L_{i,j}^{\alpha}$ are introduced; if $C_\alpha = (\overline{x_i}\vee x_j)$, then  $x_i^N x_j^2$, $x_i^N L_{i,j}^{\alpha}$,
$x_j^2 L_{i,j}^{\alpha}$ are introduced; if $C_\alpha = ({x_i}\vee \overline{x_j})$, then  $x_i^2 x_j^N$, $x_i^2 L_{i,j}^{\alpha}$,
$x_j^N L_{i,j}^{\alpha}$ are introduced; and if $C_\alpha = (\overline{x_i}\vee \overline{x_j})$, then  $x_i^N x_j^N$, $x_i^N L_{i,j}^{\alpha}$,
$x_j^N L_{i,j}^{\alpha}$ are introduced.
%and the edges $x_i^r x_j^s$, $x_i^r L_{i,j}^{\alpha}$,
%$x_j^s L_{i,j}^{\alpha}$, and
%$p_{i,j}^{\alpha}L_{i,j}^{\alpha}$ are added, where $r,s\in \{2,N\}$.

It follows from the construction that every $\white$ vertex is adjacent to
two non-adjacent $\black$ vertices, namely the corresponding vertices $A_i$
and $B_i$. Moreover, every $\white$ vertex has a $\white$ neighbor outside its
variable gadget: if the vertex corresponds to a literal in a $3$-clause,
then this neighbor is one of the other two literal vertices in the
triangle added for that clause; and if the vertex corresponds to a
literal in a $2$-clause, then this neighbor is the other literal vertex
of that clause. Since $A_i$ and $B_i$ have no adjacency outside their
variable gadget, this $\white$ neighbor is non-adjacent to both $A_i$ and
$B_i$. Hence every $\white$ vertex is the center of an induced claw
containing two $\black$ leaves and one $\white$ leaf. The \ETH-based lower bound
in \Cref{prop:cutir} follows from the fact that the number of vertices plus edges 
in the constructed graph is linear in the number of variables plus number of clauses in the input
formula.
}

\begin{proposition}{\rm \cite[Theorem~3 and its proof]{ardevol2025unitmultiple}}
\label{prop:cutir}
{\sc \cutir} is \npc\ even for graphs $G$ in which, for every $\white$ vertex
$v$, there exist three mutually non-adjacent neighbors of $v$, two of
which are $\black$ and one of which is $\white$. Moreover, the problem cannot be 
solved in time $2^{o(n+m)}$, unless \ETH fails.
\end{proposition}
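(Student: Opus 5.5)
We would obtain this statement by re-reading the \textsf{NP}-hardness reduction of Mart{\'i}nez, Rizzi, Sikora, and Vialette~\cite{ardevol2025unitmultiple} rather than building a new one. The plan has three ingredients: membership in \textsf{NP}, checking the claw property on the instances their reduction produces, and a size count for the \ETH bound.

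\textbf{Membership in \textsf{NP}.} We use \Cref{prop:ui}. A certificate is a graph $S$ and a map $f$ with $|V(S)|\le 2|V(G)|$. Conditions 1--5 in the definition of a split are checked directly. Whether $S$ is a unit interval graph is then decided by a known polynomial-time recognition algorithm.

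\textbf{Hardness via the source problem.} We reduce from the restricted satisfiability problem of \Cref{prop:3sat}, using the construction of~\cite{ardevol2025unitmultiple} exactly as recalled above: per variable the gadget on $A_i,B_i,C_i,x_i^1,x_i^2,x_i^N$, a triangle on the literal vertices of each $3$-clause, and the pair $L^\alpha_{i,j},p^\alpha_{i,j}$ with the corresponding edges for each $2$-clause. We take their correctness proof (satisfiable formula if and only if colored unit $2$-interval graph) as given. Our work is only to verify the extra structural property of the output graph.

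\textbf{The claw property.} Fix a white vertex $w\in\{x_i^1,x_i^2,x_i^N\}$. Its neighbors $A_i$ and $B_i$ are black and nonadjacent by construction. Because each variable occurs exactly once in each of the three roles, $w$ corresponds to exactly one literal occurrence. That occurrence gives $w$ a white neighbor $w'$ in another variable gadget: a partner in the $3$-clause triangle if $w=x_i^1$, and the other literal vertex of the $2$-clause otherwise. Clauses use distinct variables, so $w'$ belongs to a different gadget $j\neq i$. The black vertices $A_i,B_i$ have no neighbors outside gadget $i$, so $w'$ is adjacent to neither of them. Hence $\{A_i,B_i,w'\}$ is the required triple of mutually nonadjacent neighbors, two black and one white.

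\textbf{\ETH\ bound.} We count sizes. Each variable contributes $6$ vertices and $O(1)$ edges, and each clause contributes at most $2$ vertices and $O(1)$ edges. So $|V(G)|+|E(G)|=O(n+m)$, where $n$ and $m$ are the numbers of variables and clauses of the formula. A $2^{o(|V(G)|+|E(G)|)}$ algorithm would therefore decide the formulas of \Cref{prop:3sat} in time $2^{o(n+m)}$, contradicting \ETH.

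\textbf{Main obstacle.} The delicate point is faithfulness to~\cite{ardevol2025unitmultiple}. We must confirm that their reduction really starts from exactly the variant in \Cref{prop:3sat}, with each variable occurring once positively in a $3$-clause, once positively in a $2$-clause, and once negatively in a $2$-clause, and clauses over distinct variables. Both the existence of $w'$ and the fact that $w'$ lies outside gadget $i$ depend on this. If their source problem differs slightly, we would first apply the standard occurrence-splitting reduction to reach this normal form and then check that it preserves linear size.
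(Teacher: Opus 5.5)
Your proposal is correct and follows essentially the same route as the paper. The paper also recalls the Ardévol Mart\'inez et al.\ construction from the restricted variant in \Cref{prop:3sat}, and notes that each \white vertex of gadget $i$ has the nonadjacent \black neighbors $A_i,B_i$ plus a \white neighbor in another gadget (its 3-clause triangle partner or its 2-clause partner), which cannot be adjacent to $A_i,B_i$. It then gets the \ETH bound from the linear size of the construction. Your explicit NP-membership check via \Cref{prop:ui} and your fallback about the source problem are harmless extras; the paper takes membership from the cited theorem and states that the cited reduction starts from exactly this SAT variant.
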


To prove the hardness for \uivs, we reduce from the restricted version of \cutir\ in \Cref{prop:cutir}.
Let \((G,\gamma)\) be an instance, and let \(B\) and \(W\) be the sets of
$\black$ and $\white$ vertices, respectively.  Set \(k:=|W|\).  We construct
\(H\) by replacing every $\black$ vertex \(b\in B\) by a clique \(C_b\) of
size \(k+1\), while keeping every $\white$ vertex as a single vertex.  Every
edge of \(G\) is expanded in the natural way: $\black$-$\black$ edges become
complete bipartite graphs between the corresponding cliques, $\black$-$\white$
edges make the $\white$ vertex complete to the corresponding clique, and
$\white$-$\white$ edges are preserved.

The intuition is that the large clique \(C_b\) prevents any $\black$ vertex
from being split: with only \(k\) splits, at least
one vertex in each \(C_b\) remains unsplit.  On the other hand, every
$\white$ vertex is forced to be split.  Indeed, by the restriction in
\Cref{prop:cutir}, each $\white$ vertex is the center of an induced claw
with two $\black$ leaves and one $\white$ leaf; if such a $\white$ vertex were not
split, then the unsplit representatives chosen from the two corresponding
$\black$ cliques, together with a representative of the $\white$ leaf, would
induce a claw.  This is impossible in a unit interval graph.
Thus a solution for the constructed instance splits precisely the $\white$
vertices, and this corresponds to a colored unit \(2\)-interval
split of \((G,\gamma)\), as characterized in \Cref{prop:ui}.  Hence
\((G,\gamma)\) is a \yes-instance of \cutir\ if and only if \((H,k)\) is a
\yes-instance of the vertex-splitting instance. 

\ifthenelse{\boolean{shortver}}{}{Now we give the formal proof.}

\theoremui*

\ifthenelse{\boolean{shortver}}{}{
\begin{proof}
We prove the hardness of \uivs. The hardness of \uisvs follows 
from the fact that all the splits are shallow.
We reduce from the restricted version of \cutir\ stated in
\Cref{prop:cutir}. Let $(G,\gamma)$ be an instance of this restricted
problem. Thus every $\white$ vertex of $G$ is the center of an induced claw
with two $\black$ leaves and one $\white$ leaf.

Let $B=\{v\in V(G):\gamma(v)=\mathsf{black}\}$ and
$W=\{v\in V(G):\gamma(v)=\mathsf{white}\}$. Set $k:=|W|$ and $M:=k+1$.
We construct a graph $H$ as follows. Every $\black$ vertex $b\in B$ is
replaced by a clique $C_b=\{b^1,\ldots,b^M\}$ of size $M$, and every
$\white$ vertex $w\in W$ is kept as a single vertex of $H$. For every edge
of $G$, we add the corresponding complete set of edges in $H$:
\begin{itemize}
    \item if $b,c\in B$ and $bc\in E(G)$, then we make $C_b$ complete
    to $C_c$;
    \item if $b\in B$, $u\in W$, and $bu\in E(G)$, then we make $u$
    adjacent to every vertex of $C_b$;
    \item if $u,v\in W$ and $uv\in E(G)$, then we add the edge $uv$ to
    $H$.
\end{itemize}
For non-edges of $G$, we add no edges between the corresponding vertices
or cliques. The constructed instance of \uivs is $(H,k)$. The construction is clearly polynomial.

We prove that $(G,\gamma)$ is a \yes-instance of \cutir\ if and only if
$(H,k)$ is a \yes-instance of \uivs.

 $(\Rightarrow).$ First suppose that $(G,\gamma)$ is a \yes-instance of \cutir. By
\Cref{prop:ui}, there exists a unit interval graph $S$ and a function
$f:V(S)\rightarrow V(G)$ such that $(S,f)$ is a split of $(G,\gamma)$.
In particular, every $\black$ vertex of $G$ has exactly one representative
in $S$, and every $\white$ vertex of $G$ has exactly two representatives in
$S$.

We construct a graph $S_H$ from $S$ as follows. For every $\black$ vertex
$b\in B$, let $s_b$ be the unique vertex of $S$ such that $f(s_b)=b$.
Replace $s_b$ by a clique of $M$ true twins $s_b^1,\ldots,s_b^M$. The
vertex $s_b^j$ will correspond to the vertex $b^j$ of the clique $C_b$
in $H$. For every $\white$ vertex $w\in W$, keep the two representatives of
$w$ exactly as they are in $S$.

The graph $S_H$ is a unit interval graph. Indeed, unit interval graphs
are closed under adding true twins: in a unit interval representation, a
true twin can be represented by the same unit interval as the original
vertex.

We claim that $S_H$ is obtainable from $H$ without splitting any vertex
belonging to a clique $C_b$, and by splitting each vertex of $W$ exactly
once. Define a function $g:V(S_H)\rightarrow V(H)$ as follows. For every
$\black$ vertex $b\in B$ and every $j\in\{1,\ldots,M\}$, the vertex
$s_b^j$ is mapped to $b^j$, and the two representatives of a $\white$ vertex
$w\in W$ are mapped to $w$.

Each fiber   of $g$ is independent: fibers over vertices of the cliques
$C_b$ have size one, while the fibers over $\white$ vertices have size two
and come from the split $(S,f)$. Moreover, every edge of $S_H$ maps to an
edge of $H$. Indeed, the only new edges in $S_H$ are between true twins
corresponding to vertices inside the same clique $C_b$, or between true
twins corresponding to adjacent vertices of $G$; both types correspond to
edges of $H$.

Conversely, every edge of $H$ is represented in $S_H$. If
$b^j,c^\ell$ are adjacent in $H$, then either $b=c$, in which case
$s_b^j$ and $s_b^\ell$ are adjacent because they belong to the same
clique of true twins, or $bc\in E(G)$, in which case the unique
representatives of $b$ and $c$ are adjacent in $S$, and hence all their
true twins are adjacent in $S_H$. If $b^j\in C_b$ and $u\in W$ with
$b^j u\in E(H)$, then $bu\in E(G)$, and so in $S$ the vertex $s_b$ is
adjacent to some representative of $u$; in $S_H$, the true twin $s_b^j$
is adjacent to the same representative of $u$. Finally, if $u,v\in W$
and $uv\in E(H)$, then $uv\in E(G)$ and this edge is represented in $S$,
and hence also in $S_H$.

Therefore $S_H$ is a unit interval graph obtainable from $H$ by exactly
$|W|=k$ vertex splits. Hence $(H,k)$ is a \yes-instance of
\uivs.

 $(\Leftarrow).$ Conversely, suppose that $(H,k)$ is a \yes-instance of
\uivs. Let $S$ be a unit interval graph
obtained from $H$ using at most $k$ vertex splits. We use the fact that
unit interval graphs are claw-free.

First we claim that every vertex of $W$ must be split. For every
$b\in B$, the clique $C_b$ has size $M=k+1$. Since the total number of
splits is at most $k$, at least one vertex of $C_b$ is not split. Choose
and fix one such unsplit vertex $\widehat b\in C_b$ for every $b\in B$.

Suppose, for a contradiction, that some vertex $w\in W$ is not split. By
\Cref{prop:cutir}, the vertex $w$ has three mutually non-adjacent
neighbors in $G$, say $a_w,b_w,z_w$, where $a_w$ and $b_w$ are $\black$ and
$z_w$ is $\white$. Let $\widehat a_w\in C_{a_w}$ and
$\widehat b_w\in C_{b_w}$ be the fixed unsplit vertices chosen above.
Since $a_wb_w\notin E(G)$, the vertices $\widehat a_w$ and
$\widehat b_w$ are non-adjacent in $H$, and hence they are non-adjacent
in $S$. Since $w$ is unsplit and is adjacent in $H$ to every vertex of
$C_{a_w}\cup C_{b_w}$, the vertices $\widehat a_w$ and $\widehat b_w$
are both adjacent to $w$ in $S$.

Now consider the $\white$ neighbor $z_w$ of $w$. Since $wz_w\in E(H)$ and
$w$ is unsplit, there exists a representative $z'$ of $z_w$ in $S$ that
is adjacent to $w$. Since $z_wa_w,z_wb_w\notin E(G)$, no representative
of $z_w$ can be adjacent to $\widehat a_w$ or to $\widehat b_w$ in $S$.
Thus $z'$ is non-adjacent to both $\widehat a_w$ and $\widehat b_w$.

Therefore the four vertices $w,\widehat a_w,\widehat b_w,z'$ induce a
claw in $S$ centered at $w$. This contradicts the fact that $S$ is a unit
interval graph, since unit interval graphs are claw-free. Thus every
vertex of $W$ must be split.

There are exactly $k=|W|$ vertices in $W$, and the total split budget is
$k$. Hence every vertex of $W$ is split exactly once, and no other vertex
of $H$ is split. In particular, no vertex in any clique $C_b$ is split.

For every $\black$ vertex $b\in B$, choose an arbitrary vertex
$\widehat b\in C_b$.  Since no vertex in any clique $C_b$ is split,
$\widehat b$ is unsplit.  Let $S'$ be the induced subgraph of $S$ on the
chosen vertex $\widehat b$ for every vertex $b\in B$ and on the two
representatives of every vertex $w\in W$.  Since unit interval graphs are
hereditary, $S'$ is also a unit interval graph.

Define a function $f':V(S')\rightarrow V(G)$ as follows. The chosen
vertex $\widehat b\in C_b$ is mapped to $b$ for every $b\in B$, and the
two representatives of every vertex $w\in W$ are mapped to $w$.

We claim that $(S',f')$ is a split of $(G,\gamma)$. First,
$|(f')^{-1}(b)|=1$ for every $b\in B$, and $|(f')^{-1}(w)|=2$ for every
$w\in W$. Moreover, every fiber of $f'$ is independent, because $S$ was
obtained from $H$ by vertex splitting.

Next, if $xy\in E(S')$, then $f'(x)f'(y)\in E(G)$. Indeed, every edge of
$S$ comes from an edge of $H$, and by the construction of $H$, edges of
$H$ occur only between vertices corresponding to adjacent vertices of
$G$, except for edges inside a clique $C_b$. The latter type cannot
appear in $S'$, because $S'$ contains only one vertex from each clique
$C_b$.

Finally, every edge of $G$ is represented in $S'$. Let $uv\in E(G)$. If
$u,v\in B$, then $C_u$ is complete to $C_v$ in $H$, and since
$\widehat u$ and $\widehat v$ are unsplit, the edge
$\widehat u\widehat v$ belongs to $S'$. If $u\in B$ and $v\in W$, then
in $H$ the vertex $v$ is adjacent to every vertex of $C_u$, in particular
to $\widehat u$. Since $\widehat u$ is unsplit, some representative of
$v$ is adjacent to $\widehat u$ in $S'$. If $u,v\in W$, then $uv$ is an
edge of $H$, and so some representative of $u$ is adjacent in $S$ to some
representative of $v$; both representatives belong to $S'$.

Thus $(S',f')$ is a split of $(G,\gamma)$ into a unit interval graph. By
\Cref{prop:ui}, $(G,\gamma)$ is a \yes-instance of \cutir.

We have proved that $(G,\gamma)$ is a \yes-instance of \cutir\ if and only
if $(H,k)$ is a \yes-instance of \uivs.
Now the statements follow from \Cref{prop:cutir} and the facts that
$|V(H)| = \mathcal{O}((|V(G)|+|E(G)|)^2)$, and that $k = \mathcal{O}(|V(G)|+|E(G)|)$.
\end{proof}
}
\paragraph*{Declaration on the use of AI.}
% We used ChatGPT 5.5 in preparing this manuscript. In particular,
% the reductions for \chovs and \uivs were suggested by the model. For
% \chovs, the final reduction was obtained after more than 30 prompts from the
% authors, during which several unsuccessful reductions were explored. For \uivs, 
% the model provided the reduction in as few as five prompts and without
% any hints from the authors; the authors' contribution was limited to verifying
% the correctness of the reduction and polishing the proof.

An AI tool (ChatGPT 5.5) was used to proofread the paper, polish  proofs, and  produce TikZ code for the figures. All such use was closely supervised by authors.  For  \uivs the reduction was suggested by the model; the authors' contribution was limited to verifying
the correctness of the reduction and polishing the proof.
%All other  technical contributions were developed independently of generative AI.
\bibliography{01biblio.bib}

\begin{thebibliography}{10}

\bibitem{abu2019cluster}
Faisal~N. Abu-Khzam, Emmanuel Arrighi, Matthias Bentert, Pål~Grønås Drange, Judith Egan, Serge Gaspers, Alexis Shaw, Peter Shaw, Blair~D. Sullivan, and Petra Wolf.
\newblock {Cluster Editing with Vertex Splitting}.
\newblock {\em Discrete Applied Mathematics}, 371:185--195, 2025.
\newblock \href {https://doi.org/10.1016/j.dam.2025.04.013} {\path{doi:10.1016/j.dam.2025.04.013}}.

\bibitem{DBLP:conf/ai4i/Abu-KhzamBFS21}
Faisal~N. Abu{-}Khzam, Joseph~R. Barr, Amin Fakhereldine, and Peter Shaw.
\newblock {A Greedy Heuristic for Cluster Editing with Vertex Splitting}.
\newblock In {\em 4th International Conference on Artificial Intelligence for Industries, {AI4I} 2021, Laguna Hills, CA, USA, September 20-22, 2021}, pages 38--41. {IEEE}, 2021.
\newblock \href {https://doi.org/10.1109/AI4I51902.2021.00017} {\path{doi:10.1109/AI4I51902.2021.00017}}.

\bibitem{abu2026complexity}
Faisal~N. Abu{-}Khzam, Dipayan Chakraborty, Lucas Isenmann, and Nacim Oijid.
\newblock {On the Complexity of Vertex-Splitting into an Interval Graph}.
\newblock In Florent Foucaud and Aline Parreau, editors, {\em Combinatorial Algorithms - 37th International Workshop, {IWOCA} 2026, Clermont-Ferrand, France, June 8-11, 2026, Proceedings}, volume 16587 of {\em Lecture Notes in Computer Science}, pages 1--15. Springer, 2026.
\newblock \href {https://doi.org/10.1007/978-3-032-27732-9\_1} {\path{doi:10.1007/978-3-032-27732-9\_1}}.

\bibitem{DBLP:conf/cocoon/AbuKhzamDIT25}
Faisal~N. Abu{-}Khzam, Tom Davot, Lucas Isenmann, and Sergio Thoumi.
\newblock {On the Complexity of 2-Club Cluster Editing with Vertex Splitting}.
\newblock In Fedor~V. Fomin and Mingyu Xiao, editors, {\em Computing and Combinatorics - 31st International Computing and Combinatorics Conference, {COCOON} 2025, Chengdu, China, August 15-17, 2025, Proceedings, Part {II}}, volume 15984 of {\em Lecture Notes in Computer Science}, pages 3--14. Springer, 2025.
\newblock \href {https://doi.org/10.1007/978-981-95-0218-9\_1} {\path{doi:10.1007/978-981-95-0218-9\_1}}.

\bibitem{DBLP:journals/corr/abs-1901-00156}
Faisal~N. Abu{-}Khzam, Judith Egan, Serge Gaspers, Alexis Shaw, and Peter Shaw.
\newblock {Cluster Editing with Vertex Splitting}.
\newblock In Jon Lee, Giovanni Rinaldi, and Ali~Ridha Mahjoub, editors, {\em Combinatorial Optimization - 5th International Symposium, {ISCO} 2018, Marrakesh, Morocco, April 11-13, 2018, Revised Selected Papers}, volume 10856 of {\em Lecture Notes in Computer Science}, pages 1--13. Springer, 2018.
\newblock \href {https://doi.org/10.1007/978-3-319-96151-4\_1} {\path{doi:10.1007/978-3-319-96151-4\_1}}.

\bibitem{DBLP:conf/iwoca/AbuKhzamIM25}
Faisal~N. Abu{-}Khzam, Lucas Isenmann, and Zeina Merchad.
\newblock {Bicluster Editing with Overlaps: {A} Vertex Splitting Approach}.
\newblock In Henning Fernau and Binhai Zhu, editors, {\em Combinatorial Algorithms - 36th International Workshop, {IWOCA} 2025, Bozeman, MT, USA, July 21-24, 2025, Proceedings}, volume 15885 of {\em Lecture Notes in Computer Science}, pages 146--159. Springer, 2025.
\newblock \href {https://doi.org/10.1007/978-3-031-98740-3\_11} {\path{doi:10.1007/978-3-031-98740-3\_11}}.

\bibitem{abu2025complexity}
Faisal~N. Abu{-}Khzam and Sergio Thoumi.
\newblock {On the Complexity of Claw-Free Vertex Splitting}.
\newblock {\em CoRR}, abs/2506.06044, 2025.
\newblock \href {https://doi.org/10.48550/ARXIV.2506.06044} {\path{doi:10.48550/ARXIV.2506.06044}}.

\bibitem{DBLP:conf/gd/AhmedKK22}
Abu~Reyan Ahmed, Stephen~G. Kobourov, and Myroslav Kryven.
\newblock {An {FPT} Algorithm for Bipartite Vertex Splitting}.
\newblock In Patrizio Angelini and Reinhard von Hanxleden, editors, {\em Graph Drawing and Network Visualization - 30th International Symposium, {GD} 2022, Tokyo, Japan, September 13-16, 2022, Revised Selected Papers}, volume 13764 of {\em Lecture Notes in Computer Science}, pages 261--268. Springer, 2022.
\newblock \href {https://doi.org/10.1007/978-3-031-22203-0\_19} {\path{doi:10.1007/978-3-031-22203-0\_19}}.

\bibitem{baumann2023parameterized}
Jakob Baumann, Matthias Pfretzschner, and Ignaz Rutter.
\newblock {Parameterized Complexity of Vertex Splitting to Pathwidth at Most 1}.
\newblock In {\em International Workshop on Graph-Theoretic Concepts in Computer Science}, pages 30--43. Springer, 2023.
\newblock \href {https://doi.org/10.1007/978-3-031-43380-1\_3} {\path{doi:10.1007/978-3-031-43380-1\_3}}.

\bibitem{DBLP:journals/tcs/BaumannPR24}
Jakob Baumann, Matthias Pfretzschner, and Ignaz Rutter.
\newblock Parameterized complexity of vertex splitting to pathwidth at most 1.
\newblock {\em Theor. Comput. Sci.}, 1021:114928, 2024.
\newblock \href {https://doi.org/10.1016/J.TCS.2024.114928} {\path{doi:10.1016/J.TCS.2024.114928}}.

\bibitem{chlebik2006approximation}
Miroslav Chleb{\'{\i}}k and Janka Chleb{\'{\i}}kov{\'{a}}.
\newblock Approximation hardness of edge dominating set problems.
\newblock {\em J. Comb. Optim.}, 11(3):279--290, 2006.
\newblock \href {https://doi.org/10.1007/S10878-006-7908-0} {\path{doi:10.1007/S10878-006-7908-0}}.

\bibitem{chlebik2007complexity}
Miroslav Chleb{\'{\i}}k and Janka Chleb{\'{\i}}kov{\'{a}}.
\newblock {The Complexity of Combinatorial Optimization Problems on d-Dimensional Boxes}.
\newblock {\em {SIAM} J. Discret. Math.}, 21(1):158--169, 2007.
\newblock \href {https://doi.org/10.1137/050629276} {\path{doi:10.1137/050629276}}.

\bibitem{CyganFKLMPPS15}
Marek Cygan, Fedor~V. Fomin, Lukasz Kowalik, Daniel Lokshtanov, D{\'{a}}niel Marx, Marcin Pilipczuk, Michal Pilipczuk, and Saket Saurabh.
\newblock {\em {Parameterized Algorithms}}.
\newblock Springer, 2015.
\newblock \href {https://doi.org/10.1007/978-3-319-21275-3} {\path{doi:10.1007/978-3-319-21275-3}}.

\bibitem{Diestel17}
Reinhard Diestel.
\newblock {\em {Graph Theory, 5th Edition}}, volume 173 of {\em Graduate texts in mathematics}.
\newblock Springer, 2017.
\newblock \href {https://doi.org/10.1007/978-3-662-53622-3} {\path{doi:10.1007/978-3-662-53622-3}}.

\bibitem{eades1995vertex}
Peter Eades and CFX de~Mendon{\c{c}}a~N.
\newblock Vertex splitting and tension-free layout.
\newblock In {\em International Symposium on Graph Drawing}, pages 202--211. Springer, 1995.
\newblock \href {https://doi.org/10.1007/BFB0021804} {\path{doi:10.1007/BFB0021804}}.

\bibitem{faria2001splitting}
Lu{\'e}rbio Faria, Celina~MH de~Figueiredo, and CFX de~Mendon{\c{c}}a~N.
\newblock Splitting number is {NP}-complete.
\newblock {\em Discrete Applied Mathematics}, 108(1-2):65--83, 2001.
\newblock \href {https://doi.org/10.1016/S0166-218X(00)00220-1} {\path{doi:10.1016/S0166-218X(00)00220-1}}.

\bibitem{DBLP:journals/algorithmica/FellowsKMP95}
Michael~R. Fellows, Jan Kratochv{\'{\i}}l, Matthias Middendorf, and Frank Pfeiffer.
\newblock {The Complexity of Induced Minors and Related Problems}.
\newblock {\em Algorithmica}, 13(3):266--282, 1995.
\newblock \href {https://doi.org/10.1007/BF01190507} {\path{doi:10.1007/BF01190507}}.

\bibitem{firbas2023establishing}
Alexander Firbas.
\newblock {\em Establishing hereditary graph properties via vertex splitting}.
\newblock PhD thesis, Technische Universit{\"a}t Wien, 2023.
\newblock \href {https://doi.org/10.34726/hss.2023.103864} {\path{doi:10.34726/hss.2023.103864}}.

\bibitem{DBLP:conf/isaac/FirbasS24}
Alexander Firbas and Manuel Sorge.
\newblock {On the Complexity of Establishing Hereditary Graph Properties via Vertex Splitting}.
\newblock In Juli{\'{a}}n Mestre and Anthony Wirth, editors, {\em 35th International Symposium on Algorithms and Computation, {ISAAC} 2024, Sydney, Australia, December 8-11, 2024}, volume 322 of {\em LIPIcs}, pages 30:1--30:15. Schloss Dagstuhl - Leibniz-Zentrum f{\"{u}}r Informatik, 2024.
\newblock \href {https://doi.org/10.4230/LIPICS.ISAAC.2024.30} {\path{doi:10.4230/LIPICS.ISAAC.2024.30}}.

\bibitem{gaikwad2025inclusive}
Ajinkya Gaikwad, Hitendra Kumar, S~Padmapriya, Praneet~Kumar Patra, Harsh Sanklecha, and Soumen Maity.
\newblock {Inclusive and Exclusive Vertex Splitting into Specific Graph Classes: {NP} Hardness and Algorithms}.
\newblock {\em arXiv preprint arXiv:2510.26938}, 2025.

\bibitem{hartsfield1985splitting}
Nora Hartsfield, Brad Jackson, and Gerhard Ringel.
\newblock The splitting number of the complete graph.
\newblock {\em Graphs and Combinatorics}, 1(1):311--329, 1985.
\newblock \href {https://doi.org/10.1007/BF02582960} {\path{doi:10.1007/BF02582960}}.

\bibitem{y2008improving}
Nathalie Henry, Anastasia Bezerianos, and Jean{-}Daniel Fekete.
\newblock {Improving the Readability of Clustered Social Networks using Node Duplication}.
\newblock {\em {IEEE} Trans. Vis. Comput. Graph.}, 14(6):1317--1324, 2008.
\newblock \href {https://doi.org/10.1109/TVCG.2008.141} {\path{doi:10.1109/TVCG.2008.141}}.

\bibitem{hilton1997vertex}
Anthony~JW Hilton and C~Zhao.
\newblock Vertex-splitting and chromatic index critical graphs.
\newblock {\em Discrete applied mathematics}, 76(1-3):205--211, 1997.
\newblock \href {https://doi.org/10.1016/S0166-218X(96)00125-4} {\path{doi:10.1016/S0166-218X(96)00125-4}}.

\bibitem{DBLP:journals/jcss/ImpagliazzoP01}
Russell Impagliazzo and Ramamohan Paturi.
\newblock {On the Complexity of k-SAT}.
\newblock {\em J. Comput. Syst. Sci.}, 62(2):367--375, 2001.
\newblock \href {https://doi.org/10.1006/JCSS.2000.1727} {\path{doi:10.1006/JCSS.2000.1727}}.

\bibitem{jackson1984splitting}
Brad Jackson and Gerhard Ringel.
\newblock The splitting number of complete bipartite graphs.
\newblock {\em Archiv der Mathematik}, 42(2):178--184, 1984.

\bibitem{jackson1985splittings}
Brad Jackson and Gerhard Ringel.
\newblock {Splittings of Graphs on Surfaces}.
\newblock In Frank Harary, editor, {\em Graphs and Applications}, Proceedings of the 1st Colorado Symposium on Graph Theory, pages 203--219. Wiley, Boulder, Colorado, 1985.

\bibitem{komusiewicz2018tight}
Christian Komusiewicz.
\newblock {Tight Running Time Lower Bounds for Vertex Deletion Problems}.
\newblock {\em {ACM} Trans. Comput. Theory}, 10(2):6:1--6:18, 2018.
\newblock \href {https://doi.org/10.1145/3186589} {\path{doi:10.1145/3186589}}.

\bibitem{kumar2026parameterized}
Hitendra Kumar.
\newblock {\em {The Parameterized Complexity of Graph Editing \& MaxMin Problems}}.
\newblock PhD thesis, IISER Pune, 2026.

\bibitem{ardevol2025unitmultiple}
Virginia~Ard{\'e}vol Mart{\'i}nez, Romeo Rizzi, Florian Sikora, and St{\'e}phane Vialette.
\newblock Recognizing unit multiple interval graphs is hard.
\newblock {\em Discrete Applied Mathematics}, 360:258--274, 2025.
\newblock \href {https://doi.org/10.1016/j.dam.2024.09.011} {\path{doi:10.1016/j.dam.2024.09.011}}.

\bibitem{nollenburg2025planarizing}
Martin N{\"o}llenburg, Manuel Sorge, Soeren Terziadis, Ana{\"\i}s Villedieu, Hsiang-Yun Wu, and Jules Wulms.
\newblock Planarizing graphs and their drawings by vertex splitting.
\newblock {\em Journal of Computational Geometry}, 16(1):333--372, 2025.
\newblock \href {https://doi.org/10.20382/JOCG.V16I1A10} {\path{doi:10.20382/JOCG.V16I1A10}}.

\bibitem{Rose70}
Donald~J. Rose.
\newblock Triangulated graphs and the elimination process.
\newblock {\em Journal of Mathematical Analysis and Applications}, 32(3):597--609, 1970.
\newblock \href {https://doi.org/10.1016/0022-247X(70)90282-9} {\path{doi:10.1016/0022-247X(70)90282-9}}.

\bibitem{RoseTL76}
Donald~J. Rose, Robert~Endre Tarjan, and George~S. Lueker.
\newblock {Algorithmic Aspects of Vertex Elimination on Graphs}.
\newblock {\em {SIAM} J. Comput.}, 5(2):266--283, 1976.
\newblock \href {https://doi.org/10.1137/0205021} {\path{doi:10.1137/0205021}}.

\bibitem{sandeep2015parameterized}
RB~Sandeep and Naveen Sivadasan.
\newblock Parameterized lower bound and improved kernel for diamond-free edge deletion.
\newblock In {\em 10th International Symposium on Parameterized and Exact Computation (IPEC 2015)}, pages 365--376. Schloss Dagstuhl--Leibniz-Zentrum f{\"u}r Informatik, 2015.
\newblock \href {https://doi.org/10.4230/LIPIcs.IPEC.2015.365} {\path{doi:10.4230/LIPIcs.IPEC.2015.365}}.

\bibitem{yannakakis1981computing}
Mihalis Yannakakis.
\newblock {Computing the minimum fill-in is NP-complete}.
\newblock {\em SIAM Journal on Algebraic Discrete Methods}, 2(1):77--79, 1981.

\bibitem{DBLP:journals/siamcomp/Yannakakis81a}
Mihalis Yannakakis.
\newblock {Node-Deletion Problems on Bipartite Graphs}.
\newblock {\em {SIAM} J. Comput.}, 10(2):310--327, 1981.
\newblock \href {https://doi.org/10.1137/0210022} {\path{doi:10.1137/0210022}}.

\end{thebibliography}

\end{document}